\documentclass[runningheads]{llncs}

\newcommand{\macrospath}{./macros}
\usepackage{ifthen}


\newboolean{talk}
\setboolean{talk}{false}
\newboolean{paper}
\setboolean{paper}{false}


\newboolean{IEEEstyle}
\setboolean{IEEEstyle}{false}
\newboolean{lipicsstyle}
\setboolean{lipicsstyle}{false}
\newboolean{eptcsstyle}
\setboolean{eptcsstyle}{false}
\newboolean{entcsstyle}
\setboolean{entcsstyle}{false}
\newboolean{sigplanstyle}
\setboolean{sigplanstyle}{false}
\newboolean{easychairstyle}
\setboolean{easychairstyle}{false}

\newboolean{lncsstyle}
\setboolean{lncsstyle}{false}


\newboolean{needstheorems}
\setboolean{needstheorems}{false}
\newboolean{withimages}
\setboolean{withimages}{false}
\newboolean{withproofs}
\setboolean{withproofs}{true}
\newboolean{techreport}
\setboolean{techreport}{false}


\newboolean{french}
\setboolean{french}{false}

\setboolean{lncsstyle}{true}
\setboolean{withproofs}{false}
\setboolean{techreport}{true}
\usepackage{stmaryrd}
\usepackage{appendix}
\usepackage{xcolor}
\usepackage{bussproofs}
\usepackage{bbm}
\usepackage[font=small,skip=0pt]{caption}
\usepackage{appendix}
\capturecounter{theorem}
\capturecounter{lemma}
\capturecounter{proposition}
\capturecounter{corollary}
\usepackage{hyperref}

\ifthenelse{\boolean{easychairstyle}}{
\setboolean{needstheorems}{true}}{}

\newcommand{\result}{r}

\makeatletter
\newcommand{\exder}{%
  \def\exderW[##1]{\triangleright_{##1}\ }%
  \def\exderWO{\triangleright\ }%
  \@ifnextchar[\exderW\exderWO%
  }
\makeatother

\newcommand{\app}{{\tt app}\xspace}

\makeatletter
\newcommand{\appresult}{%
  \def\appresultW<##1>{\app_\result^{##1}}%
  \def\appresultWO{\app_\result}%
  \@ifnextchar<\appresultW\appresultWO%
  }
\makeatother


\newcommand{\Id}{{\tt I}}





\newcommand{\tostrat}{\Rew{\stratsym}}

\newcommand{\Terms}{\mathcal T}

\ifthenelse{\boolean{talk}}{}{\usepackage[utf8]{inputenc}}
\ifthenelse{\boolean{french}}{\usepackage[french]{babel}}{
  \ifthenelse{\boolean{lipicsstyle}}{}{\usepackage[english]{babel}}}
\usepackage{amssymb}
\usepackage{amsmath}
\usepackage{graphicx}
\usepackage{cmll}
\usepackage{stmaryrd}
\usepackage{multirow}
\usepackage{fancybox}
\usepackage{xspace}

\ifthenelse{\boolean{IEEEstyle}}{
\setboolean{needstheorems}{true}}{}

\ifthenelse{\boolean{eptcsstyle}}{
\setboolean{needstheorems}{true}}{}

\ifthenelse{\boolean{sigplanstyle}}{
\setboolean{needstheorems}{true}}{}


\ifthenelse{\boolean{needstheorems}}{
\input{\macrospath/ben-macros-thm-environments}}{}

\newcommand{\myproof}[1]{
\ifthenelse{\boolean{withproofs}}{#1}{}
}


\newcommand{\la}[1]{\lambda #1.}

\newcommand{\tm}{t}
\newcommand{\tmtwo}{u}
\newcommand{\tmthree}{r}
\newcommand{\tmfour}{q}

  %
  %

\newcommand{\var}{x}
\newcommand{\vartwo}{y}
\newcommand{\varthree}{z}
\newcommand{\varfour}{w}

\newcommand{\rootRew}[1]{\mapsto_{#1}}

\makeatletter

\newcommand{\Rewbase}{%
  \def\RewbaseW[##1]##2##3{\ {\xrightarrow{##1}}{}_{##2}^{##3}\xspace }%
  \def\RewbaseWO##1##2{\ {\xrightarrow{}}{}_{##1}^{##2}\xspace }%
  \@ifnextchar[\RewbaseW\RewbaseWO%
  }

\newcommand{\Rew}[1]{\rightarrow_{#1}}

\newcommand{\Rewn}[2][*]{%
  \def\RewnW[##1]{\Rewbase[##1]{#2}{#1}}%
  \def\RewnWO{\Rewbase{#2}{#1}}%
  \@ifnextchar[\RewnW\RewnWO%
  }

\makeatother

\renewcommand{\to}{\Rew{}}
\newcommand{\tob}{\mathrel{\Rew{\beta}}}
\newcommand{\rtob}{\rootRew{\beta}}

\newcommand{\lRew}[1]{\; \mbox{}_{#1}{\leftarrow}\ }






\newcommand{\whsym}{wh}

\newcommand{\towh}{\Rew{\whsym}}











\newcommand{\ctxholep}[1]{\langle #1\rangle}
\newcommand{\ctxhole}{\ctxholep{\cdot}}

\newcommand{\ctx}{C}
\newcommand{\ctxtwo}{D}

\newcommand{\ctxp}[1]{\ctx\ctxholep{#1}}

\newcommand{\ctxtwop}[1]{\ctxtwo\ctxholep{#1}}


\newcommand{\apctx}{A}


\newcommand{\nbvctxtwo}[1]{\nbvctxtwo{#1}}




\newcommand{\defeq}{:=}
\newcommand{\grameq}{::=}

\newcommand{\isub}[2]{\{#1{\leftarrow}#2\}}
\newcommand{\esub}[2]{[#1{\leftarrow}#2]}







\newcommand{\llbrace}{\{ \kern -0.27em \vert}
\newcommand{\rrbrace}{\vert \kern -0.27em \}}

\newcommand{\streq}{\equiv}
\newcommand{\fn}[1]{{\tt fn}(#1)}
\renewcommand{\fn}[1]{{\tt names}(#1)}


\renewcommand{\l}{\lambda}

\newcommand{\ie}{{\em i.e.}\xspace}

\newcommand{\ih}{{\emph{i.h.}}\xspace}
\newcommand{\fv}[1]{{\tt fv}(#1)}


\newcommand{\correction}[1]{{\color{blue} {#1}}}
\renewcommand{\correction}[1]{#1}

\newcommand{\red}[1]{{\color{red} {#1}}}


\newcommand{\ignore}[1]{}

\newcommand{\colspace}{@{\hspace{.4cm}}}
\newcommand{\myinput}[1]{\ifthenelse{\boolean{withproofs}}{\input{#1}}{}}


\newcommand{\reflemma}[1]{Lemma~\ref{l:#1}}

\newcommand{\reflemmaeq}[1]{{L.\ref{l:#1}}}
\newcommand{\reflemmaeqp}[2]{{L.\ref{l:#1}.\ref{p:#1-#2}}}

\newcommand{\refthm}[1]{Theorem~\ref{thm:#1}}

\newcommand{\refprop}[1]{Proposition~\ref{prop:#1}}

\newcommand{\refsect}[1]{Sect.~\ref{sect:#1}}

\ifthenelse{\boolean{easychairstyle}}{}{
	\newcommand{\refeq}[1]{(\ref{eq:#1})} 
}
\newcommand{\reffig}[1]{Fig.~\ref{fig:#1}}

\newcommand{\refdef}[1]{Definition~\ref{def:#1}}

\newcommand{\refpoint}[1]{Point~\ref{p:#1}}

\newcommand{\refex}[1]{Example~\ref{ex:#1}}














\newcommand{\set}[1]{\{#1\}}

\newcommand{\nat}{\mathbb{N}}















\newcommand{\size}[1]{|#1|}
\newcommand{\sizep}[2]{|#1|_{#2}}







\newcommand{\env}{\genv}
\newcommand{\envtwo}{\env'}

\newcommand{\genv}{E}

\newcommand{\stack}{S}
\newcommand{\stacktwo}{\stack'}

\newcommand{\estack}{\epsilon}



\newcommand{\exec}{\rho}

\newcommand{\run}{\exec}

\newcommand{\decode}[1]{\underline{#1}}


\newcommand{\stempty}{\epsilon}
\newcommand{\cons}{:}





\newcommand{\deriv}{\ensuremath{e}}

\newcommand{\lo}{{LO}\xspace}

\renewcommand{\lo}{{lo}\xspace}

\newcommand{\tolo}{\Rew{\lo}}

\newcommand{\toll}{\Rew{\llsym}}


\newcommand{\rename}[1]{#1^{\mathtt{R}}}


\renewcommand{\dump}{D}




\newcommand{\mach}{\symfont{M}}

\newcommand{\tomachhole}[1]{\leadsto_{#1}}
\newcommand{\tomach}{\tomachhole{}}








\DeclareMathOperator{\dom}{\symfont{dom}}

\newcommand{\tobo}{\rightarrow_{\mathtt{BO}}}
\renewcommand{\tobo}{\toext}
\newcommand{\extsym}{\symfont{x}}
\newcommand{\toext}{\Rew{\extsym}}
\newcommand{\ltoext}{\lRew{\extsym}}
\newcommand{\aptm}{r}

\newcommand{\lam}[2]{\lambda#1.\,#2}
\newcommand{\soctx}{S}
\newcommand{\boctx}{B}
\renewcommand{\soctx}{R}

\renewcommand{\boctx}{E}

\newcommand{\boctxp}[1]{\boctx\ctxholep{#1}}

\newcommand{\name}{\alpha}
\newcommand{\nametwo}{\beta}
\newcommand{\namethree}{\gamma}

\newcommand{\pool}{P}
\newcommand{\pooltwo}{\pool'}

\newcommand{\place}{j}

\newcommand{\cplace}[3]{(#2,#3)_{#1}}

\newcommand{\eempty}{\epsilon}
\newcommand{\pempty}{\epsilon}
\renewcommand{\pempty}{\emptyset}

\newcommand{\soapr}{\mathbb{S}}
\newcommand{\boapr}{\mathbb{B}}
\renewcommand{\soapr}{\mathbb{R}}
\newcommand{\soaprtwo}{\mathbb{R'}}
\renewcommand{\boapr}{\mathbb{A}}
\newcommand{\boaprtwo}{\mathbb{A}'}
\newcommand{\boctxhole}[1]{\ctxhole_{#1}}
\newcommand{\boctxholep}[2]{\langle#2\rangle_{#1}}

\newcommand{\SEP}{\hspace{2pt}|\hspace{2pt}}
\newcommand{\state}{s}
\newcommand{\statetwo}{\state'}
\newcommand{\statethree}{\state''}
\newcommand{\statefour}{\state'''}

\newcommand{\examstate}[3]{\llbracket{#1}\SEP{#2}\SEP{#3}\rrbracket}
\newcommand{\mamstate}[3]{({#1}\SEP{#2}\SEP{#3})}

\newcommand{\ntm}{\mathbbm{C}}
\newcommand{\ntmtwo}{\mathbbm{C}'}

\newcommand{\sub}[2]{\{{#1}\defeq{#2}\}}
\newcommand{\tsub}[2]{\langle#1\defeq#2\rangle}
\renewcommand{\tsub}[2]{\langle #2\rangle_{#1}}
\newcommand{\decpool}[2]{\decode{#1}_{#2}}
\newcommand{\decenv}[2]{\decode{#1}_{#2}}

\newcommand{\decst}[2]{\decode{#1}_{#2}}

\newcommand{\mydots}{..}
\newcommand{\EXAM}{EXAM\xspace}
\newcommand{\LEXAM}{Leftmost EXAM\xspace}

\newcommand{\SEXAM}{Set EXAM\xspace}
\newcommand{\seasym}{\mathtt{sea}}
\newcommand{\toexam}{\tomachhole{\mathtt{\EXAM}}}
\newcommand{\tolexam}{\tomachhole{\mathtt{L\EXAM}}}

\newcommand{\toexamap}{\tomachhole{\seasym_@}}
\newcommand{\toexamlam}{\tomachhole{\seasym_\lambda}}
\newcommand{\toexambeta}{\tomachhole{\beta}}
\newcommand{\subsym}{\mathtt{sub}}
\newcommand{\toexamsub}{\tomachhole{\subsym}}
\newcommand{\seavarsym}{\seasym_{\mathcal{V}}}
\newcommand{\toexamvar}{\tomachhole{\seavarsym}}

\newcommand{\toexamvartwo}{\tomachhole{\seasym_{\mathcal{V}2}}}

\newcommand{\symfont}[1]{\mathtt{#1}}
\newcommand{\osym}{\symfont{o}}
\newcommand{\tomacho}{\tomachhole{\osym}}

\newcommand{\tomachine}{\tomachhole\mach}

\newcommand{\tomachb}{\tomachhole{\beta}}

\newcommand{\nfo}[1]{\symfont{nf}_\osym(#1)} 

\newcommand{\compilrel}[2]{#1\triangleleft#2}

\newcommand{\sizebeta}[1]{\sizep{#1}{\beta}}

\newcommand{\execp}{\exec'}
\newcommand{\execpp}{\exec''}

\newcommand{\derivp}{\deriv'} 
\newcommand{\derivpp}{\deriv''} 

\newcommand{\tmtwop}{\tmtwo'}

\newcommand{\omeas}[1]{\sizep{#1}\osym}

\newcommand{\neu}{n}

\newcommand{\nf}{f}

\newcommand{\addr}{\mathbbm{a}}
\newcommand{\addrtwo}{\addr'}

\newcommand{\staddr}[2]{#1|_{#2}}
\newcommand{\ems}{\epsilon}
\newcommand{\pop}{\overset{\symfont{sel}}{\leftharpoondown}}
\newcommand{\push}{\overset{\symfont{dro}}{\rightharpoondown}}
\newcommand{\add}{\overset{\symfont{add}}{\rightharpoondown}}
\renewcommand{\add}{\overset{\symfont{add}}{\rightharpoondown}}
\newcommand{\pro}{\add}
\newcommand{\lctx}{L}
\newcommand{\lctxtwo}{\lctx'}

\newcommand{\lctxp}[1]{\lctx\ctxholep{#1}}

\newcommand{\llsym}{\ell\ell}

\newcommand{\nctx}{N}
\newcommand{\nctxtwo}{\nctx'}

\newcommand{\States}{\symfont{States}}
\renewcommand{\tostrat}{\Rew{\symfont{str}}}

\newcommand{\job}{\place}
\newcommand{\jobn}[1]{\job_{#1}}

\newcommand{\supp}[1]{\symfont{supp}(#1)}
\newcommand{\drop}{\push}
\newcommand{\new}[1]{\symfont{new}(#1)}
\newcommand{\names}[1]{\symfont{names}(#1)}
\newcommand{\suppset}{X}
\newcommand{\suppsettwo}{\suppset'}

\newcommand{\skeval}{\textcolor{blue}{\blacktriangledown}}
\newcommand{\skback}{\textcolor{red}{\blacktriangle}}

\newcommand{\absstack}{A}

\newcommand{\ifreport}[2]{\ifthenelse{\boolean{techreport}}{#1}{#2}}
\usepackage{tikz}
\usetikzlibrary{calc}
\usetikzlibrary{matrix,arrows}
\usetikzlibrary{decorations.shapes}
\usetikzlibrary{decorations.text}
\usetikzlibrary{decorations.pathmorphing}
\usetikzlibrary{decorations.markings}
\usetikzlibrary{positioning}

\tikzset{
node distance=1.3cm, auto,
every node/.style={font=\scriptsize },
ocenter/.style={baseline={([yshift=-.5ex, xshift=-.5ex]current bounding box)}},  
labelBeginAbove/.style={postaction={decorate,decoration={markings,mark=at position 0 with {\node[inner sep= 0.6pt, above=1pt]{\tiny #1};}} } },
labelBeginBelow/.style={postaction={decorate,decoration={markings,mark=at position 0 with {\node[inner sep= 0.6pt, below=1pt]{\tiny #1};}}}},
labelEndAbove/.style={postaction={decorate,decoration={markings,mark=at position 1 with {\node[inner sep= 0.6pt, above=1pt]{\tiny #1};}}}},
labelEndBelow/.style={postaction={decorate,decoration={markings,mark=at position 1 with {\node[inner sep= 0.6pt, below=1pt]{\tiny #1};}}}},
labelEndRight/.style={postaction={decorate,decoration={markings,mark=at position 1 with {\node[inner sep= 0.6pt, right=1pt]{\tiny #1};}}}},
labelEndLeft/.style={postaction={decorate,decoration={markings,mark=at position 1 with {\node[inner sep= 0.6pt, left=1pt]{\tiny #1};}}}}
}


\newcommand{\verticaldiagramdistance}{25pt}

\begin{document}
\title{A Diamond Machine for Strong Evaluation}
\author{Beniamino Accattoli\inst{1}\orcidID{0000-0003-4944-9944} \and Pablo Barenbaum\inst{2,3}}
\authorrunning{Accattoli and Barenbaum}
\institute{Inria \& LIX, \'Ecole Polytechnique, UMR 7161, Palaiseau, France
\and
Universidad Nacional de Quilmes (CONICET), Bernal, Argentina
\and
CONICET-Universidad de Buenos Aires, Instituto de Ciencias de la Computación, Argentina
}

\maketitle

\begin{abstract}
Abstract machines for strong evaluation of the $\lambda$-calculus enter into arguments and have a set of transitions for backtracking out of an evaluated argument. We study a new abstract machine which avoids backtracking by splitting the run of the machine in smaller \emph{jobs}, one for argument, and that jumps directly to the next job once one is finished.

Usually, machines are also deterministic and implement deterministic strategies. Here we weaken this aspect and consider a light form of non-determinism, namely the \emph{diamond property}, for both the machine and the strategy. For the machine, this introduces a modular management of jobs, parametric in a scheduling policy. We then show how to obtain various strategies, among which leftmost-outermost evaluation, by instantiating in different ways the scheduling policy.

\keywords{Lambda calculus, abstract machines, strong evaluation.}
\end{abstract}

\section{Introduction}
An abstract machine for the $\l$-calculus, or for one of its extensions, is an implementation schema for a fixed evaluation strategy 
$\tostrat$ with sufficiently atomic operations (accounting for the \emph{machine} part) and without too many implementative details 
(accounting for the \emph{abstract} part). An abstract machine for $\tostrat$, ideally, refines the reduction of $\tostrat$-redexes realizing the following three tasks:
\begin{enumerate}
  \item \emph{Search}: searching for $\tostrat$-redexes;
  \item \emph{Names}: avoiding variable captures through some mechanism implementing $\alpha$-conversion, or allowing one to avoid $\alpha$-conversion altogether;
   \item \emph{Substitution}: refining meta-level substitution with an approximation based on delaying the substitution, which boils down to adopting a form of sharing, and replacing one variable occurrence at a time, in a demand-driven fashion.
\end{enumerate}
These tasks are usually left to \emph{meta-level operations} in $\l$-calculi, meaning that they happen outside the syntax of the 
calculus itself, in a black-box manner. The role of abstract machines is to explicitly take care of these aspects, or at least of some of them, reifying them from the meta-level to the object-level. Concretely, this is obtained by enriching the specification of the operational semantics with dedicated data structures. Additionally, such a specification is usually designed as to be efficient, and usually the evaluation strategy $\tostrat$ is deterministic. 

\paragraph{Search, Backtracking, and Jumping.} A first motivation of this paper is obtaining a better understanding of the search mechanism of abstract machines. When pushed at the meta-level, search is usually specified via deduction rules or via a grammar of evaluation contexts, assuming that, at each application of a rewriting rule, the term is correctly split into an 
evaluation context and a redex. The meta-level aspect is the fact that the process of splitting the term (or applying the deductive rules)
is not taken into account as an operation of the calculus.

For simple evaluation strategies such as, for instance, weak call-by-name in the pure $\l$-calculus (also known as \emph{weak head reduction}), abstract machines (such as the Krivine or the Milner abstract machine) have one search transition for every production of the evaluation contexts for the meta-level definition of search. For less simple strategies, the searching for redexes by the machine often also accounts for the further mechanism of \emph{backtracking search}, that is not visible in the operational semantics, not even at meta-level. Such a form of search---which is completely unrelated to \emph{backtracking-as-in-classical-logic}---happens when the machine has finished evaluating a sub-term and needs to backtrack to retrieve the next sub-term to inspect. Typically, this happens when implementing strategies that evaluate arguments, and in particular for strategies evaluating to \emph{strong} normal form (that is, also under abstraction), the paradigmatic and simplest example of which is leftmost(-outermost\footnote{To ease the language, in the paper we shorten \emph{leftmost-outermost} to \emph{leftmost}.}) evaluation. 

As an example, let $\nf$ be a strong normal form and consider executing a machine for leftmost evaluation on $\la\var\var\nf\tm$. The machine would go under $\la\var$, find the head variable $\var$ and then start searching for a $\beta$-redex in $\nf$. Since there are none, the machine arrives at the end of $\nf$ and then it usually \emph{backtracks} through the structure of $\nf$, as to exit it and then start searching inside $\tm$. Backtracking search is natural when one sees the searching process as a \emph{walk} over the code, moving only between \emph{adjacent} constructors. This is the dominating approach in the design of abstract machines for $\l$-calculi, since the entirety of the (small) literature on machines for strong evaluation adopts it \cite{DBLP:journals/lisp/Cregut07,DBLP:conf/ppdp/Garcia-PerezNM13,DBLP:conf/aplas/AccattoliBM15,DBLP:journals/jfp/Garcia-PerezN19,DBLP:conf/rta/BiernackaC19,DBLP:conf/aplas/BiernackaBCD20,DBLP:conf/lics/AccattoliCC21,DBLP:conf/ppdp/BiernackaCD21,DBLP:journals/pacmpl/BiernackaCD22}. 

There is, however, a natural alternative approach which is \emph{saving} the position where one would next backtrack, and then directly \emph{jumping} to that position, instead of walking back to it.
In this paper, we explore how to avoid backtracking search by adopting a form of \emph{jumping search}. The idea is embarrassingly simple: creating a new \emph{job} when an argument ready to be evaluated is found, adding it to a pool of jobs; then when the active job terminates, jumping to another job in the pool, without backtracking out of the terminated one.

\paragraph{Diamond Non-Determinism.} A second motivation of the paper is to understand how to deal with diamond non-determinism at the level of machines. It is often the case that a deterministic strong strategy can be relaxed as to be non-deterministic. For instance, on the head normal form $\var\,\tm\,\tmtwo\,\tmthree$ leftmost evaluation would first evaluate $\tm$, then $\tmtwo$, and then $\tmthree$. But the evaluations of $\tm$, $\tmtwo$, and $\tmthree$ are in fact independent, so that one could evaluate them in any order. One could even interleave their evaluations, as it is done for instance by the least level strategy, a non-deterministic strong strategy coming from the linear logic literature, introduced---we believe---by Girard \cite{DBLP:journals/iandc/Girard98} and studied for instance by de Carvalho et al. \cite{DBLP:journals/tcs/CarvalhoPF11} on proof nets and by Accattoli et al. \cite{DBLP:conf/aplas/AccattoliFG19} on the $\l$-calculus. 

Such a form of non-determinism is benign, as it does not affect the result, nor the length of the evaluation. 
Abstractly, it is captured by the \emph{diamond property} \correction{(here defined following Dal Lago and Martini \cite{DBLP:journals/tcs/LagoM08}, while Terese \cite{terese} defines it more restrictively, without requiring $u_1 \neq u_2$)}
, the strongest form of confluence:
\begin{center}
\begin{tabular}{c@{\hspace{.5 cm}}c@{\hspace{.5 cm}}ccc}
\begin{tikzpicture}[ocenter]
  \node (s) {\small $t$};
  \node at (s) [right =\verticaldiagramdistance] (s1){\small$u_1$};
  \node at (s) [below =\verticaldiagramdistance] (s2){\small$u_2$};
  \draw[->,  cyan] (s) to (s1);
  \draw[->, cyan] (s) to (s2);
\end{tikzpicture}
&
and $u_1\neq u_2$ imply $\exists \red{\tmthree}$ s.t.
&
\begin{tikzpicture}[ocenter]
  \node (s) {\small $t$};
  \node at (s) [right =\verticaldiagramdistance] (s1){\small$u_1$};
  \node at (s) [below =\verticaldiagramdistance] (s2){\small$u_2$};
  \node at (s1|-s2) (t) {\small$\red{\tmthree}$};
  \draw[->,  cyan] (s) to (s1);
  \draw[->, cyan] (s) to (s2);
  \draw[->, dashed,  red] (s2) to  (t);
  \draw[->, dashed,  red] (s1) to  (t);
\end{tikzpicture}
\end{tabular}
\end{center}
What makes it stronger than confluence is that both the opening span from $\tm$ and the closing one to $\tmthree$ are made of \emph{single} steps, not of sequences of steps.

The diamond property can be seen as a liberal form of determinism, because---when it holds---all reductions to normal form have the same length, and if one such reduction exists then there are no diverging reductions.

\paragraph{External Strategy and Machine.} Here we introduce a relaxed, diamond version of the leftmost strategy, which we deem \emph{external strategy}. We are inspired by Accattoli et al. \cite{DBLP:conf/lics/AccattoliCC21}, who study a similar strategy for strong call-by-value. 

\correction{Diamond strategies can be seen as uniform frameworks capturing different deterministic strategies (for instance the leftmost and the least level strategy). 
Accordingly, a non-deterministic machine implementing a diamond strategy would factor out the commonalities of different deterministic machines.} 

We then design a machine for the external strategy, the \emph{EXternal Abstract Machine} (\EXAM) by building over the jumping search explained above. The idea, again, is very simple. It amounts to relaxing the scheduling of jobs from the pool, by allowing the machine to non-deterministically select whatever unfinished job at each step, instead of having to terminate the active job and then having to move to the next one in the pool.

In fact, we go one step further. We define a \emph{pool interface} and the definition of the \EXAM is \emph{abstract} in that it only refers to the interface. Then one can define different \emph{pool templates} that implement various scheduling policies for jobs. The external strategy is implemented when adopting the most general, non-deterministic template. By only replacing the template, we show how the same machine can also implement the leftmost strategy. At the end of the paper, we also quickly overview a template for the least level strategy, as well as one for a fair strategy.


\paragraph{Related Work.}  This work adopts terminologies and techniques from the work on abstract machines by Accattoli and coauthors \cite{DBLP:conf/icfp/AccattoliBM14,DBLP:conf/aplas/AccattoliBM15,DBLP:conf/wollic/Accattoli16,DBLP:conf/ppdp/AccattoliB17,DBLP:conf/ppdp/AccattoliCGC19,DBLP:journals/scp/AccattoliG19,DBLP:conf/lics/AccattoliCC21}, and in particular refines their machine for leftmost evaluation \cite{DBLP:conf/aplas/AccattoliBM15}. They focus on the complexity of machines, while here we focus on \emph{search} and ignore complexity, since their work shows that search has \emph{linear} cost (in the time cost model, i.e. the number of $\beta$-steps) and thus it does not affect the asymptotic behavior, which is instead linked to how the machine realizes the orthogonal \emph{substitution} task. The study of machines for strong evaluation is a blind spot of the 
field, despite the relevance for the implementation of proof assistants. The few studies in the literature have all been cited above. Search
for abstract machines is related to Danvy and Nielsen's (generalized) \emph{refocusing} \cite{Danvy04refocusingin,DBLP:conf/rta/BiernackaCZ17}, which however has never dealt with the jumping search introduced here. General non-deterministic machines are studied by Biernacka et al. \cite{DBLP:conf/concur/BiernackaBLS22}, but the setting is different as they are not diamond.

\paragraph{Proofs.} Proofs are in \ifthenelse{\boolean{techreport}}{the Appendix.}{the technical report \cite{accattoli2023diamond}.}
\section{Normal Forms and the Importance of Being External}
\paragraph{Basics of $\l$.}The set $\Terms$ of untyped $\lambda$-terms is defined by
$\tm \grameq \var \mid \lam{\var}{\tm} \mid \tm\,\tm$.
The capture-avoiding substitution of $\var$ by $\tmtwo$ in $\tm$
is written $\tm\sub{\var}{\tmtwo}$.
The relation of $\beta$-reduction at the root ${\rtob} \subseteq \Terms \times \Terms$
is defined by $(\lam{\var}{\tm})\,\tmtwo \rtob \tm\sub{\var}{\tmtwo}$.

We shall use various notions of contexts, which are terms with a single occurrence of a free variable
$\ctxhole$, called a {\em hole}.
If $\ctx$ is a context, $\ctxp{\tm}$ denotes the \emph{plugging} of $\tm$ in $\ctx$ which is 
the textual substitution of $\ctxhole$ by $\tm$ in $\ctx$. Plugging might capture variables, for instance if $\ctx = \la\var\la\vartwo\ctxhole$ then $\ctxp{\var\vartwo} = \la\var\la\vartwo\var\vartwo$. Note that instead one would have $(\la\var\la\vartwo\varthree)\sub\varthree{\var\vartwo} = \la{\var'}\la{\vartwo'}\var\vartwo$.

The relation of $\beta$-reduction ${\tob} \subseteq \Terms \times \Terms$
is the context closure of $\rtob$, i.e. $\ctxp\tm \tob \ctxp\tmtwo$ if $\tm \rtob\tmtwo$, compactly noted also as ${\tob} \defeq \ctxp{\rtob}$. \correction{An evaluation $\deriv: \tm \tob^* \tmtwo$ is a possibly empty sequence of $\beta$-steps.}

\begin{proposition}[Normal forms]
$\beta$-Normal forms are described by:
\begin{center}
$\begin{array}{r\colspace rcl\colspace\colspace\colspace r\colspace rcl}
\textsc{Neutral terms} & \neu & \grameq & \var \mid \neu \nf
&
\textsc{Normal forms} & \nf & \grameq &\neu \mid \la\var\nf
\end{array}$
\end{center}
\end{proposition}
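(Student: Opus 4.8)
The plan is to prove the two inclusions separately: every term described by the grammar is $\beta$-normal, and conversely every $\beta$-normal term is described by the grammar.

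For the first inclusion I would prove, simultaneously and by induction on the derivation of a term from the grammar, the two statements: \emph{(i)} every neutral term $\neu$ is $\beta$-normal \emph{and is not an abstraction}; \emph{(ii)} every normal form $\nf$ is $\beta$-normal. The ``not an abstraction'' clause in \emph{(i)} is the key strengthening of the induction hypothesis, since it is what rules out the creation of a redex at the root in the production $\neu\,\nf$. Concretely: if $\neu = \var$, both parts of \emph{(i)} are immediate; if $\neu = \neu'\,\nf'$, then by i.h. $\neu'$ is $\beta$-normal and not an abstraction and $\nf'$ is $\beta$-normal, hence $\neu'\,\nf'$ has no redex at the root (as $\neu'$ is not an abstraction) nor inside (as $\neu'$ and $\nf'$ are $\beta$-normal), so it is $\beta$-normal, and it is an application, hence not an abstraction. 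For \emph{(ii)}: if $\nf = \neu$, conclude by \emph{(i)}; if $\nf = \la\var\nf'$, then by i.h. $\nf'$ is $\beta$-normal, and since $\rtob$ never fires at the root of an abstraction, $\la\var\nf'$ is $\beta$-normal.

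For the converse inclusion I would proceed by induction on the term $\tm$, using that $\tob$ is the context closure of $\rtob$, so that every subterm of a $\beta$-normal term is again $\beta$-normal. If $\tm = \var$, then $\tm$ is a neutral term, hence a normal form. If $\tm = \la\var\tmtwo$, then $\tmtwo$ is $\beta$-normal (a redex in $\tmtwo$ would, by context closure, give a redex in $\tm$), so by i.h. $\tmtwo$ is a normal form and $\tm = \la\var\tmtwo$ is a normal form. If $\tm = \tmtwo\,\tmthree$, then $\tmtwo$ and $\tmthree$ are $\beta$-normal by context closure, and moreover $\tmtwo$ is \emph{not} an abstraction, for otherwise $\tmtwo\,\tmthree$ would be a redex at the root, contradicting the normality of $\tm$; by i.h., $\tmtwo$ is a normal form of shape $\neu$ or of shape $\la\var\nf$, and the latter is excluded, so $\tmtwo$ is a neutral term; by i.h., $\tmthree$ is a normal form; hence $\tm = \tmtwo\,\tmthree$ matches the production $\neu\,\nf$ and is a neutral term, hence a normal form.

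The argument is entirely routine; the only point requiring a little care — and really the single idea of the proof — is to propagate the extra invariant ``a neutral term is not an abstraction'' through the first induction (equivalently, to observe in the second induction that a $\beta$-normal application must carry a non-abstraction on its left), which is precisely what justifies splitting the grammar into the two mutually recursive categories.
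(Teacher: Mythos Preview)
Your proof is correct and follows the standard route for this well-known characterization. The paper itself does not give a proof of this proposition---it is stated as a known fact and no argument appears either in the body or in the appendix---so there is nothing to compare against; your write-up is exactly the kind of routine mutual induction one would expect, and the strengthening ``a neutral term is not an abstraction'' is indeed the only non-trivial observation.
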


\paragraph{Weak Head Reduction and External Redexes.} The simplest evaluation strategy is weak head reduction $\towh$, which is obtained as the closure $\apctx\ctxholep\rtob$ of the root $\beta$-rule $\rtob$ by the following notion of applicative contexts:
\begin{center}$
  \begin{array}{l@{\hspace{10pt}}llcccccc}
    \textsc{Applicative contexts}
    & \apctx & ::= & \ctxhole      & \mid & \apctx\,\tm.
  \end{array}$
\end{center}
Example: $(\la\var\tm)\tmtwo\tmthree \towh \tm\isub\var\tmtwo\tmthree$.
Weak head reduction is deterministic. It fails to compute $\beta$-normal forms because it does not reduce arguments nor abstraction bodies, indeed $\tmthree((\la\var\tm)\tmtwo)\not\towh \tmthree(\tm\isub\var\tmtwo)$ and $\la\vartwo((\la\var\tm)\tmtwo)\not\towh \la\vartwo\tm\isub\var\tmtwo$. 

The key property of the weak head redex is that it is \emph{external}, a key concept from the advanced rewriting theory of the $\l$-calculus studied by many authors \cite{DBLP:conf/birthday/HuetL91,DBLP:conf/birthday/HuetL91a,DBLP:conf/popl/Maranget91,DBLP:conf/lics/GonthierLM92,phdmellies,DBLP:journals/iandc/BarendregtKKS87,Boudol86,DBLP:conf/rta/Oostrom99,terese,DBLP:conf/popl/AccattoliBKL14}. Following Accattoli et al. \cite{DBLP:conf/popl/AccattoliBKL14}, the intuition is that a redex $R$ of a term $\tm$ is \emph{external} if:
\begin{enumerate}
\item \emph{Action constraint}: no other redex in $\tm$ can act on (that is, can erase or duplicate) $R$, and
\item \emph{Hereditary clause}: the same it is true, hereditarily, for all the residuals of $R$ after any other redex.
\end{enumerate}
In $\delta (\Id \var)$, where $\delta\defeq \la\vartwo\vartwo\vartwo$ is the duplicator combinator and $\Id \defeq \la\varthree\varthree$ is the identity combinator, the redex $\Id\var$ can be duplicated by $\delta$, so it is not external because the action constraint is not respected. In $\Id \delta (\Id \var)$, instead, the redex $\Id\var$ respects the action constraint, because $\Id\var$ is an outermost redex, and yet $\Id\var$ is not external because it does not validate the hereditary clause: its only residual after the step $\Id \delta (\Id \var) \tob \delta (\Id \var)$ can be duplicated by $\delta$.

Defining external redexes requires the introduction of the theory of residuals, which is heavy and beyond the scope of this paper. The intuition behind it however guides the study in this paper, and we consider it a plus---rather than a weakness---that this can be done  circumventing the theory of residuals.

\paragraph{Leftmost Reduction.} One way to extend weak head reduction to compute $\beta$-normal forms as to always reduce external redexes is provided by \emph{leftmost-outermost reduction} $\tolo$ (shortened to \emph{leftmost}). The definition relies on the notion of neutral term $\neu$ used to describe normal forms, and it is given by the closure $\lctxp\rtob$ of root $\beta$ by the following notion of leftmost contexts, defined by mutual induction with neutral contexts:
\begin{center}$
  \begin{array}{l@{\hspace{10pt}}llccccc@{\hspace{23pt}}cccccc}
    \textsc{Neutral ctxs}
    & \nctx & ::= & \ctxhole & \mid & \neu \lctx        & \mid & \nctx \tm
    &
        \textsc{Leftmost ctxs}
    & \lctx & ::= & \nctx  & \mid & \la\var\lctx 

  \end{array}$
\end{center}
Some examples: $\vartwo((\la\var\tm)\tmtwo)\tolo \vartwo(\tm\isub\var\tmtwo)$ and $\la\vartwo((\la\var\tm)\tmtwo)\tolo \la\vartwo\tm\isub\var\tmtwo$ but $\Id\vartwo((\la\var\tm)\tmtwo)\not\tolo \Id\vartwo(\tm\isub\var\tmtwo)$.
Leftmost reduction is deterministic and \emph{normalizing}, that is, if $\tm$ has a reduction to normal form $\nf$ then leftmost reduction reaches $\nf$. Formally, if $\tm \tob^*\nf$ with $\nf$ normal then $\tm\tolo^*\nf$---for a recent simple proof of this classic result see Accattoli et al. \cite{DBLP:conf/aplas/AccattoliFG19}. The normalization property can be seen as a consequence of the fact that the strategy reduces only external redexes. Note that the outermost strategy (that reduces redexes not contained in any other redex) is instead not normalizing, as the following $\Omega$ redex is outermost (but not leftmost), where $\Omega \defeq \delta\delta$ is the paradigmatic looping $\l$-term:
\begin{equation}
(\lam{\var}{\lam{\vartwo}{\var}})\,\varthree\,\Omega
  \tob
  (\lam{\var}{\lam{\vartwo}{\var}})\,\varthree\,\Omega
  \tob
  \hdots
  \label{eq:outermost-not-normalizing}
  \end{equation}
  The key point is that the outermost strategy does reduce redexes that cannot be acted upon, but it does not satisfy the hereditary clause in the intuitive definition of external redex given above, for which one also needs an additional requirement such as selecting the leftmost redex among the outermost ones.

\paragraph{External Reduction.} It is possible to define a strategy relaxing leftmost reduction, still reducing only external redexes, what we call \emph{external reduction}. The definition uses the auxiliary notions of rigid terms and contexts, plus the applicative contexts $\apctx$ used for weak head reduction. The terminology is inspired by Accattoli et al.'s similar strategy for strong call-by-value evaluation \cite{DBLP:conf/lics/AccattoliCC21}.

\begin{definition}
The following categories of terms and contexts are defined mutually inductively by the grammar:
\[
  \begin{array}{r@{\hspace{10pt}}ccccccc@{\hspace{23pt}}c@{\hspace{10pt}}cccccccccccc}
    \textsc{Rigid terms}
    & \aptm  & ::= & \var          & \mid & \aptm\,\tm
  \\
    \textsc{Rigid ctxs}
    & \soctx & ::= & \ctxhole & \mid & \aptm\,\boctx & \mid & \soctx\,\tm
  &
      \textsc{External ctxs} & \boctx & ::=  & \soctx      & \mid & \lam{\var}{\boctx}
  \end{array}
\]
{\em External reduction} is the rewriting relation ${\tobo} \subseteq \Terms \times \Terms$
on $\lambda$-terms defined as the closure of root $\beta$-reduction under external contexts,
that is,
${\tobo} \defeq \boctx\ctxholep{\rtob}$.
\end{definition}

Alternative streamlined definitions for these notions are:
\begin{center}$
  \begin{array}{lllll@{\hspace{23pt}}lllllllllll}

    \aptm
      & ::=  & \var\,\tm_1\hdots\tm_n

  \\
\soctx
      & ::=  & \ctxhole\,\tm_1\hdots\tm_n & \mid & \var\,\tmtwo_1\hdots\tmtwo_m\,\boctx\,\tm_1\hdots\tm_n
  &
\boctx & ::=  &  \lam{\var_1\hdots\var_k}{\soctx}
  \end{array}$
\end{center}

As proved below, the leftmost strategy is a special case of the external one. The converse does \emph{not} hold: $\tm = \var (\Id \vartwo) (\Id\varthree) \toext \var (\Id\vartwo)\varthree=\tmtwo $ but $\tm \not\tolo \tmtwo$. Instead, $\tm \tolo \var  \vartwo (\Id\varthree)=\tmthree$. Note also a case of diamond: $\tm \toext \tmthree$ and $\tmthree \toext \var\vartwo\varthree \ltoext \tmtwo$.
\begin{toappendix}
\begin{proposition}[Properties of external reduction]
\label{prop:ext-diamond}
\begin{enumerate}
\item \emph{Leftmost is external}: if $\tm\tolo\tmtwo$ then $\tm\toext\tmtwo$.
\item \emph{External diamond}:
if $\tmtwo\ltoext \cdot \toext\tmthree$ with $\tmtwo \neq \tmthree$ then $\tmtwo \toext \cdot \ltoext \tmthree$.
\end{enumerate}
\end{proposition}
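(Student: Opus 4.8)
\emph{Proof plan.} Both points can be proved working directly with the grammars of contexts, with no recourse to the theory of residuals. For the \emph{first point}, I would first note that every neutral term $\neu$ is in particular a rigid term $\aptm$ (an easy induction: $\var$ is rigid, and $\neu\,\nf$ is rigid by the inductive hypothesis, $\nf$ being a term). Then, by a mutual induction on the two grammars, I would show that every neutral context $\nctx$ is a rigid context $\soctx$ and every leftmost context $\lctx$ is an external context $\boctx$: $\ctxhole$ is an $\soctx$; for $\nctx = \neu\,\lctx$ the inductive hypothesis makes $\lctx$ a $\boctx$ and the previous remark makes $\neu$ an $\aptm$, so $\neu\,\lctx$ fits the production $\aptm\,\boctx$; the cases $\nctx = \nctx'\,\tm$ and $\lctx = \la\var\lctx'$ follow by applying the inductive hypothesis inside the corresponding production; and $\lctx = \nctx$ is subsumed by the neutral case. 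Since $\tolo$ and $\toext$ are the closures of root $\beta$-reduction $\rtob$ under leftmost and external contexts respectively, this inclusion of context classes gives at once that $\tm\tolo\tmtwo$ implies $\tm\toext\tmtwo$.

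For the \emph{second point}, let $\tm\toext\tmtwo$ contract an external redex $R_1$ of $\tm$ and $\tm\toext\tmthree$ contract an external redex $R_2$, with $\tmtwo\neq\tmthree$, so that $R_1$ and $R_2$ are distinct redex occurrences. The crux is the following claim, which I would prove using the streamlined grammar of external contexts ($\boctx = \la\var_1\ldots\var_k.\soctx$ and $\soctx = \ctxhole\,\tm_1\ldots\tm_n \mid \var\,\tmtwo_1\ldots\tmtwo_m\,\boctx\,\tm_1\ldots\tm_n$): \emph{if $\boctx$ is an external context and $R$ a $\beta$-redex, then the occurrence of $R$ in $\boctx\langle R\rangle$ is not a proper subterm of any $\beta$-redex of $\boctx\langle R\rangle$.} Indeed, by inspecting the grammar one sees that every proper ancestor of the occurrence of $R$ in $\boctx\langle R\rangle$ is either a $\lambda$-abstraction or an application whose function is not an abstraction — the latter because the function is either a variable-headed (rigid) term or itself an application; in particular, no such ancestor is a $\beta$-redex, which is the claim.

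From the claim, $R_1$ is not a proper subterm of $R_2$, nor $R_2$ of $R_1$ (each is a redex, and both are external); being distinct occurrences, $R_1$ and $R_2$ must then lie at disjoint positions. Writing $\tm$ as a two-hole context with $R_1,R_2$ in the holes, $\tmtwo$ (resp.\ $\tmthree$) is that context with $R_1$ (resp.\ $R_2$) alone contracted, and letting $w$ be the term obtained by contracting both we get $\tmtwo\toext w$ (contract $R_2$) and $\tmthree\toext w$ (contract $R_1$), the result not depending on the order since the redexes are disjoint. To see that these closing steps are genuine $\toext$-steps I would observe that contracting a redex in the \emph{fixed part} of an external context again yields an external context: by the grammar, the fixed parts are the leading binders, the spine variables, and the generic argument subterms, and a generic term stays generic under reduction while a variable-headed rigid term stays variable-headed. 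Applied to the external context witnessing externality of $R_2$ in $\tm$ — whose hole is disjoint from $R_1$, so $R_1$ lies in its fixed part — this shows $R_2$ is still external in $\tmtwo$; symmetrically for $R_1$ in $\tmthree$.

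The step I expect to be the \emph{main obstacle} is the claim: the structural analysis of the external-context grammar showing that an external redex is never nested inside another redex, and in particular never occurs in an argument whose spine head is an abstraction. Everything else — the inclusion of contexts for the first point, and the commutation of disjoint redexes together with the stability of externality under reduction of the fixed part for the second — is routine.
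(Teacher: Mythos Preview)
Your proof of Part~1 is essentially the paper's: both show by mutual induction that neutral contexts are rigid contexts and leftmost contexts are external contexts, after observing that neutral terms are rigid terms.

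For Part~2 your route differs from the paper's and is worth recording. The paper proves directly, by a mutual induction on the (applicative / rigid / external) context grammars, that two such contexts whose holes are filled by $\beta$-redexes are either equal or disjoint; it then says ``it suffices'' to show this, leaving the closing step implicit. You instead isolate a structural lemma---an external redex is never a proper subterm of any $\beta$-redex, i.e.\ external redexes are outermost---and derive disjointness from it; then you explicitly argue the closing step via stability of external contexts under reduction in their fixed (term) part. Your outermost-ness lemma is correct: an easy induction on the external/rigid context shows that every proper ancestor of the hole is either an abstraction or an application whose function side is, by the grammar, a rigid term or itself of rigid-context shape, hence never an abstraction (here you use that what is plugged is a redex, so an application, to handle the $\soctx'\,u$ case with $\soctx'=\ctxhole$). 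The stability argument is also sound, since reducing inside one of the generic argument subterms $t_i$ or $s_j$ of the streamlined grammar preserves all the productions. Compared to the paper, your decomposition is slightly more modular---the ``external $\Rightarrow$ outermost'' lemma is reusable---and you make explicit the step that the paper elides; the paper's single simultaneous induction is a bit shorter and avoids naming the auxiliary lemma.
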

\end{toappendix}


\section{Preliminaries: Abstract Machines}
\label{sect:prel-abs-mach}
\paragraph{Abstract Machines Glossary.}  Abstract machines manipulate \emph{pre-terms}, that is, terms without implicit $\alpha$-renaming. In this paper, an \emph{abstract 
machine} is a quadruple $\mach = (\States, \tomach, \compilrel\cdot\cdot, \decode\cdot)$ the component of which are as follows.
\begin{itemize}

\item \emph{States.} A state $\state\in\States$ is composed by the \emph{active term} $\tm$, and some data structures. Terms in states are actually pre-terms.

\item  \emph{Transitions.} The pair $(\States, \tomach)$ is a transition 
system with transitions $\tomach$ partitioned into \emph{$\beta$-transitions} $\tomachb$ (usually just one), that are meant to correspond to $\beta$-steps, and \emph{overhead transitions} $\tomacho$, that take care of the various tasks of the machine (searching, substituting, and $\alpha$-renaming).

\item \emph{Initialization.} \correction{The component $\compilrel{}{}\subseteq\Lambda\times\States$ is the \emph{initialization relation} associating  $\l$-terms to 
initial states. It is a \emph{relation} and not a function because $\compilrel\tm\state$ maps a $\l$-term $\tm$ (considered modulo $\alpha$) to a state $\state$ having a \emph{pre-term representant} of $\tm$ (which is not modulo $\alpha$) as active term. Intuitively, any two states $\state$ and $\statetwo$ such that $\compilrel\tm\state$ and $\compilrel\tm\statetwo$ are $\alpha$-equivalent.} A state $\state$ is \emph{reachable} if it can be reached starting from an initial state, that is, \correction{if $\statetwo \tomach^*\state$ where $\compilrel\tm\statetwo$ for some $\tm$ and $\statetwo$, which we abbreviate using $\compilrel\tm\statetwo \tomach^*\state$}.

\item \emph{Read-back.} The read-back function $\decode\cdot:\States\to\Lambda$ turns reachable states into 
$\l$-terms and satisfies the \emph{initialization constraint}: \correction{if $\compilrel\tm\state$ then $\decode{\state}=_\alpha\tm$}.
\end{itemize}

\paragraph{Further Terminology and Notations.} A state is \emph{final} if no transitions apply.
 A \emph{run} $\run: \state \tomach^*\statetwo$ is a possibly empty finite sequence of transitions, the length of which is noted 
$\size\run$\correction{; note that the first and the last states of a run are not necessarily initial and final}. 
If $a$ and $b$ are transitions labels (that is, $\tomachhole{a}\subseteq \tomach$ and 
$\tomachhole{b}\subseteq \tomach$) then $\tomachhole{a,b} \defeq \tomachhole{a}\cup \tomachhole{b}$ and $\sizep\run a$ 
is the number of $a$ transitions in $\run$.

\paragraph{Well-Namedness and Renaming.} \correction{For the machines at work in this paper, the pre-terms in initial states shall be \emph{well-named}, that is, they have pairwise distinct bound names; for instance $(\la\var\var)(\la\vartwo\vartwo\vartwo)$ is well-named while $(\la\var\var)(\la\var\var\var)$ is not. }
We shall also write $\rename{\tm}$ in a state $\state$ for a \emph{fresh well-named renaming} of $\tm$,
\ie $\rename{\tm}$ is $\alpha$-equivalent to $\tm$, well-named, and its bound variables
are fresh with respect to those in $\tm$ and in the other components of $\state$. 

\paragraph{Implementation Theorem, Abstractly.} We now formally define the notion of a machine implementing a strategy.
\begin{definition}[Machine implementation]
A machine $\mach=(\States, \tomach, \compilrel\cdot\cdot, \decode\cdot)$ \emph{implements a strategy} $\tostrat$ when given a $\l$-term $\tm$ the following 
holds:\label{def:implem}
\begin{enumerate}
\item \emph{Runs to evaluations}: for any $\mach$-run $\exec: \compilrel\tm\state \tomachine^* \statetwo$ there exists a 
$\tostrat$-evaluation $\deriv: \tm \tostrat^* \decode\statetwo$.

\item \emph{Evaluations to runs}: for every $\tostrat$-evaluation $\deriv: \tm \tostrat^* \tmtwo$ there exists a 
$\mach$-run $\exec: \compilrel\tm\state \tomachine^* \statetwo$ such that $\decode\statetwo = \tmtwo$.

\item \emph{$\beta$-Matching}: in both previous points the number $\sizebeta\exec$ of $\beta$-transitions in $\exec$ is 
exactly the length $\size\deriv$ of the evaluation $\deriv$, \ie $\size\deriv = \sizebeta\exec$.
\end{enumerate}
\end{definition}
Next, we give sufficient conditions that a machine and a strategy have to satisfy in order for the former to implement the latter, what we call \emph{an implementation system}. In the literature, strategies and machines are usually assumed to be deterministic. In Accattoli et al. \cite{DBLP:conf/lics/AccattoliCC21}, there is the case of a deterministic machine implementing a diamond strategy. Here we shall have a diamond machine implementing a diamond strategy, which is why the requirements are a bit different than for previous notion of implementation systems in the literature \cite{DBLP:journals/scp/AccattoliG19,DBLP:conf/ppdp/AccattoliCGC19,DBLP:conf/lics/AccattoliCC21}.

\begin{definition}[Implementation system]
  \label{def:implementation}
  A machine $\mach=(\States, \tomach, \compilrel\cdot\cdot, \decode\cdot)$ and a strategy $\tostrat$ form an \emph{implementation system} if:
  \begin{enumerate}
		\item\label{p:def-overhead-transparency} \emph{Overhead transparency}: $\state \tomacho \statetwo$ implies $\decode\state = \decode\statetwo$;
		\item\label{p:def-beta-projection} \emph{$\beta$-projection}: $\state \tomachhole\beta \statetwo$ implies $\decode\state \tostrat \decode\statetwo$;
		\item\label{p:def-overhead-terminate}	\emph{Overhead termination}:  $\tomacho$ terminates;	
	\item\label{p:def-beta-reflection} \emph{$\beta$-reflection}:  if $\state$ is $\tomacho$-normal and $\decode\state \tostrat \tmtwo$ then there exists $\statetwo$ such that $\state \tomachb \statetwo$ and $\decode\statetwo=\tmtwo$.
  \end{enumerate}
\end{definition}

The first two properties guarantee that the \emph{runs to evaluations} part of the implementation holds, the third and fourth properties instead induce the \emph{evaluation to runs} part, which is slightly more delicate. In the deterministic case, such a second part usually follows from a weaker notion of implementation system, where $\beta$-reflection is replaced by the weaker \emph{halt property}, stating that \emph{if $\state$ is final then $\decode\state$ is normal}. The diamond aspect of our study requires the stronger $\beta$-reflection property, which actually subsumes the halt one. Indeed, if $\decode\state$ is not normal then by $\beta$-reflection $\state$ is not final.

Thanks to a simple lemma for the \emph{evaluation to runs} part (in \ifthenelse{\boolean{techreport}}{the Appendix}{the technical report \cite{accattoli2023diamond}}), we obtain the following abstract implementation theorem.

\begin{toappendix}
\begin{theorem}[Sufficient condition for implementations]
\label{thm:abs-impl}
  Let  $\mach$ be a machine and $\tostrat$ be a strategy forming an implementation system.
  Then, $\mach$ implements $\tostrat$. More precisely, $\beta$-projection and overhead transparency imply the \emph{runs to evaluations} part (plus $\beta$-matching), and overhead termination and $\beta$-reflection imply the \emph{evaluations to runs} part (plus $\beta$-matching).
\end{theorem}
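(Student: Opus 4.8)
The plan is to prove the two halves of Definition~\ref{def:implem} separately, each by a short induction, while carrying a running count of the $\beta$-transitions so that $\beta$-matching drops out of both arguments. For the \emph{runs to evaluations} part I would induct on the length of the run $\exec : \compilrel\tm\state \tomach^* \statetwo$. If $\exec$ is empty then $\statetwo = \state$ and the initialization constraint gives $\decode\statetwo =_\alpha \tm$, so the empty $\tostrat$-evaluation works and $\sizebeta\exec = 0$. Otherwise $\exec$ is a shorter run $\exec' : \compilrel\tm\state \tomach^* \state'$ followed by one transition $\state' \tomach \statetwo$; the induction hypothesis gives $\deriv' : \tm \tostrat^* \decode{\state'}$ with $\sizebeta{\exec'} = \size{\deriv'}$. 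If the last transition is overhead, overhead transparency gives $\decode{\state'} = \decode\statetwo$, so $\deriv'$ is already the required evaluation and the $\beta$-count is unchanged; if it is a $\beta$-transition, $\beta$-projection gives $\decode{\state'} \tostrat \decode\statetwo$, so appending this step to $\deriv'$ yields $\deriv$, and both $\sizebeta$ and $\size{\deriv}$ grow by one. This settles the first part together with $\beta$-matching, using only overhead transparency and $\beta$-projection, as claimed.

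For the \emph{evaluations to runs} part, the crux is the auxiliary lemma alluded to before the statement, which I would phrase as: if $\state$ is reachable, $\decode\state =_\alpha \tmtwo$, and $\tmtwo \tostrat^* \tmthree$, then there is a run $\exec : \state \tomach^* \statetwo$ with $\decode\statetwo =_\alpha \tmthree$ and $\sizebeta\exec$ equal to the length of $\tmtwo \tostrat^* \tmthree$. This is proved by induction on that length. If the length is $0$, take the empty run $\state \tomach^* \state$, for which $\decode\state =_\alpha \tmtwo = \tmthree$ and $\sizebeta = 0$. For the inductive step, write the evaluation as $\tmtwo \tostrat \tmtwo' \tostrat^* \tmthree$. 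First use overhead termination to reach an $\tomacho$-normal state $\state_0$ by a finite overhead run $\state \tomacho^* \state_0$; overhead transparency gives $\decode{\state_0} =_\alpha \tmtwo$. Since $\tostrat$ is stable under $\alpha$-equivalence, from $\tmtwo \tostrat \tmtwo'$ we get $\decode{\state_0} \tostrat \tmtwo''$ for some $\tmtwo'' =_\alpha \tmtwo'$, and $\beta$-reflection then provides $\state_1$ with $\state_0 \tomachb \state_1$ and $\decode{\state_1} = \tmtwo'' =_\alpha \tmtwo'$; note $\state_1$ is reachable. Applying the induction hypothesis to $\state_1$ and $\tmtwo' \tostrat^* \tmthree$ yields a run $\state_1 \tomach^* \statetwo$ with $\decode\statetwo =_\alpha \tmthree$, and concatenating the overhead run, the $\beta$-transition, and this run gives the claim, with the $\beta$-count incremented by exactly one. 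The \emph{evaluations to runs} part of the theorem is then the instance of the lemma with $\state$ an initial state $\compilrel\tm\state$, using the initialization constraint $\decode\state =_\alpha \tm$.

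The two inductions are routine; the only delicate point — and the place where the four conditions of an implementation system are used in a coordinated way — is the inductive step of the lemma, where one must (i) exhaust overhead steps via overhead termination so as to reach a state to which $\beta$-reflection is applicable, and (ii) transport the $\alpha$-equivalence between the decoded state and the strategy's term across the $\beta$-transition, which relies on $\tostrat$ being closed under $\alpha$-renaming. I expect this $\alpha$-bookkeeping, rather than any genuinely rewriting-theoretic difficulty, to be the main obstacle, and it is also the reason why overhead termination is needed in addition to $\beta$-reflection: without it one could not guarantee that the state produced by the induction hypothesis is $\tomacho$-normal, hence ready for the next application of $\beta$-reflection.
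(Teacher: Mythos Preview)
Your proof is correct and follows essentially the same approach as the paper's: both halves are routine inductions that deploy the four implementation-system properties exactly as you do. The only cosmetic differences are that the paper inducts on $\sizebeta\exec$ rather than on the raw run length in the first half, and in the second half it peels off the \emph{last} $\tostrat$-step (packaging the overhead-termination plus $\beta$-reflection argument into a separate one-step simulation lemma) rather than the first, which lets it keep the induction hypothesis anchored at the initial state instead of generalizing to arbitrary reachable states as you do.
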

\end{toappendix}

\section{Preliminaries: The Milner Abstract Machine}

The new machine for the external strategy that we are about to introduce builds on the Milner Abstract Machine (shortened to MAM) for weak head reduction by Accattoli et al. \cite{DBLP:conf/icfp/AccattoliBM14}, that is probably the simplest abstract machine for the $\l$-calculus in the literature. In this section, we overview the MAM, the data structures and transitions of which are defined in \reffig{mam}. 

\paragraph{Data Structures.} The MAM has two data structures, the stack $\stack$ and the environment $\env$, which are lists. We use '$\cons$' for consing a single element onto a list,
but also for list concatenation, so for instance $\stack \cons \stacktwo$ stands
for the concatenation of stacks.
The set of variables bound by an environment $\genv=\esub{\var_1}{\tm_1}\ldots\esub{\var_k}{\tm_k}$ is $\set{\var_1,\ldots,\var_k}$ and it is noted $\dom\genv$.

\begin{figure}[t!]
\begin{center}
  \begin{tabular}{c}
  $\begin{array}{r\colspace rll\colspace||\colspace r\colspace rclll}
  \multicolumn{8}{c}{\textsc{Data structures, States, and Initialization}}
  \\[6pt]
  \textsc{Stacks} & \stack,\stacktwo & \grameq & \stempty \mid \tm\cons\stack
  &
  \textsc{Environments} & \env,\envtwo & \grameq & \eempty \mid \esub\var\tm\cons\env
  \\
    \textsc{States} & \state,\statetwo & \grameq & \mamstate\tm\stack\env
    &
  \textsc{Initialization} & \compilrel\tm\state & \mbox{if} & \state=\mamstate{\rename\tm}\stempty\eempty
    \end{array}$
    \\
    \\
  $
    \begin{array}{|c@{\SEP}c@{\SEP}c||@{\hspace{6pt}}c@{\hspace{6pt}}||c@{\SEP}c@{\SEP}c|l|l}
      \multicolumn{7}{c}{\textsc{Transitions}}
      \\[6pt]
   \hline
  \textsc{Active Term} & \textsc{Stack} & \textsc{Env}&&  \textsc{Active Term} & \textsc{Stack} & \textsc{Env}
  \\
    \hline
    \hline
    \tm\tmtwo & \stack & \genv
    & \toexamap &
    \tm & \tmtwo\cons\stack & \genv
    \\
    \hline
    \l\var.\tm & \tmtwo\cons\stack & \genv
    & \toexambeta &
    \tm & \stack & \esub{\var}{\tmtwo}\cons\genv
  \\
  \hline
    \var & \stack & \genv
    & \toexamsub &
    \rename{\genv(\var)} & \stack & \genv 
    \\
    \multicolumn{7}{|r|}{\text{If $\var\in\dom\env$}}
   \\
   \hline
   \end{array}$
%
  \end{tabular}
  \end{center}
  \caption{Definition of the Milner Abstract Machine (MAM).}
    \label{fig:mam}
\end{figure}
\paragraph{Transitions of the MAM.} \correction{A term $\tm$ is initialized into an initial state $\compilrel\tm\state$ by simply using an arbitrary well-named renaming $\rename\tm$} as active term together with empty stack and environment. The MAM \emph{searches} for $\beta$-redexes in the active term by descending on the left of applications via transition $\toexamap$, while accumulating arguments on the (applicative) \emph{stack}, which is simply a stack of terms. If it finds an abstraction $\la\var\tm$ and the stack has $\tmtwo$ on top, then it performs the machine analogous of a $\beta$-redex, that is a $\toexambeta$ transition, which adds the entry $\esub\var\tmtwo$ on top of the \emph{environment}, to be understood as a delayed, explicit substitution. If the MAM finds a variable $\var$, then it looks up in the environment $\env$ if it finds an associated entry $\esub\var\tm$, and replaces $\var$ with an $\alpha$-renamed $\rename\tm$ copy of $\tm$. 

Transitions $\toexamap$ and $\toexamsub$ are the overhead transitions of the MAM, that is, $\tomacho\defeq \tomachhole{\seasym_@,\subsym}$, and $\toexambeta$ is its only $\beta$-transition. The MAM is deterministic. 

\paragraph{Read-Back.} The read-back of MAM states to terms can be defined in at least two ways, by either first reading back the environment or the stack. Here we give an \correction{environment-first definition}, which shall be used also for the \EXAM.

\begin{definition}[MAM read-back]
The read-back $\decenv\tm\env$ and $\decenv{\stack}{\genv}$ of terms and stack with respect to an environment $\env$ are the terms and stacks given by:
  \begin{center}$
    \begin{array}{r\colspace rcl\colspace\colspace rcl}
      \textsc{Terms} &\decenv{\tm}{\eempty} & \defeq & \tm 
      & 
      \decenv{\tm}{\esub{\var}{\tmtwo}\cons\genv} & \defeq & \decenv{(\tm\sub{\var}{\tmtwo})}{\genv}
      \\
      \textsc{Stacks} & \decenv\stempty\env & \defeq & \stempty
      & 
      \decenv{\tm\cons\stack}{\genv} & \defeq & \decenv\tm\genv \cons \decenv\stack\env
    \end{array}$
  \end{center}
    The read-back $\decst{\tm}{\stack}$ of $\tm$ with respect to a stack $\stack$ is the term given by:
  \begin{center}
    $\begin{array}{rcl\colspace\colspace\colspace rcl}
      \decst{\tm}{\stempty}          & \defeq & \tm 
      &
      \decst{\tm}{\tmtwo\cons\stack} & \defeq & \decst{(\tm\,\tmtwo)}{\stack} \\
    \end{array}$
  \end{center}
  Finally, the read-back of a state is defined as $\decode{\mamstate\tm\stack\env} \defeq \decst{\decenv{\tm}\env}{\decenv\stack\env}$.
\end{definition}


\begin{theorem}[\cite{DBLP:conf/icfp/AccattoliBM14}]
The MAM implements weak head reduction $\towh$.
\end{theorem}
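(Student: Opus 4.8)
The plan is to apply the abstract implementation theorem (Theorem~\ref{thm:abs-impl}): it suffices to show that the MAM and weak head reduction $\towh$ form an implementation system in the sense of Definition~\ref{def:implementation}, that is, to verify overhead transparency, $\beta$-projection, overhead termination, and $\beta$-reflection. All four points rest on a structural invariant of reachable states $\mamstate\tm\stack\env$, proved by a routine induction on the length of the run: \emph{well-namedness} (all names bound in $\tm$, in the terms of $\stack$, and in the terms of $\env$ are pairwise distinct and distinct from $\dom\env$ and from the globally free names) and \emph{acyclic scoping} (writing $\env = \esub{\var_1}{\tm_1}\cons\cdots\cons\esub{\var_k}{\tm_k}$, the free variables of each $\tm_i$ are among $\{\var_{i+1},\dots,\var_k\}$ plus the globally free names, and the free variables of $\tm$ and of the terms of $\stack$ are in $\dom\env$ plus the globally free names). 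The interesting preservation cases are $\toexambeta$ — where the new entry $\esub\var\tmtwo$ is pushed at the head of the environment and one uses that the binder $\var$ of the active abstraction is fresh, hence not free in the stack top $\tmtwo$ — and $\toexamsub$, where freshness of the renaming $\rename{\env(\var)}$ maintains well-namedness.

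Next I would discharge overhead transparency and $\beta$-projection. The workhorses are three commutation lemmas for the read-back, each proved by induction on $\env$ using well-namedness: $\decenv{\tm\tmtwo}\env = (\decenv\tm\env)(\decenv\tmtwo\env)$; $\decenv{\la\var\tm}\env = \la\var(\decenv\tm\env)$ for $\var\notin\dom\env$; and the substitution lemma $\decenv{\tm\sub\var\tmtwo}\env = (\decenv\tm\env)\sub\var{\decenv\tmtwo\env}$ for $\var\notin\dom\env$; together with the fact that $\rename\cdot$ and the capture-avoiding substitutions used in the read-back act up to $\alpha$. For $\toexamap$, the first lemma and the definition of $\decst{\cdot}{\cdot}$ turn $\decode{\mamstate{\tm\tmtwo}\stack\env}$ literally into $\decode{\mamstate\tm{\tmtwo\cons\stack}\env}$. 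For $\toexamsub$ with active variable $\var\in\dom\env$, acyclic scoping gives $\decenv\var\env = \decenv{\env(\var)}\env = \decenv{\rename{\env(\var)}}\env$, whence transparency. For $\toexambeta$, the first two lemmas yield $\decode{\mamstate{\la\var\tm}{\tmtwo\cons\stack}\env} = \decst{\big((\la\var(\decenv\tm\env))\,\decenv\tmtwo\env\big)}{\decenv\stack\env}$, which plugs a root $\beta$-redex into the applicative context $\decst{\ctxhole}{\decenv\stack\env}$ — hence reduces by exactly one $\towh$-step to $\decst{\big((\decenv\tm\env)\sub\var{\decenv\tmtwo\env}\big)}{\decenv\stack\env}$, and the substitution lemma (with $\var\notin\dom\env$ and $\var$ not free in $\stack$) identifies this with $\decode{\mamstate\tm\stack{\esub\var\tmtwo\cons\env}}$.

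Overhead termination I expect to be the main technical obstacle. The overhead transitions $\toexamap$ and $\toexamsub$ never modify the environment (only $\toexambeta$ does), so along an overhead-only run $\env$ is fixed, and by acyclic scoping the \emph{$\env$-unfolding} of the active term — obtained by recursively substituting every entry of $\env$ — is a well-defined finite term. Transition $\toexamap$ strictly shrinks this unfolding (replacing $\tm\tmtwo$ by $\tm$), while $\toexamsub$ leaves it unchanged (unfolding $\var$ and unfolding $\rename{\env(\var)}$ coincide). Consecutive $\toexamsub$ steps are then controlled using acyclicity: if the active variable is $\var_i$, the next active term $\rename{\env(\var_i)}$ has free variables among $\{\var_{i+1},\dots,\var_k\}$, so it is either a variable $\var_j$ with $j>i$ — making $k-i$ strictly decrease — or not a variable of $\dom\env$, after which $\tomacho$ can only stop or perform a size-decreasing $\toexamap$. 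The lexicographic measure (size of the $\env$-unfolding, then the index gap) is thus strictly decreasing, so $\tomacho$ terminates on reachable states.

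Finally, $\beta$-reflection. If $\state = \mamstate\tm\stack\env$ is $\tomacho$-normal, then $\tm$ is not an application (no $\toexamap$) and not a variable of $\dom\env$ (no $\toexamsub$), so $\tm$ is a variable $\var\notin\dom\env$ or an abstraction. If $\tm = \var\notin\dom\env$ then $\decenv\var\env = \var$ and $\decode\state = \decst\var{\decenv\stack\env}$ is a variable applied to some arguments, hence a $\towh$-normal form; if $\tm$ is an abstraction and $\stack = \stempty$ then $\decode\state = \decenv{\la\var\tm}\env$ is an abstraction, again $\towh$-normal; in both cases the hypothesis $\decode\state \towh \tmtwo$ is impossible, so there is nothing to prove. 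Otherwise $\tm$ is an abstraction and $\stack = \tmtwo\cons\stack'$, so $\toexambeta$ fires, $\state \toexambeta \statetwo$; by $\beta$-projection $\decode\state \towh \decode\statetwo$, and since $\towh$ is deterministic $\decode\statetwo = \tmtwo$, which is exactly what $\beta$-reflection demands. With all four conditions established, Theorem~\ref{thm:abs-impl} gives that the MAM implements $\towh$, the number of $\toexambeta$-transitions of a run coinciding with the length of the corresponding weak-head evaluation.
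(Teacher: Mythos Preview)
The paper does not actually prove this theorem: it is stated with a citation to \cite{DBLP:conf/icfp/AccattoliBM14} and no proof is given, neither in the body nor in the appendix. So there is no ``paper's own proof'' to compare against. Your reconstruction via the paper's abstract implementation theorem (Theorem~\ref{thm:abs-impl}) is a natural and self-contained way to recover the result inside the present framework, and the overall structure---invariants, overhead transparency, $\beta$-projection, overhead termination, $\beta$-reflection---is sound.

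Two remarks on details. First, your overhead-termination measure works but the presentation is slightly awkward: the secondary ``index gap'' component is only defined when the active term is a variable of $\dom\env$, so the lexicographic order is not literally a total function on states. A cleaner single measure (in the spirit of the $\omeas$ used later in the paper for the \EXAM) is to define recursively the size of the active term \emph{relative to} $\env$, counting each occurrence of $\var\in\dom\env$ as $1$ plus the relative size of $\env(\var)$; acyclicity makes this well-defined, and both $\toexamap$ and $\toexamsub$ strictly decrease it. Second, in the $\beta$-projection step you need $\var\notin\fv(\stack)$; this does follow from your invariants (free variables of stack terms lie in $\dom\env$ plus globally free names, and the bound $\var$ is distinct from both), but it is worth making the inference explicit since it is the only place where the ``local scope'' aspect of well-namedness is used beyond the environment.
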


Environments are defined as \emph{lists} of entries, but they are meant to be concretely implemented as a store, without a rigid list structure. The idea is that variables are implemented as memory locations, as to obtain constant-time access to the right entry of the environment via the operation $\env(\var)$. It is nonetheless standard to define environments as lists, as it helps one stating invariants concerning them. For more implementative details, see Accattoli and Barras \cite{DBLP:conf/ppdp/AccattoliB17}.

\paragraph{Comparison with the KAM.} For the reader acquainted with the famous Krivine Abstract Machine (KAM), the difference is that the stack and the environment of the MAM contain \emph{codes}, not \emph{closures} as in the KAM, and that there is a single \emph{global} environment instead of many \emph{local} environments. A global environment indeed circumvents the complex mutually recursive notions of \emph{local environment} and \emph{closure}, at the price of the explicit $\alpha$-renaming $\rename{\tm}$ which is applied \emph{on the fly} in $\toexamsub$. The price however is negligible, at least theoretically, as the asymptotic complexity of the machine is not affected, see Accattoli and co-authors \cite{DBLP:conf/icfp/AccattoliBM14,DBLP:conf/ppdp/AccattoliB17} (the same can be said of variable names vs de Bruijn indexes/levels).

\section{From the MAM to Normal Form Using Backtracking}
Here, we quickly recall the usual way in which a machine such as the MAM can be extended to implement reduction to normal form---in this case leftmost reduction---by searching for redexes into abstractions bodies and arguments, and by adding backtracking. We follow the pattern used by Accattoli et al. \cite{DBLP:conf/aplas/AccattoliBM15}, to which we point the interested reader for extended explanations.

\paragraph{Extending the MAM.} The extended MAM in \reffig{strong-MAM} has an \emph{abstraction stack} $\absstack$ collecting the list of abstractions into which the machine entered, and a \emph{dump} $\dump$ to navigate the tree structure of applications. Moreover, it has two phases: $\skeval$ for the usual search for redexes and evaluation, and $\skback$ for backtracking. Search into arguments is done by first switching to backtracking.

The next section shows how to avoid backtracking, by changing the data structures. The obtained machine shall actually be more general, as the new data structures enable more flexibility with respect to the implemented strategy.
\begin{figure}[t]
\begin{center}
\fontsize{8}{11} \selectfont
  $
    \begin{array}{|c@{\SEP}c@{\SEP}c@{\SEP}c@{\SEP}c@{\SEP}c||c||c@{\SEP}c@{\SEP}c@{\SEP}c@{\SEP}c@{\SEP}c|l|l}
   \hline
  \textsc{Ab.} & \textsc{Tm} & \textsc{St.} & \textsc{Dump} & \textsc{En.} & \textsc{Ph.}&& \textsc{Ab.} & \textsc{Tm} & \textsc{St.} & \textsc{Dump} & \textsc{En.} & \textsc{Ph.}
  \\
    \hline
    \hline
    \absstack & \tm\tmtwo & \stack & \dump & \genv & \skeval
    & \toexamap &
    \absstack & \tm & \tmtwo{\cons}\stack & \dump & \genv & \skeval
    \\
    \hline
    \absstack & \l\var.\tm & \tmtwo{\cons}\stack & \dump & \genv & \skeval
    & \toexambeta &
    \absstack & \tm & \stack & \dump & \esub{\var}{\tmtwo}{\cons}\genv & \skeval
  \\
  \hline
    \absstack & \var & \stack & \dump & \genv & \skeval
    & \toexamsub &
    \absstack & \rename{\genv(\var)} & \stack & \dump & \genv & \skeval
    \\
    \multicolumn{13}{|r|}{\text{If $\var\in\dom\env$}}
   \\
   \hline

    \absstack & \la\var\tm & \stempty & \dump & \genv & \skeval
    & \toexamlam &
    \absstack{\cons}\var & \tm & \stempty & \dump & \genv & \skeval
     \\
  	\hline
    \absstack & \var & \stack & \dump & \genv & \skeval
    & \tomachhole{\skeval\skback} &
    \absstack & \var & \stack & \dump & \genv & \skback
    \\
    \multicolumn{13}{|r|}{\text{If $\var\notin\dom\env$}}

      \\
  	\hline
    \absstack & \tm & \tmtwo{\cons}\stack & \dump & \genv & \skback
    & \tomachhole{\skback\skeval} &
    \absstack & \tmtwo & \stempty & (\absstack{,}\tm{,}\stack){\cons}\dump & \genv & \skeval

      \\
  	\hline
    \absstack{\cons}\var & \tm & \stempty & \dump & \genv & \skback
    & \tomachhole{\skback_\lambda} &
    \absstack & \la\var\tm & \stempty & \dump & \genv & \skback

      \\
  	\hline
    \stempty & \tm & \stempty & (\absstack{,}\tmtwo{,}\stack){\cons}\dump & \genv & \skback
    & \tomachhole{\skback_@} &
    \absstack & \tmtwo & \stack & \dump & \genv & \skback
    
    \\
	\hline
   \end{array}$
   \end{center}
   \caption{An extended MAM computing strong normal forms and resting on backtracking.}
      \label{fig:strong-MAM}
\end{figure}
\section{The External Abstract Machine}
In this section, we define the EXternal Abstract Machine (\EXAM), an abstract machine for the external strategy $\toext$, by using the MAM as a sort of building block. The \EXAM is given in \reffig{exam} and explained in the following paragraphs. An example of run is given at the end of this section.

\paragraph{Data Structures.} The \EXAM has three data structures, two of which are new with respect to the MAM:
\begin{itemize}
\item The \emph{approximant} (of the normal form) $\boapr$, which collects the parts of the normal form already computed by the run of the \EXAM. The approximant is a \emph{named multi-context}, defined below, that is, a context with zero, one, or more named holes $\boctxhole\name$, each one identified by a distinct name $\name$, $\nametwo$, etc. 
\item The \emph{pool} $\pool$, which is a data structure containing a set of named MAM \emph{jobs}, providing operations for scheduling the execution of these jobs. Each named job $\jobn\name$ has shape $(\tm,\stack)_\name$, that is, it contains a term and a stack. The idea is that the job $\jobn\name=(\tm,\stack)_\name$ of name $\name$ is executing the term corresponding to $(\tm,\stack)$ and that the result of such execution shall be plugged in the approximant $\boapr$, replacing the hole $\boctxhole\name$. Pools are discussed in detail below.
\item The \emph{environment} $\env$, which is as for the MAM except that it is shared among all the jobs in the pool.
\end{itemize}

\paragraph{Transitions and Functioning of the \EXAM.} \correction{A term $\tm$ is initialized into an initial state $\compilrel\tm\state$ by creating a pool with a single named job $(\rename\tm,\stempty)_\name$ (having a well-named $\rename\tm$ version of $\tm$ and an empty stack)} and pairing it with the approximant $\boapr = \boctxhole\name$ and empty environment. The \EXAM proceeds as the MAM until it reaches a MAM final state. Let us consider the normal forms for weak head reduction and the corresponding final states of the MAM, which are of two kinds:
\begin{enumerate}
\item \emph{Abstractions (with no arguments)}: the $\towh$ normal form is $\la\var\tmtwo$ which is the read-back of a final MAM state $(\la\var\tm,\stempty,\env)$ with empty stack \correction{(that is, $\tmtwo=\decenv\tm\env$)}. In this case, the \EXAM performs a $\toexamlam$ transition, storing $\la\var\boctxhole\name$ into the approximant $\boapr$ at $\name$, and adding a named job $(\tm,\stempty)_\name$ to the pool $\pool$.

\item \emph{Possibly applied variables (with no substitution)}: the $\towh$ normal form is $\var \tmtwo_1\ldots \tmtwo_n$ with $n\geq 0$, which is the read-back of a final state $(\var,\tm_1\cons\ldots\cons\tm_n,\env)$ with $\var\notin\dom\env$ \correction{(that is, $\tmtwo_i=\decenv{\tm_i}\env$)}. In this case, the \EXAM performs a $\toexamvar$ transition. If $n = 0$ then $\toexamvar$ simply adds $\var$ into the approximant $\boapr$ at $\name$. If $n>0$ then $\toexamvar$ adds $n$ new named jobs $(\tm_1,\stempty)_{\nametwo_1}, \mydots, (\tm_n,\stempty)_{\nametwo_n}$ to the pool $\pool$ and adds $\var\boctxhole{\name_1}\mydots \boctxhole{\name_1}$ into the approximant $\boapr$ at $\name$.
\end{enumerate}
Transitions $\toexamap$, $\toexamlam$, and $\toexamvar$ are the search transitions of the \EXAM. Together with $\toexamsub$, they are the overhead transitions of the \EXAM, that is, $\tomacho \defeq \tomachhole{\seasym_@,\subsym, \seasym_\l, \seavarsym} $, and $\toexambeta$ is its only $\beta$-transition. The transition relation of the \EXAM is the union of all these relations,
\ie
${\toexam} \defeq \tomachhole{\seasym_@,\beta, \subsym, \seasym_\l, \seavarsym}$.

\begin{figure}[t!]
\begin{center}
\begin{tabular}{c}
  $\begin{array}{r@{\hspace{10pt}}rcl\colspace r@{\hspace{10pt}}rcl}
    \multicolumn{8}{c}{\textsc{Data structures, States, and Initialization}}
  \\[6pt]

    \textsc{Approx.}
      & \boapr & ::= & \boctxhole\name \mid \soapr \mid \lam{\var}{\boapr}
  &
    \textsc{Rigid approx.}
      & \soapr & ::= & \var \mid \soapr\,\boapr
  \\
    \textsc{Jobs}
      & \jobn\name & ::= & \cplace{\name}{\tm}{\stack}
  &
    \textsc{States}
      & \state & ::= & \examstate{\boapr}{\pool}{\genv} \mbox{ with }\pool \mbox{ a pool}
  \\
  &&&&
    \textsc{Initialization} & \compilrel\tm\state & \mbox{if} & \state=\examstate{\boctxhole\name}{\new{(\rename\tm,\stempty)_\name}} \eempty
  \end{array}$
\\\\

  $\begin{array}{|c@{\SEP}c@{\SEP}c||@{\hspace{6pt}}c@{\hspace{6pt}}||c@{\SEP}c@{\SEP}c|}
  \multicolumn{7}{c}{\textsc{Transitions}}
  \\[6pt]
   \hline
  \textsc{Ap.} & \textsc{Pool} & \textsc{Env}&&  \textsc{Ap.} & \textsc{Pool} & \textsc{Env}
  \\
    \hline
    \hline
        \boapr
      & \cplace{\name}{\tm\,\tmtwo}{\stack}\pop\pool
      & \genv
    & \toexamap &
        \boapr
      & \cplace{\name}{\tm}{\tmtwo\cons\stack}\push\pool
      & \genv
  \\
  \hline
        \boapr
      & \cplace{\name}{\lam{\var}{\tm}}{\tmtwo\cons\stack}\pop\pool
      & \genv
    & \toexambeta &
        \boapr
      & \cplace{\name}{\tm}{\stack}\push\pool
      & \esub{\var}{\tmtwo}\cons\genv
  \\
  \hline
        \boapr
      & \cplace{\name}{\var}{\stack}\pop\pool
      & \genv
    & \toexamsub &
        \boapr
      & \cplace{\name}{\rename{\genv(\var)}}{\stack}\push\pool
      & \genv
      \\ \multicolumn{7}{|r|}{\text{If $\var\in\dom\env$ and $\rename{\tm}$ is a fresh renaming of $\tm$}}
  \\
  \hline
        \boapr
      & \cplace{\name}{\lam{\var}{\tm}}{\stempty}\pop\pool
      & \genv
    & \toexamlam &
        \boaprtwo
      & \cplace{\name}{\tm}{\stempty}\push\pool
      & \genv
      \\ \multicolumn{7}{|r|}{\text{With $\boaprtwo \defeq \boapr\boctxholep\name{\lam{\var}{\boctxhole\name}}$}}
  \\
  \hline
        \boapr
      & \cplace{\name}{\var}{\tm_1\cons\mydots\cons\tm_n}\pop\pool
      & \genv
    & \toexamvar &
        \boaprtwo
      & \cplace{\nametwo_1}{\tm_1}{\stempty}\cons\mydots\cons\cplace{\nametwo_n}{\tm_n}{\stempty} \add \pool
      & \genv
      \\ \multicolumn{7}{|r|}{\text{If $\var \not\in \dom\genv$, and with $n \geq 0$, $\boaprtwo \defeq \boapr\boctxholep\name{\var\,\boctxhole{\nametwo_1}\mydots\boctxhole{\nametwo_n}}$, and $\nametwo_1,\mydots,\nametwo_n$ fresh }}
      \\
      \hline
  \end{array}
  $
  \end{tabular}
\end{center}
  \caption{Definition of the EXternal Abstract Machine (\EXAM).}
    \label{fig:exam}
\end{figure}

\paragraph{Pool Interface and Templates.} The \EXAM selects at each step a (named) job from the pool---the one performing the step---according to a possibly non-deter\-ministic policy, and drops it back in the pool after the transition, unless the job is over, which happens in transition $\toexamvar$. In general, \emph{dropping a job back into a pool} and \emph{adding a job to a pool} are not the same operation, since the former applies to jobs that were in the pool before being selected, while addition is reserved to new jobs. We actually abstract away from a job scheduling policy and from the details of the pool data structure: the pool is an abstract \emph{interface} which can be realized by various concrete data structures called \emph{pool templates}.
\begin{definition}[Pool templates]
A pool template is a data structure $\pool$ coming with the following five operations of the \emph{pool interface}:\label{def:pools}
\begin{itemize}
\item \emph{Names, support, and new}: \correction{there are a name function $\names\pool= \set{\name_1,\mydots,\name_n}$ providing the finite and possibly empty set of the names of the jobs in the pool ($\nat\ni n\geq 0$), a support function $\supp\pool= \set{\jobn{\name_1},\mydots,\jobn{\name_n} }$ providing the set of jobs in the pool (indexed by $\names\pool$)}, and a function $\new{\jobn\name}$ creating a pool containing $\jobn\name$, that is, such that $\supp{\new{\jobn\name}} = \set{\jobn\name}$. 

\item \emph{Selection}: there is a selection relation $\pop (\pool, \jobn\name, \pooltwo)$ such that $\jobn\name \in \supp\pool$ and $\supp\pooltwo = \supp\pool\setminus\set{\jobn\name}$. The intuition is that $\pooltwo$ is $\pool$ without $\jobn\name$, which has been selected and removed from $\pool$. \correction{There is a \emph{choice constraint}: if $\pool$ is non-empty then $\pop (\pool, \jobn\name, \pooltwo)$ holds for some $\jobn\name$ and $\pooltwo$.} We write $\jobn\name \pop \pooltwo$ for a pool $\pool$ such that $\pop(\pool, \jobn\name, \pooltwo)$.

\item \emph{Dropping}: there is a dropping function $\push(\jobn\name, \pool)=\pooltwo$ defined when $\name \notin \names\pool$ and such that $\supp\pooltwo = \supp\pool\cup\set{\jobn\name}$. Dropping is meant to add a job $\jobn\name$ back to a pool $\pool$ from which $\jobn\name$ was previously selected. It is not necessarily the inverse of selection. We write $\jobn\name \drop \pool$ for the pool $\push(\jobn\name, \pool)$.

\item \emph{Adding}: similarly, there is an adding function $\add(\jobn\name, \pool)=\pooltwo$ defined when $\name \notin \names\pool$ and such that $\supp\pooltwo = \supp\pool\cup\set{\jobn\name}$. Adding is meant to add a new job $\jobn\name$ to a pool $\pool$, that is, a job that has never been in $\pool$. We write $\jobn\name \add \pool$ for $\add(\jobn\name, \pool)=\pooltwo$, and extend it to lists as follows:  $\eempty \add \pool \defeq \pool$,  and $\jobn{\name_1}\cons\mydots\cons\jobn{\name_n} \add \pool \defeq \jobn{\name_n} \add \left(\jobn{\name_1}\cons\mydots\cons\jobn{\name_{n-1}} \add \pool\right)$.
\end{itemize}
\end{definition}

\paragraph{\SEXAM.} The simplest pool template is the \emph{set template} where pools $\pool$ are sets of named jobs, the support is the pool itself (and the name set is as expected), $\new{\jobn{\name}}$ creates a singleton with $\jobn\name$, selection is the relation $\set{(\pool, \jobn\name, \pool\setminus\set{\jobn\name})\,|\, \jobn\name \in \pool}$, and both dropping and adding are the addition of an element. The set template models the most general behavior, as \emph{any} job of the pool can then be selected for the next step. The \EXAM instantiated on the set template is called \emph{\SEXAM}. Other templates shall be considered at the end of the paper, motivating in particular the distinction between dropping and adding.


\paragraph{Approximants and Named Multi-Contexts.} The definition of the \EXAM rests on approximants, which are stable prefixes of normal forms, that is, normal forms from which some sub-terms have been removed and replaced with named holes. In fact, we are going to introduce more general \emph{(named) multi-contexts} to give a status to approximants in which some but not all holes have been replaced by an \emph{arbitrary term}---which shall be needed in proofs (when manipulating the read-back)---thus losing their "normal prefix" property.

\begin{definition}[Named multi-contexts]
A \emph{(named) multi-context} $\ntm$ is a $\lambda$-term in which there may appear
free occurrences of (named) holes, \ie:
\begin{center}$
  \textsc{(Named) Multi-contexts}\ \ \ \ntm ::= \var \mid \boctxhole\name \mid \lam{\var}{\ntm} \mid \ntm\,\ntm
  $\end{center}
The plugging $\ntm\tsub{\name}{\ntmtwo}$ of $\name$ by $\ntmtwo$ in $\ntm$,
is the capture-allowing substitution of $\boctxhole\name$ by $\ntmtwo$ in $\ntm$.
We write $\fn{\ntm}$ for the set of names that occur in $\ntm$.  \correction{We shall use only multi-contexts where named holes have pairwise \emph{distinct names}}.
\end{definition}

Note that a multi-context $\ntm$ without holes is simply a term, thus the defined notion of plugging subsumes the plugging $\ntm\tsub{\name}{\tm}$ of terms in multi-contexts.

Approximants $\boapr$ are defined in \reffig{exam} by mutual induction with rigid approximants $\soapr$, and are special cases of multi-contexts. Alternative streamlined definitions for (rigid) approximants are (possibly $\vartwo = \var_i$ for some $i\in\set{1,\ldots,n}$):
\begin{center}$
  \begin{array}{rll\colspace \colspace \colspace rllllllll}
    \soapr & ::=  & \var\,\boapr_1\mydots\boapr_n
  &
    \boapr & ::=  & \lam{\var_1\mydots\var_n}{\ctxhole_\name} 
           & \mid & \lam{\var_1\mydots\var_n}{\vartwo\,\boapr_1\mydots\boapr_n}
  \end{array}
$\end{center}
Note that in $\boapr$ and $\soapr$ holes are never applied, that is, they are \emph{non-applying} multi-contexts. For the sake of readability, in the paper we only give statements about approximants, which are then reformulated in \ifthenelse{\boolean{techreport}}{the Appendix}{the technical report \cite{accattoli2023diamond}} by pairing them with a similar statement about rigid approximants, and proving the two of them simultaneously by mutual induction.

We prove two properties of approximants. Firstly, to justify that transitions $\toexamlam$ and $\toexamvar$ are well-defined,
we show that the first component of the state on their right-hand side
is indeed an approximant. Secondly, we relate approximants with normal forms, to justify the terminology.

\begin{toappendix}
\begin{lemma}[Inner extension of approximants]
If $\boapr$ is an  approximant and $\nametwo_1,\ldots,\nametwo_n\notin\fn\boapr$
then $\boapr\boctxholep\name{\lam{\var}{\boctxhole\name}}$
and $\boapr\boctxholep\name{\var\,\boctxhole{\nametwo_1}\mydots\boctxhole{\nametwo_n}}$ are
approximants.
\label{l:inn-extension-approx}
\end{lemma}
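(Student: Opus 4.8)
The plan is to prove the statement by mutual induction on the structure of the approximant $\boapr$, simultaneously with the analogous statement about rigid approximants $\soapr$ (as the paper announces it will do in the Appendix). So I would first state the companion: if $\soapr$ is a rigid approximant and $\nametwo_1,\ldots,\nametwo_n \notin \fn\soapr$, then $\soapr\boctxholep\name{\lam\var{\boctxhole\name}}$ and $\soapr\boctxholep\name{\var\,\boctxhole{\nametwo_1}\mydots\boctxhole{\nametwo_n}}$ are again rigid approximants. The key observation making the induction go through is that plugging at $\name$ only touches the occurrence $\boctxhole\name$ (there is at most one, by the distinct-names convention), replacing it by $\lam\var{\boctxhole\name}$ — which is an approximant $\boapr$ — or by $\var\,\boctxhole{\nametwo_1}\mydots\boctxhole{\nametwo_n}$ — which is a rigid approximant $\soapr$ (the streamlined grammar $\soapr ::= \var\,\boapr_1\mydots\boapr_n$ with each $\boapr_i = \boctxhole{\nametwo_i}$). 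The freshness hypothesis $\nametwo_i \notin \fn\boapr$ guarantees that after plugging the holes still have pairwise distinct names, so the result is a legitimate (named) multi-context in the grammar class.

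Concretely, for the induction on $\boapr$ I would follow the grammar $\boapr ::= \boctxhole\name \mid \soapr \mid \lam\var\boapr$. In the base case $\boapr = \boctxhole{\namethree}$: if $\namethree \neq \name$ the plugging does nothing and we are done; if $\namethree = \name$ then $\boapr\boctxholep\name{\lam\var{\boctxhole\name}} = \lam\var{\boctxhole\name}$, which is an approximant by the production $\lam{\var}{\boapr}$ with $\boapr = \boctxhole\name$, and $\boapr\boctxholep\name{\var\,\boctxhole{\nametwo_1}\mydots\boctxhole{\nametwo_n}} = \var\,\boctxhole{\nametwo_1}\mydots\boctxhole{\nametwo_n}$, which is a rigid approximant $\soapr$ hence an approximant. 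In the case $\boapr = \lam\var{\boapr'}$, since $\name \in \fn\boapr$ forces $\name \in \fn{\boapr'}$ and the $\nametwo_i$ are still fresh for $\boapr'$, the IH gives that $\boapr'\boctxholep\name{\cdots}$ is an approximant, and then $\lam\var{(\boapr'\boctxholep\name{\cdots})}$ is too. In the case $\boapr = \soapr$ we invoke the rigid IH. For the rigid statement, the grammar is $\soapr ::= \var \mid \soapr\,\boapr$: the case $\soapr = \var$ has no holes so there is nothing to plug (the statement is vacuous or trivial since $\name \notin \fn\soapr$); the case $\soapr = \soapr'\,\boapr'$ is split according to whether $\name$ occurs in the left or right sub-approximant (it occurs in exactly one, by distinct names), applying the appropriate IH and then reusing the production $\soapr\,\boapr$.

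The one point deserving care — and the only place this is more than a routine structural induction — is the bookkeeping of the distinct-names invariant and of which subterm contains the hole $\boctxhole\name$. I would make explicit once and for all that, by the convention "we shall use only multi-contexts where named holes have pairwise distinct names", $\name$ occurs at most once in $\boapr$, so in the application case exactly one of the two sub-approximants contains it; and that the freshness assumption $\nametwo_1,\ldots,\nametwo_n \notin \fn\boapr$ propagates to sub-approximants and ensures the new holes $\boctxhole{\nametwo_i}$ do not clash with holes already present, preserving the invariant. Everything else is a direct reading-off of the grammar productions, so I do not expect any genuine obstacle; the mild annoyance is simply keeping the mutual-induction statement and its rigid companion in sync.
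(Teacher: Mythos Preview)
Your proposal is correct and follows essentially the same approach as the paper: a mutual induction on the structure of $\boapr$ and $\soapr$, with the only interesting case being the named hole $\boctxhole\namethree$, split according to whether $\namethree=\name$. The paper's proof is terser (it dismisses everything but the hole case as ``straightforward by \ih''), while you spell out the abstraction and application cases and the distinct-names bookkeeping, but the structure is identical; one small remark is that in the application case $\name$ might occur in \emph{neither} sub-approximant, in which case the plugging is the identity and the conclusion is immediate.
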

\end{toappendix}

\begin{toappendix}
\begin{lemma}
\label{l:approx-no-names-normal}
An approximant $\boapr$ without named holes is a normal form.
\end{lemma}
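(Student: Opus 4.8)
The plan is to prove the statement by mutual induction on the structure of approximants and rigid approximants, after \emph{strengthening} it — as the paper announces it will do in the Appendix — to the conjunction: \emph{(i) a rigid approximant $\soapr$ without named holes is a neutral term $\neu$, and (ii) an approximant $\boapr$ without named holes is a normal form $\nf$.} The two clauses must be proved simultaneously because the grammars of $\boapr$ and $\soapr$ are mutually recursive, exactly mirroring the mutual recursion between neutral terms and normal forms.

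For clause (ii), I would inspect the three productions of $\boapr$. If $\boapr = \boctxhole\name$, the hypothesis "without named holes" fails, so the case is vacuous. If $\boapr = \soapr$, then $\soapr$ has no named holes either, so by the induction hypothesis for clause (i) it is a neutral term, hence a normal form (since $\nf ::= \neu \mid \la\var\nf$). If $\boapr = \lam{\var}{\boaprtwo}$, then $\boaprtwo$ has no named holes, so by the induction hypothesis for clause (ii) it is some normal form $\nf$, and therefore $\lam{\var}{\nf}$ is a normal form. For clause (i), I would inspect the two productions of $\soapr$. If $\soapr = \var$, then $\var$ is a neutral term. If $\soapr = \soaprtwo\,\boapr$, then neither $\soaprtwo$ nor $\boapr$ contains named holes, so by the induction hypotheses $\soaprtwo$ is a neutral term $\neu$ and $\boapr$ is a normal form $\nf$, whence $\soaprtwo\,\boapr = \neu\,\nf$ is a neutral term by the production $\neu ::= \neu\,\nf$.

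I do not expect any real obstacle here: the argument is a direct structural induction, and the key structural feature that makes it go through is precisely the one highlighted in the excerpt, namely that in $\boapr$ and $\soapr$ holes are \emph{non-applying} (never appear in function position). This is what guarantees that the function position of an application node in a rigid approximant is again a rigid approximant, matching the shape $\neu\,\nf$ of neutral terms; were holes allowed to be applied, removing the hole would not yield a neutral term. If one wants to be fully explicit, the statement can also be phrased contrapositively as "if $\boapr$ is not a normal form then it contains a named hole", but the inductive form above is the most economical. The companion \reflem{l:inn-extension-approx} is not needed for this proof; only the grammar of (rigid) approximants from \reffig{exam} and the grammar of normal forms from the Normal forms proposition are used.
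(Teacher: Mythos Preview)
Your proposal is correct and matches the paper's own proof exactly: the paper strengthens the statement with the companion clause that a rigid approximant without named holes is a neutral term, and then dispatches both clauses by a straightforward mutual induction on $\soapr$ and $\boapr$. Your case analysis simply spells out what the paper leaves implicit.
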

\end{toappendix}

\paragraph{Read-Back.} To give a notion of read-back that is independent of the pool template, we define the read-back using a set $\suppset$ of uniquely named jobs---standing for the support $\supp\pool$ of the pool---rather than the pool $\pool$ itself. Moreover, we need a way of applying the substitution induced by an environment to named jobs and sets of named jobs, which is based on the notions $\decenv\tm\env$ and $\decenv\stack\env$ for terms and stacks given for the MAM, from which we also borrow the definition of $\decst\tm\stack$.
  
  \begin{definition}[\EXAM read-back]
  Applying an environment $\env$ to jobs and job sets is defined as follows:
  \begin{center}$
    \begin{array}{r\colspace rcl\colspace\colspace rcl}
      \textsc{Jobs/jobs sets} & 
      \decenv{\cplace\name\tm\stack}{\genv} & \defeq & \cplace\name{\decenv\tm\env}{\decenv\stack\env}
      &
      \decenv{\set{\jobn{\name_1},\mydots,\jobn{\name_n}}}\genv & \defeq & \set{\decenv{\jobn{\name_1}}\genv,\mydots,\decenv{\jobn{\name_n}}\genv}
    \end{array}$
  \end{center}
The read-back of jobs, and of a multi context $\ntm$ with respect to a set of uniquely named jobs $\set{\jobn{\name_1}, \mydots, \jobn{\name_n}}$ are defined as follows:
  \begin{center}$
    \begin{array}{rcl\colspace\colspace\colspace\colspace rcl}
      \decode{\cplace\name\tm\stack} & \defeq & \decst\tm\stack
      &
      \decpool{\ntm}{\set{\jobn{\name_1}, \mydots, \jobn{\name_n}}}  & \defeq &  \ntm\tsub{\name_1}{\decode{\jobn{\name_1}}}\mydots\tsub{\name_n}{\decode{\jobn{\name_n}}} 
    \end{array}$
  \end{center}
An \EXAM state $\state$ is read-back as a multi-context setting $\decode{\examstate{\boapr}{\pool}{\genv}} \defeq \decpool{\boapr}{\decenv{\supp\pool}\env}$.

\end{definition}

\paragraph{Diamond.} Since the selection operation is non-deterministic, the \EXAM in general is non-deterministic. The most general case is given by the \emph{\SEXAM}, which is the \EXAM instantiated with the set template for pools described after \refdef{pools}. As for the external strategy, the \SEXAM has the diamond property up to a slight glitch: swapping the order of two $\beta$-transitions on two different jobs, adds entries to the environment in different orders.

Let $\approx$ be the minimal equivalence relation on environments containing the following relation:
\begin{center}
$\env\cons \esub\var\tm \cons \esub\vartwo\tmtwo \cons\envtwo\ \ \sim\ \ \env\cons \esub\vartwo\tmtwo\cons \esub\var\tm  \cons\envtwo\ \ \ \ \mbox{if }\var\notin\tmtwo\mbox{ and }\vartwo\notin\tm$
\end{center}
Let $\streq$ be the relation over states s. t. $\examstate\boapr\pool{\env_1} \streq \examstate\boapr\pool{\env_2}$ if $\env_1 \approx \env_2$.

\begin{toappendix}
\begin{proposition}
\label{prop:exam-diamond}
The \SEXAM is diamond up to $\streq$, i.e., if $\state \toexam \state_1$ and $\state \toexam \state_2$ then  $\exists \statetwo_1$ and $\statetwo_2$ such that $\state_1 \toexam \statetwo_1$, $\state_2 \toexam \statetwo_2$, and $\statetwo_1 \streq \statetwo_2$.
\end{proposition}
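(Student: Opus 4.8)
The plan is to analyze the shape of a single \SEXAM step and observe that two \emph{distinct} steps from the same state always act on disjoint parts of the state, so that they commute, the only point of contact being the shared environment, whose reordering is absorbed by $\streq$. Following the paper's definition of the diamond property (after Dal~Lago and Martini, which includes the inequality hypothesis), we take $\state_1 \neq \state_2$. A \SEXAM step from $\examstate\boapr\pool\genv$ first selects a job $\jobn\name$ from the pool (writing $\jobn\name \pop \pooltwo$ for the pool, as in \reffig{exam}) and then applies to it the \emph{unique} MAM-style rule determined by its active term: $\toexamap$/$\toexamsub$ only rewrite the job and put it back (leaving $\boapr$ and $\genv$ untouched, $\toexamsub$ additionally refreshing some bound names of the job), $\toexambeta$ prepends one entry to $\genv$, and $\toexamlam$/$\toexamvar$ plug a non-applying multi-context at the hole $\boctxhole\name$ of $\boapr$ and $\add$ fresh-named jobs to $\pooltwo$. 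A first dichotomy: if the two given steps select the \emph{same} job, then since the applicable rule is uniquely determined by the job's active term (the MAM is deterministic), the two steps coincide modulo the inessential choice of fresh renamings and fresh hole names, hence $\state_1 = \state_2$, contradicting the hypothesis. So the two steps select \emph{distinct} jobs $\jobn\name$ and $\jobn\nametwo$ with $\name\neq\nametwo$, applying rules $r_1$ and $r_2$ respectively.

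The core is then to verify that $r_1$ on $\jobn\name$ and $r_2$ on $\jobn\nametwo$ commute, component by component. For the \emph{pool}: with the set template, selection removes only the selected job and $\push$/$\add$ are set union, so after $r_1$ the job $\jobn\nametwo$ is still present and unchanged and $r_2$ still applies (symmetrically); the jobs possibly created by a $\toexamvar$ step carry names fresh with respect to the whole state, hence disjoint from anything $r_2$ touches or creates, and set union is commutative. For the \emph{approximant}: $r_1$ and $r_2$ plug at $\boctxhole\name$ and $\boctxhole\nametwo$, which are distinct since $\name\neq\nametwo$ and, by the invariant on reachable states, the approximant's holes have pairwise distinct names equal to $\names\pool$; plugging at distinct holes of a named multi-context commutes (the plugged multi-contexts create only fresh holes), so both orders yield the same approximant. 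For the \emph{environment}: if neither rule is $\toexambeta$ it is untouched; if exactly one is, its new entry is prepended in either order with no interference; if both are $\toexambeta$, the two orders produce $\esub\vartwo\tmtwo\cons\esub\var\tm\cons\genv$ and $\esub\var\tm\cons\esub\vartwo\tmtwo\cons\genv$, which are related by $\approx$ because the well-namedness invariant of reachable \SEXAM states guarantees $\var\notin\tmtwo$ and $\vartwo\notin\tm$. Choosing, when closing the diamond, the same fresh data on both sides, the states $\statetwo_1$ (doing $r_1$ then $r_2$) and $\statetwo_2$ (doing $r_2$ then $r_1$) then have the same pool and approximant and $\approx$-equivalent environments, i.e. $\statetwo_1\streq\statetwo_2$.

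The main obstacle is the bookkeeping making this precise, which rests on two invariants of reachable \SEXAM states that must be established beforehand by induction on runs: (i) the set of named holes of $\boapr$ equals $\names\pool$, with all names pairwise distinct (used to know $\boctxhole\name\neq\boctxhole\nametwo$ are genuinely two holes and that freshly created names do not clash); and (ii) a well-namedness invariant ensuring, for two environment entries $\esub\var\tm$, $\esub\vartwo\tmtwo$ produced by independent $\toexambeta$ steps, that $\var\notin\tmtwo$ and $\vartwo\notin\tm$, so that the $\approx$-swap is licit — this is the \EXAM analogue of the standard well-namedness invariant of the MAM. A secondary, minor point is treating the freshness in $\toexamsub$ and $\toexamvar$ uniformly, which is unproblematic since freshness leaves us free to reuse the same fresh data on the two branches; this is also precisely why the residual diamond is only up to $\streq$ and not up to literal equality. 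Since the two redexes touch disjoint subterms, there is no critical pair beyond the shared environment, so no genuine case analysis on the pair $(r_1,r_2)$ is needed besides the uniform argument above.
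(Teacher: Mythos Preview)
Your proposal is correct and follows essentially the same approach as the paper's proof, which is a three-sentence sketch: transitions on distinct jobs commute because plugging at distinct named holes commutes, and the only interference is the environment order when both transitions are $\toexambeta$, which is absorbed by $\approx$. Your write-up is considerably more detailed---making explicit the same-job/different-job dichotomy, the component-wise commutation (pool, approximant, environment), and the invariants (bijection and freshness/local scope) that justify the $\approx$-swap---but the underlying argument is the same.

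One small wrinkle in your same-job case: you conclude ``hence $\state_1 = \state_2$'', but two $\toexamsub$ (or $\toexamvar$) steps on the same job may pick different fresh renamings (or fresh hole names), yielding literally distinct states that are not related by $\streq$ either, since $\streq$ only quotients the environment. The paper's proof glosses over this too; it is the usual ``up to choice of fresh names'' convention for machines with on-the-fly renaming, and does not affect the substance of the argument.
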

\end{toappendix}

\correction{
\begin{example} 
\label{ex:set-exam}
The following is a possible run of the \SEXAM---that is, the \EXAM with the set template for pools---on the term $\tm \defeq \var (\Id_\vartwo \varthree) (\delta_\varfour \varthree)$ where $\Id_\vartwo=\la\vartwo\vartwo$ and $\delta_\varfour = \la\varfour\varfour\varfour$, ending in a final state.
\begin{center}
\begin{tabular}{c}
$\begin{array}{|c@{\SEP}c@{\SEP}c||@{\hspace{6pt}}c@{\hspace{6pt}}|c@{\hspace{6pt}}clll}
\cline{1-4}
	\textsc{Approx.} & \textsc{Pool} & \textsc{Env}&\textsc{Tran.}&& \textsc{Selected Job}
	\\
    \hline
    \hline
    \boctxhole\name
    & \set{\cplace{\name}{\var (\Id_\vartwo \varthree) (\delta_\varfour \varthree)}{\stempty}}
    & \eempty
    & \toexamap 
    && \name
  	\\  
    \boctxhole\name
    & \set{\cplace{\name}{\var (\Id_\vartwo \varthree)}{\delta_\varfour \varthree}}
    & \eempty
    & \toexamap
    && \name
    \\
    \boctxhole\name
    & \set{\cplace{\name}{\var}{\Id_\vartwo \varthree \cons \delta_\varfour \varthree}}
    & \eempty
    & \toexamvar
    && \name
    \\
    \var\boctxhole\nametwo\boctxhole\namethree
    & \set{\cplace{\nametwo}{\Id_\vartwo \varthree}{\stempty}, \cplace{\namethree}{ \delta_\varfour \varthree}{\stempty}}
    & \eempty
    & \toexamap
    && \namethree
    \\
    \var\boctxhole\nametwo\boctxhole\namethree
    & \set{\cplace{\nametwo}{\Id_\vartwo \varthree}{\stempty}, \cplace{\namethree}{ \delta_\varfour }{\varthree}}
    & \eempty
    & \toexambeta
    && \namethree
    \\
    \var\boctxhole\nametwo\boctxhole\namethree
    & \set{\cplace{\nametwo}{\Id_\vartwo \varthree}{\stempty}, \cplace{\namethree}{ \varfour\varfour }{\stempty}}
    & \esub\varfour\varthree
    & \toexamap
    && \nametwo
    \\
    \var\boctxhole\nametwo\boctxhole\namethree
    & \set{\cplace{\nametwo}{\Id_\vartwo }{\varthree}, \cplace{\namethree}{ \varfour\varfour }{\stempty}}
    & \esub\varfour\varthree
    & \toexamap
    && \namethree
    \\
    \var\boctxhole\nametwo\boctxhole\namethree
    & \set{\cplace{\nametwo}{\Id_\vartwo }{\varthree}, \cplace{\namethree}{ \varfour }{\varfour}}
    & \esub\varfour\varthree
    & \toexambeta
    && \nametwo
    \\
    \var\boctxhole\nametwo\boctxhole\namethree
    & \set{\cplace{\nametwo}{\vartwo }{\stempty}, \cplace{\namethree}{ \varfour }{\varfour}}
    & \esub\vartwo\varthree \cons \esub\varfour\varthree
    & \toexamsub
    && \nametwo
    \\
    \var\boctxhole\nametwo\boctxhole\namethree
    & \set{\cplace{\nametwo}{\varthree }{\stempty}, \cplace{\namethree}{ \varfour }{\varfour}}
    & \esub\vartwo\varthree \cons \esub\varfour\varthree
    & \toexamsub
    && \namethree
    \\
    \var\boctxhole\nametwo\boctxhole\namethree
    & \set{\cplace{\nametwo}{\varthree }{\stempty}, \cplace{\namethree}{ \varthree }{\varfour}}
    & \esub\vartwo\varthree \cons \esub\varfour\varthree
    & \toexamvar
    && \namethree
    \\
    \var\boctxhole\nametwo(\var\boctxhole{\namethree'})
    & \set{\cplace{\nametwo}{\varthree }{\stempty}, \cplace{\namethree'}{ \varfour }{\stempty}}
    & \esub\vartwo\varthree \cons \esub\varfour\varthree
    & \toexamsub
    && \namethree'
    \\
    \var\boctxhole\nametwo(\var\boctxhole{\namethree'})
    & \set{\cplace{\nametwo}{\varthree }{\stempty}, \cplace{\namethree'}{ \varthree }{\stempty}}
    & \esub\vartwo\varthree \cons \esub\varfour\varthree
    & \toexamvar
    && \nametwo
    \\
    \var\varthree(\var\boctxhole{\namethree'})
    & \set{\cplace{\namethree'}{ \varthree }{\stempty}}
    & \esub\vartwo\varthree \cons \esub\varfour\varthree
    & \toexamvar
    && \namethree'
   
    \\
    \var\varthree(\varthree\varthree)
    & \emptyset
    & \esub\vartwo\varthree \cons \esub\varfour\varthree
    & 
    &&
  	\\
	\cline{1-4}
  \end{array}
  $
  \end{tabular}
\end{center}

\end{example}
}

\section{Runs to Evaluations}
In this section, we develop the projection of \EXAM runs on external evaluations, and then instantiates it with a deterministic pool template obtaining runs corresponding to leftmost evaluations.

\paragraph{Overhead Transparency.} By the abstract recipe for implementation theorems in \refsect{prel-abs-mach}, to project runs on evaluations we need to prove \emph{overhead transparency} and \emph{$\beta$-projection}. Overhead transparency is simple, it follows from the definition of read-back plus some of its basic properties (in \ifthenelse{\boolean{techreport}}{the Appendix.}{the technical report \cite{accattoli2023diamond}}).

\begin{toappendix}
\begin{proposition}[Overhead transparency]
\label{prop:overhead-transparency}If $\state \tomacho \state'$ then $\decode{\state} = \decode{\state'}$.
\end{proposition}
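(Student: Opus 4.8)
The plan is to proceed by cases on the overhead transition $\state \tomacho \state'$, one case for each of $\toexamap$, $\toexamsub$, $\toexamlam$, and $\toexamvar$, reducing each to a single small ``local'' identity about the read-back. The structural observation to make once up front is that in every overhead transition (i) the environment $\genv$ is left untouched; (ii) exactly one job of the pool is affected, namely the selected one, say of name $\name$, while all other jobs are unchanged; and (iii) the approximant $\boapr$ is either unchanged ($\toexamap$, $\toexamsub$) or has its unique $\name$-hole $\boctxhole\name$ refined into $\lam\var{\boctxhole\name}$ (for $\toexamlam$), resp. into $\var\,\boctxhole{\nametwo_1}\cdots\boctxhole{\nametwo_n}$ with the $\nametwo_i$ fresh (for $\toexamvar$). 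Unfolding $\decode{\examstate\boapr\pool\genv} = \decpool{\boapr}{\decenv{\supp\pool}\genv}$ and using that plugging at distinct named holes commutes---and that the read-back of a job is a plain term, so those pluggings raise no capture issues---the contributions of the unaffected jobs to $\decode\state$ and $\decode{\state'}$ coincide verbatim. So it suffices, in each case, to check that the $\name$-slot contributes the same subterm before and after.

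For $\toexamap$ the local identity is, after commuting $\decenv\cdot\genv$ with the application (a basic homomorphism property of the read-back) and unfolding $\decenv\cdot\genv$ on stacks, just an instance of the defining clause $\decst{\tm}{\tmtwo\cons\stack} = \decst{\tm\,\tmtwo}{\stack}$. For $\toexamlam$ I would use the abstraction homomorphism $\decenv{\lam\var\tm}\genv = \lam\var{\decenv\tm\genv}$ (valid on reachable states thanks to well-namedness and the freshness of environment entries) together with the ``persisting-name'' composition identity $(\boapr\tsub\name{\lam\var{\boctxhole\name}})\tsub\name{\decenv\tm\genv} = \boapr\tsub\name{\lam\var{\decenv\tm\genv}}$, which holds because $\boctxhole\name$ occurs exactly once in $\boapr$. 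For $\toexamvar$ I would use that $\decenv\var\genv = \var$ when $\var\notin\dom\genv$ (an immediate induction on $\genv$), the clause $\decst{\var}{\tm_1\cons\cdots\cons\tm_n} = \var\,\tm_1\cdots\tm_n$, and the branching analogue $(\boapr\tsub\name{\var\,\boctxhole{\nametwo_1}\cdots\boctxhole{\nametwo_n}})\tsub{\nametwo_1}{\decenv{\tm_1}\genv}\cdots\tsub{\nametwo_n}{\decenv{\tm_n}\genv} = \boapr\tsub\name{\var\,\decenv{\tm_1}\genv\cdots\decenv{\tm_n}\genv}$, which is licit precisely because the $\nametwo_i$ are fresh.

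The only case with genuine content is $\toexamsub$, where the local identity is $\decenv\var\genv = \decenv{\rename{\genv(\var)}}\genv$ for $\var\in\dom\genv$. I would split this into renaming transparency, $\decenv{\rename\tm}\genv = \decenv\tm\genv$ (an equality in $\Lambda$, i.e.\ up to $\alpha$, since the renaming is fresh), and the global-environment well-formedness invariant of reachable \EXAM states, which yields $\decenv\var\genv = \decenv{\genv(\var)}\genv$ whenever $\var\in\dom\genv$. All the ingredients used---the homomorphism clauses for $\decenv\cdot\genv$, renaming transparency, the composition and commutation lemmas for plugging in named multi-contexts, and the environment invariant---are exactly the ``basic properties of the read-back'' collected in the Appendix, which is why the proposition is genuinely routine. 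I expect the only mild obstacles to be bookkeeping ones: routing the $\toexamsub$ case through the environment invariant, and tracking carefully which named holes persist, branch, or stay fresh in $\toexamlam$ and $\toexamvar$ so that the single-hole composition lemmas actually apply.
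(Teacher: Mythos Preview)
Your proposal is correct and mirrors the paper's proof closely: the paper also isolates, via a preliminary factorisation (its ``$*$'' fact), the contribution of the selected job $\jobn\name$ from that of the rest of the pool---which is exactly your up-front structural observation---and then proceeds by the same four-way case analysis with the same local identities. In particular, the paper routes $\toexamsub$ through the environment well-formedness property (its \reflemmap{new-properties_decoding}{seven}, i.e.\ $\decenv\var\genv = \decenv{\genv(\var)}\genv$ under local scope) together with $\alpha$-renaming transparency, just as you do.
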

\end{toappendix}

\paragraph{Invariants.} To prove the $\beta$-projection property, we need some invariants of the \EXAM. Firstly, we deal with a set of invariants  concerning variable names, hole names, and binders. A notable point is that they are proved by using only properties of the pool interface, and are thus valid for every pool template instantiation of the \EXAM.

\emph{Terminology}: a {\em binding occurrence} of a variable $\var$ is an occurrence of $\lam{\var}{\tm}$ in $\boapr$, $\pool$ or $\genv$, or an occurrence of $\esub{\var}{\tm}$ in $\genv$, for some $\tm$.

\begin{toappendix}
\begin{lemma}[\EXAM Invariants]
Let $\state = \examstate{\boapr}{\pool}{\genv}$ be an \EXAM reachable state
reachable.
 Then: \label{l:exam_invariant}
\begin{enumerate}
\item 
\emph{Uniqueness}. There are no repeated names in $\boapr$.

\item 
  \emph{Freshness.}
  Different binding occurrences in $\state$ refer to different variable names.

\item 
\emph{Bijection}.  The set of names in $\boapr$ is in 1--1 correspondence with the set of names in $\pool$, that is, $\fn{\boapr} = \names\pool$.

\item 
\emph{Freeness}.  The free variables of $\boapr$ are globally free, that is, $\fv\boapr\cap \dom\genv = \emptyset$.

\item 
  \emph{Local scope.}
  For every sub-term of the form $\lam{\var}{\tm}$
  in \correction{a job in $\supp\pool$} or in $\env$,
  there are no occurrences of $\var$ outside of $\tm$.
  Moreover, in the environment $\esub{\var_1}{\tm_1}\cons\mydots\cons\esub{\var_n}{\tm_n}$,
  there are no occurrences of $\var_i$ in $\tm_j$ if $i \leq j$.
\end{enumerate}
\end{lemma}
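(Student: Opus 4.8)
The plan is to prove the five invariants \emph{simultaneously}, by induction on the length of a run reaching $\state$ from an initial state. They cannot be separated: Uniqueness and Bijection both concern the name sets that must stay pairwise distinct; Freeness of $\boapr$ after a $\toexambeta$ step follows from Local scope \emph{before} it (the $\beta$-bound variable cannot already occur in $\boapr$); and the scope/freshness accounting for $\toexambeta$ and $\toexamsub$ uses both Freshness and Local scope of the predecessor. One structural remark sets the stage: every manipulation of $\pool$ is read off the pool interface of \refdef{pools} --- selection removes exactly one named job, dropping and adding each insert exactly one named job whose name was fresh for the pool, and in every case $\supp\pool$ (hence the names and binding occurrences contributed by the pool) changes exactly as prescribed --- so the whole argument is uniform over all pool templates, including the \SEXAM.

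\textbf{Base case.} Here $\state$ is an initial state: its approximant is $\boctxhole\name$, its pool is $\new{(\rename\tm,\stempty)_\name}$, and its environment is empty. Uniqueness and Bijection are immediate because $\fn{\boctxhole\name} = \set{\name} = \names{\new{(\rename\tm,\stempty)_\name}}$; Freeness holds because $\fv{\boctxhole\name} = \emptyset$; and since $\rename\tm$ is well-named and $\genv$ is empty, Freshness and the binder part of Local scope hold, while its environment-ordering part is vacuous.

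\textbf{Inductive step.} Assume the invariants for a state $\state$ and check them for any $\statetwo$ with $\state \toexam \statetwo$, by cases on the transition. Transition $\toexamap$ only pushes an argument onto a job's stack, so nothing changes. For $\toexambeta$, the $\lambda\var$ heading the selected job's active term $\lam\var\tm$ is relabeled as the environment entry $\esub\var\tmtwo$: the binding occurrences are unchanged apart from this relabeling, so Freshness persists and, in particular, $\var \notin \dom\genv$; by Local scope for $\state$ (with $\lam\var\tm$ inside a job, hence $\var$ occurring only in $\tm$) the variable $\var$ occurs neither in $\tmtwo$, nor in the old $\genv$, nor in $\boapr$, which re-establishes both the environment-ordering part of Local scope for $\statetwo$ and Freeness, while Uniqueness and Bijection are untouched. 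For $\toexamsub$, the occurrence of $\var$ is replaced by $\rename{\genv(\var)}$: freshness of the renaming makes its new bound variables globally fresh, and by Local scope for $\state$ the copied subterm $\genv(\var)$ is already internally well-scoped with its free variables in $\dom\genv$; hence Freshness and Local scope survive, and $\boapr$ being untouched gives Freeness, Uniqueness and Bijection. For $\toexamlam$, the $\lambda\var$ migrates from the selected job's active term $\lam\var\tm$ into $\boapr$, which becomes $\boapr\boctxholep{\name}{\lam{\var}{\boctxhole\name}}$: Uniqueness and Bijection hold because the hole $\name$ is replaced by a context still containing exactly the hole $\name$ while the pool still contains exactly the job named $\name$; $\var$ is bound, so Freeness holds; the migrated binder persists, so Freshness holds; and Local scope is maintained since no $\lambda$-binder inside a job or $\genv$ changes its scope (the one that moved now lives in $\boapr$, outside the purview of the invariant). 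Finally, for $\toexamvar$, the selected job named $\name$ is removed, $n$ jobs with fresh names $\nametwo_1,\mydots,\nametwo_n$ are added, and $\boapr$ gets $\var\,\boctxhole{\nametwo_1}\mydots\boctxhole{\nametwo_n}$ plugged at $\name$: freshness of the $\nametwo_i$ gives Uniqueness, the matched removal/addition on $\boapr$ and $\pool$ gives Bijection, the side condition $\var \notin \dom\genv$ gives Freeness, and the $\tm_i$ only pass from a job's stack to being active terms of new jobs, exposing no binder, so Local scope and Freshness are preserved.

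The hard part will be the scope reasoning in the two cases that genuinely reshape the state. In $\toexambeta$ a binder becomes an environment entry, and one must locate the occurrences of the freed variable precisely --- in the new entry's content, in the rest of the environment, and in the approximant --- to recover the environment-ordering part of Local scope and Freeness. In $\toexamsub$ a freshly-renamed copy of a subterm fished out of $\genv$ is grafted into a job, and one must argue both that its newly-created bound names are globally fresh and that its free names already sit in $\dom\genv$ (the latter via the environment-ordering part of Local scope for the predecessor). A secondary, bookkeeping-heavy point is phrasing the name-counting behind Uniqueness and Bijection entirely through the pool interface: the choice constraint and the requirement that dropped and added names be fresh for the pool are exactly what make these counts go through for every template.
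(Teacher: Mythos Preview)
Your proposal is correct and follows essentially the same approach as the paper: simultaneous induction on the length of the run, with a case analysis on the last transition that verifies each invariant using the inductive hypothesis. Your account is in fact more thorough than the paper's own proof in places---for instance, you explicitly argue Freeness for the $\toexambeta$ case (via Local scope ensuring $\var\notin\fv\boapr$ when $\dom\genv$ grows by $\var$), which the paper leaves implicit---and your upfront remark that the argument depends only on the pool interface matches the paper's emphasis that the invariants hold for every template.
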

\end{toappendix}

The read-back of an \EXAM state is defined as a multi-context, but for reachable terms it is in fact a term, as stated by the second point of the next lemma, which is proved by putting together the bijection invariant for reachable states  (\reflemma{exam_invariant}.3) and the first point.
\begin{toappendix}
\begin{lemma}[Reachable states read back to terms]
\label{l:pure_decoding}
\begin{enumerate}
\item Let $\boapr$ be an  approximant and let $\suppset$ be a set of uniquely named jobs
such that $\fn{\boapr} \subseteq \names\suppset$.
Then $\decpool{\boapr}{\suppset}$ is a term.

\item Let $\state$ be a reachable state. Then its read-back $\decode\state$ is a term.
\end{enumerate}
\end{lemma}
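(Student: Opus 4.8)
The plan is to establish Part~1 first — by induction on the approximant $\boapr$, carried out simultaneously with the analogous statement for rigid approximants $\soapr$, as is the general convention announced in the paper — and then to obtain Part~2 from Part~1 together with the bijection invariant (\reflemma{exam_invariant}.3). The only auxiliary fact needed for Part~1 is that the read-back $\decode{\cplace\name\tm\stack} = \decst\tm\stack$ of a job is always an ordinary $\lambda$-term, i.e.\ contains no named holes: this is immediate by induction on $\stack$, since $\decst\tm\stempty = \tm$ is holeless (jobs contain pre-terms and stacks of pre-terms) and $\decst\tm{\tmtwo\cons\stack} = \decst{(\tm\,\tmtwo)}{\stack}$, with $\tm\,\tmtwo$ holeless whenever $\tm$ and $\tmtwo$ are.

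For Part~1 itself, I would argue by cases on the grammar of $\boapr$ (and $\soapr$). If $\boapr = \var$ there is nothing to plug and $\decpool{\var}{\suppset} = \var$ is a term. If $\boapr = \boctxhole\name$, then $\name \in \fn\boapr \subseteq \names\suppset$, so $\decpool{\boctxhole\name}{\suppset} = \decode{\jobn\name}$, a term by the auxiliary fact. For $\boapr = \lam\var\boaprtwo$ and for the application case $\soapr\,\boapr$, plugging commutes with the term constructors (capture-allowing plugging gives $(\lam\var\ntm)\tsub\name\ntmtwo = \lam\var{\ntm\tsub\name\ntmtwo}$ and $(\ntm_1\,\ntm_2)\tsub\name\ntmtwo = (\ntm_1\tsub\name\ntmtwo)(\ntm_2\tsub\name\ntmtwo)$), so $\decpool\boapr\suppset$ is obtained by applying that same constructor to the read-backs of the immediate sub-approximants; the name set of each sub-approximant is contained in $\fn\boapr \subseteq \names\suppset$, so the induction hypothesis applies and yields terms, hence $\decpool\boapr\suppset$ is a term. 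The one piece of mildly delicate bookkeeping is to note that, since the named holes of $\boapr$ have pairwise distinct names (the multi-context convention) and each $\decode{\jobn\name}$ is holeless, the iterated plugging defining $\decpool{\cdot}{\cdot}$ is independent of the chosen enumeration of $\suppset$ and no plugging step can re-introduce a hole.

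For Part~2, let $\state = \examstate{\boapr}{\pool}{\genv}$ be reachable, so that $\decode\state = \decpool{\boapr}{\decenv{\supp\pool}{\env}}$. Applying an environment to a job yields a job — $\decenv{\cplace\name\tm\stack}{\genv} = \cplace\name{\decenv\tm\env}{\decenv\stack\env}$, and $\decenv\tm\env$ is a term while $\decenv\stack\env$ is a stack of terms, since environments map variables to terms and $\decenv{\cdot}{\cdot}$ is just iterated capture-avoiding substitution — and it does not alter names, so $\decenv{\supp\pool}{\env}$ is again a set of uniquely named jobs with $\names{\decenv{\supp\pool}{\env}} = \names\pool$. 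By the bijection invariant, $\fn\boapr = \names\pool = \names{\decenv{\supp\pool}{\env}}$, so the hypothesis of Part~1 is met and $\decode\state$ is a term.

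I do not expect a real obstacle: the statement is essentially bookkeeping on top of the definition of read-back. The only points that require attention are (i) organizing Part~1 as a genuine mutual induction with rigid approximants while tracking that the name sets of sub-approximants remain inside $\names\suppset$, and (ii) justifying that the iterated plugging in $\decpool{\cdot}{\cdot}$ is well behaved (enumeration-independent and hole-decreasing) under the distinct-names convention on multi-contexts.
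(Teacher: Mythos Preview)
Your proposal is correct and follows essentially the same approach as the paper: a mutual induction on approximants and rigid approximants for Part~1, followed by an appeal to the bijection invariant for Part~2. The paper's proof is terser (it simply declares the mutual induction ``straightforward'' after stating the strengthened claim that rigid approximants read back to \emph{rigid} terms), but your more explicit case analysis and the auxiliary observation that job read-backs are hole-free are exactly the details that unwinding that induction would produce.
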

\end{toappendix}

\paragraph{Contextual Read-Back.} The key point of the $\beta$-projection property is proving that the read-back of the data structures of a reachable state without the active term/job provides an evaluation context---an external context in our case. This is ensured by the following lemma. It has a simple proof (using \reflemma{pure_decoding}) because we can state it about approximants without mentioning reachable state, given that we know that the first component of a reachable state is always an approximant (because of \reflemma{inn-extension-approx}). The lemma is then used in the proof of the $\beta$-projection property.

\begin{toappendix}
\begin{lemma}[External context read-back]
\label{l:almost_pure_decoding}
Let $\suppset$ be a set of uniquely named jobs and $\boapr$ be an approximant with no repeated names
such that $\fn{\boapr} \setminus \names\suppset = \set{\name}$.
Then $\decpool{\boapr}{\suppset}\tsub{\name}{\ctxhole}$  is an external context.
\end{lemma}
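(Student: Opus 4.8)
The plan is to prove the statement by structural induction on $\boapr$, after strengthening it to a mutual statement that also covers rigid approximants: if $\soapr$ is a rigid approximant with no repeated names and $\fn{\soapr}\setminus\names\suppset = \set{\name}$, then $\decpool{\soapr}{\suppset}\tsub{\name}{\ctxhole}$ is a \emph{rigid} context $\soctx$. I would pair this, following the conventions of the paper, with the two "all holes already filled" statements: if $\fn{\boapr}\subseteq\names\suppset$ then $\decpool{\boapr}{\suppset}$ is a term (this is \reflemma{pure_decoding}.1), and if $\fn{\soapr}\subseteq\names\suppset$ then $\decpool{\soapr}{\suppset}$ is a \emph{rigid} term $\aptm$ --- the latter being a trivial refinement proved by the same induction, using that the read-back $\decode{\jobn{\nametwo}} = \decst{\tm}{\stack}$ of a job is holeless and that, in the grammar of (rigid) approximants, holes never occur in applied position.

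Before the induction I would record two routine facts. First, $\decpool{\cdot}{\suppset}$ is the iterated plugging $\tsub{\name_1}{\decode{\jobn{\name_1}}}\cdots\tsub{\name_n}{\decode{\jobn{\name_n}}}$ (with $\names\suppset = \set{\name_1,\ldots,\name_n}$), and both it and the final $\tsub{\name}{\ctxhole}$ commute with the term constructors, since plugging is capture-allowing and hole names are disjoint from ordinary variables; e.g.\ $\decpool{\lam{\var}{\boapr}}{\suppset}\tsub{\name}{\ctxhole} = \lam{\var}{(\decpool{\boapr}{\suppset}\tsub{\name}{\ctxhole})}$ and $\decpool{\soapr\,\boapr}{\suppset} = \decpool{\soapr}{\suppset}\,\decpool{\boapr}{\suppset}$. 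Second, "no repeated names" is inherited by sub-approximants, and in an application $\soapr\,\boapr$ the name sets $\fn{\soapr}$ and $\fn{\boapr}$ are disjoint and together equal $\fn{\soapr\,\boapr}$, so the distinguished name $\name$ occurs in exactly one of the two sub-approximants, the other having all its names in $\names\suppset$.

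The case analysis is then short. If $\boapr = \boctxhole{\nametwo}$ the hypothesis forces $\nametwo = \name \notin \names\suppset$, hence $\decpool{\boapr}{\suppset} = \boctxhole{\name}$ and $\decpool{\boapr}{\suppset}\tsub{\name}{\ctxhole} = \ctxhole$, a rigid context. If $\boapr = \lam{\var}{\boapr'}$, apply the induction hypothesis (i.h.)\ to $\boapr'$ with the same $\name$ and use $\boctx ::= \lam{\var}{\boctx}$. If $\boapr = \soapr$ is rigid, use the mutual i.h.\ and $\boctx ::= \soctx$. In the rigid statement, $\soapr = \var$ is vacuous, and for $\soapr = \soapr'\,\boapr''$: if $\name \in \fn{\soapr'}$, then $\decpool{\boapr''}{\suppset}$ is a term $\tm$ by \reflemma{pure_decoding}.1 and $\decpool{\soapr'}{\suppset}\tsub{\name}{\ctxhole}$ is a rigid context by the i.h., so we conclude with $\soctx ::= \soctx\,\tm$; if $\name \in \fn{\boapr''}$, then $\decpool{\soapr'}{\suppset}$ is a (holeless) rigid term $\aptm$ by the rigid refinement, so $\tsub{\name}{\ctxhole}$ is the identity on it, and $\decpool{\boapr''}{\suppset}\tsub{\name}{\ctxhole}$ is an external context by the i.h., so we conclude with $\soctx ::= \aptm\,\boctx$.

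The only genuinely delicate point --- and the part I would set up most carefully --- is this last subcase: one must know that the argument on the left of an application that does \emph{not} carry $\name$ reads back not merely to a term but to a \emph{rigid} term, so as to match the $\aptm\,\boctx$ production of rigid contexts. This is exactly what the rigid refinement of \reflemma{pure_decoding}.1 provides, and it ultimately rests on the structural fact that holes in (rigid) approximants are never applied. All the rest is name bookkeeping.
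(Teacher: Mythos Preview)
Your proposal is correct and follows essentially the same approach as the paper: strengthen to a mutual statement about rigid approximants yielding rigid contexts, then proceed by mutual structural induction with the same case split on the application. You are in fact more explicit than the paper about the one delicate point---that in the $\soapr'\,\boapr''$ case with $\name\in\fn{\boapr''}$ one needs $\decpool{\soapr'}{\suppset}$ to be a \emph{rigid} term (not merely a term) to match the $\aptm\,\boctx$ production---which the paper handles by silently invoking the rigid half of the strengthened \reflemma{pure_decoding}.
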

\end{toappendix}


\begin{toappendix}
\begin{theorem}[$\beta$-projection]
If $\state \toexambeta \state'$ then $\decode{\state} \toext \decode{\state'}$.
\label{thm:exam-beta-projection}
\end{theorem}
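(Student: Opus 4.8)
The plan is to unfold the two read-backs, isolate the read-back of the job on which the step fires, and show it contains exactly the fired root redex sitting inside an external context built from the rest of the state.

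Write the step as $\state=\examstate{\boapr}{\jobn\name\pop\pool}{\genv}$ with selected job $\jobn\name=\cplace{\name}{\lam\var\tm}{\tmtwo\cons\stack}$, so that $\state'=\examstate{\boapr}{\jobn\name'\drop\pool}{\esub\var\tmtwo\cons\genv}$ with $\jobn\name'=\cplace{\name}{\tm}{\stack}$. Since $\state$ is reachable, $\boapr$ is an approximant (\reflemma{inn-extension-approx}) with pairwise distinct hole names, and by the bijection and freshness invariants (\reflemma{exam_invariant}) $\fn\boapr=\set\name\cup\names\pool$ with $\name\notin\names\pool$. Put $\suppset_0\defeq\decenv{\supp\pool}{\genv}$; this is a set of uniquely named jobs with $\names{\suppset_0}=\names\pool$, hence $\fn\boapr\setminus\names{\suppset_0}=\set\name$ and \reflemma{almost_pure_decoding} gives that $\boctx\defeq\decpool{\boapr}{\suppset_0}\tsub{\name}{\ctxhole}$ is an external context. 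All job read-backs are plain terms (jobs carry terms, not multi-contexts), so plugging them into the distinct, non-repeated holes of $\boapr$ is order-independent; therefore $\decode{\state}=\decpool{\boapr}{\set{\decenv{\jobn\name}{\genv}}\cup\suppset_0}=\boctx\ctxholep{\decode{\decenv{\jobn\name}{\genv}}}$. For $\state'$, the freshness and local-scope invariants (\reflemma{exam_invariant}) ensure that $\var$ is neither a binder of $\genv$ nor free in $\genv$, nor in $\boapr$, nor in any job of $\supp\pool$, nor in $\stack$ or $\tmtwo$; in particular $\decenv{\supp\pool}{\esub\var\tmtwo\cons\genv}=\suppset_0$, so by the same factoring $\decode{\state'}=\boctx\ctxholep{\decode{\decenv{\jobn\name'}{\esub\var\tmtwo\cons\genv}}}$, with the \emph{same} external context $\boctx$.

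It remains to compare the two plugged terms. Since $\var$ is neither bound nor free in $\genv$, capture-avoiding substitution passes through the abstraction, giving $\decenv{\lam\var\tm}{\genv}=\lam\var{\decenv\tm\genv}$, and unfolding $\decst{\cdot}{\cdot}$ along $\tmtwo\cons\stack$ yields $\decode{\decenv{\jobn\name}{\genv}}=\decst{((\lam\var{\decenv\tm\genv})(\decenv\tmtwo\genv))}{\decenv\stack\genv}$. On the other side, $\var\notin\fv\stack$ gives $\decenv{\stack}{\esub\var\tmtwo\cons\genv}=\decenv\stack\genv$, and by definition $\decenv{\tm}{\esub\var\tmtwo\cons\genv}=\decenv{\tm\sub\var\tmtwo}{\genv}$, so $\decode{\decenv{\jobn\name'}{\esub\var\tmtwo\cons\genv}}=\decst{\decenv{\tm\sub\var\tmtwo}{\genv}}{\decenv\stack\genv}$. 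The substitution lemma, applied repeatedly to push $\sub\var\tmtwo$ past each entry of $\genv$ (licit exactly because $\var$ is neither bound nor free in $\genv$), gives $(\decenv\tm\genv)\sub\var{\decenv\tmtwo\genv}=\decenv{\tm\sub\var\tmtwo}{\genv}$, whence the root step $(\lam\var{\decenv\tm\genv})(\decenv\tmtwo\genv)\rtob\decenv{\tm\sub\var\tmtwo}{\genv}$. Finally, $\decst{\ctxhole}{\decenv\stack\genv}$ is an applicative context of shape $\ctxhole\,r_1\cdots r_m$, and plugging an applicative context into the hole of an external context again yields an external context (a routine induction on the external context using its streamlined grammar); so, setting $\boctx'\defeq\boctx\ctxholep{\decst{\ctxhole}{\decenv\stack\genv}}$, we conclude $\decode{\state}=\boctx'\ctxholep{(\lam\var{\decenv\tm\genv})(\decenv\tmtwo\genv)}\toext\boctx'\ctxholep{\decenv{\tm\sub\var\tmtwo}{\genv}}=\decode{\state'}$. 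The main obstacle is the bookkeeping around the freshness/local-scope invariants: they are precisely what makes the new environment entry transparent to everything except the active job's body, and what validates the substitution/environment commutation; the lighter secondary point is the stability of external contexts under postcomposition with an applicative context at the hole.
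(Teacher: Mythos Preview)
Your proof is correct and follows essentially the same route as the paper's: factor the read-back as an external context (via \reflemma{almost_pure_decoding}) plugged with the read-back of the active job, reduce the root redex there, and use the freshness/local-scope invariants to make the new environment entry transparent to everything but the job's body. The paper packages several of your intermediate steps as separate lemmas (your substitution-commutation is its \reflemmap{new-properties_decoding}{four}, your transparency arguments are its points five and six, and your ``applicative context inside an external context is external'' is folded into its use of \reflemmap{new-properties_decoding}{three} together with \reflemma{almost_pure_decoding}), but the underlying argument is the same; you are simply more explicit about the reachability assumption and the closure of external contexts under applicative post-composition.
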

\end{toappendix}

Now, we obtain the \emph{runs to evaluations} part of the implementation theorem, which by the theorem about sufficient conditions for implementations (\refthm{abs-impl}) follows from overhead transparency and $\beta$-projection.

\begin{toappendix}
\begin{corollary}[\EXAM runs to external evaluations]
For any \EXAM run $\exec: \compilrel\tm\state \toexam^* \statetwo$ there exists a 
$\toext$-evaluation $\deriv: \tm \toext^* \decode\statetwo$. Moreover, $\size\deriv = \sizebeta\exec$.
\label{coro:exam-correctness}
\end{corollary}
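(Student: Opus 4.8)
The plan is to read the corollary directly off the abstract implementation theorem. By \refthm{abs-impl}, the \emph{runs to evaluations} part of an implementation, together with $\beta$-matching, follows from only two of the four conditions defining an implementation system, namely \emph{overhead transparency} and \emph{$\beta$-projection}; the remaining two (overhead termination and $\beta$-reflection) are irrelevant for this direction. For the pair $(\EXAM, \toext)$ both needed conditions have just been established: overhead transparency is Proposition~\ref{prop:overhead-transparency}, and $\beta$-projection is \refthm{exam-beta-projection}. Hence the corollary is an immediate instance of \refthm{abs-impl}, the only remaining task being to instantiate $\mach := \EXAM$ and $\tostrat := \toext$.

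Unfolding what \refthm{abs-impl} amounts to in this case, the argument is an induction on the length $\size\exec$ of the run $\exec: \compilrel\tm\state \toexam^* \statetwo$. If $\size\exec = 0$ then $\state = \statetwo$, and the initialization constraint on $\compilrel{}{}$ gives $\decode\statetwo =_\alpha \tm$, so the empty external evaluation works and $\size\deriv = 0 = \sizebeta\exec$. If $\size\exec > 0$, split $\exec$ as a prefix $\exec_0 : \compilrel\tm\state \toexam^* \state_0$ followed by a single transition $\state_0 \toexam \statetwo$, and let $\deriv_0 : \tm \toext^* \decode{\state_0}$ with $\size{\deriv_0} = \sizebeta{\exec_0}$ be given by the induction hypothesis. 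If the last transition is overhead, $\state_0 \tomacho \statetwo$, then $\decode{\state_0} = \decode\statetwo$ by overhead transparency, so $\deriv := \deriv_0$ already reaches $\decode\statetwo$ and $\sizebeta\exec = \sizebeta{\exec_0} = \size{\deriv_0} = \size\deriv$. If it is the $\beta$-transition, $\state_0 \toexambeta \statetwo$, then $\decode{\state_0} \toext \decode\statetwo$ by \refthm{exam-beta-projection}, so extending $\deriv_0$ with this one external step yields $\deriv : \tm \toext^* \decode\statetwo$ with $\size\deriv = \size{\deriv_0} + 1 = \sizebeta{\exec_0} + 1 = \sizebeta\exec$.

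Two remarks. First, for the statement to be well-formed we need $\decode\statetwo$ to be an honest $\lambda$-term and not a genuine multi-context, since $\toext$ is a relation on $\Terms$; this holds because $\statetwo$ is reachable and reachable states read back to terms (\reflemma{pure_decoding}). Second, there is essentially no obstacle in the corollary itself: it is a routine chaining of per-step facts along a run. All the substance has already been spent on the $\beta$-projection theorem --- which rests on the \EXAM invariants (\reflemma{exam_invariant}) and the external context read-back lemma (\reflemma{almost_pure_decoding}) --- whereas here we only combine it with overhead transparency. If one insists on naming a delicate point, it is the bookkeeping that overhead steps leave the read-back unchanged and do not touch the $\beta$-count, while $\beta$-steps perform exactly one external step, which is precisely what the two invoked properties say.
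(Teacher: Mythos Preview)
Your proposal is correct and matches the paper's own proof, which simply invokes \refthm{abs-impl} together with overhead transparency (\refprop{overhead-transparency}) and $\beta$-projection (\refthm{exam-beta-projection}). Your unfolding of the induction is slightly different in shape from the paper's proof of \refthm{abs-impl} itself (you induct on $\size\exec$ and peel off the last transition, whereas the paper inducts on $\sizebeta\exec$ and isolates the last $\beta$-transition followed by a tail of overhead steps), but both are routine and yield the same conclusion.
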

\end{toappendix}

Last, we analyze final states.

\begin{toappendix}
\begin{proposition}[Characterization of final states]
Let $\state$ be a reachable final state. Then there is a normal form $\nf$ such that $\state = \examstate\nf\pempty\genv$  and $\decode\state=\nf$. Moreover, if $\state \streq\statetwo$ then $\statetwo$ is final and $\decode\statetwo=\nf$.
\label{prop:halt}
\end{proposition}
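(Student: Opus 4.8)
The plan is to show that, as long as the pool is non-empty, some transition always fires; to deduce that a reachable final state must have an empty pool; and then to read off its approximant as a normal form using the invariants. For the first point, let $\state = \examstate\boapr\pool\env$ be reachable with $\pool$ non-empty. By the choice constraint of the pool interface (\refdef{pools}), selection applies and returns a job $\cplace\name\tm\stack$ together with a residual pool $\pooltwo$; since jobs carry pre-terms, not multi-contexts, $\tm$ is a variable, an abstraction, or an application, and I case-split accordingly: an application fires $\toexamap$; an abstraction over a non-empty stack fires $\toexambeta$; an abstraction over the empty stack fires $\toexamlam$; a variable in $\dom\env$ fires $\toexamsub$; and a variable not in $\dom\env$ fires $\toexamvar$ with $n = \size\stack$, also covering $n = 0$. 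What remains is to check that each such transition really is applicable, i.e.\ that its side conditions hold and its right-hand side is a well-formed state; this is routine, using that fresh names and fresh renamings always exist, the name invariants (\reflemma{exam_invariant}, which in particular give $\name \notin \names\pooltwo$ after selection, so that the dropping function $\push$ is defined), and closure of approximants under inner extension (\reflemma{inn-extension-approx}, for the new approximants produced by $\toexamlam$ and $\toexamvar$). Hence $\state$ is not final, so a reachable final state must have $\pool = \pempty$.

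Next, let $\state = \examstate\boapr\pempty\env$ be a reachable final state. Its first component $\boapr$ is an approximant by definition of states, and the bijection invariant (\reflemma{exam_invariant}.3) gives $\fn\boapr = \names\pempty = \emptyset$, so $\boapr$ has no named holes; by \reflemma{approx-no-names-normal} it is therefore a normal form, which I call $\nf$, and thus $\state = \examstate\nf\pempty\env$. Since $\supp\pempty = \emptyset$, the read-back computes to $\decode\state = \decpool{\nf}{\decenv{\supp\pempty}\env} = \decpool\nf\emptyset = \nf$, the last equality because plugging over the empty family of jobs is the identity and $\nf$ has no named holes.

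Finally, for the statement up to $\streq$: since $\streq$ alters only the environment, $\state \streq \statetwo$ forces $\statetwo = \examstate\nf\pempty\envtwo$ with $\env \approx \envtwo$; in particular $\statetwo$ has an empty pool. As selection requires the chosen job to lie in the support, no job can be selected from $\pempty$, so no transition applies to $\statetwo$ and $\statetwo$ is final; and, exactly as above, $\decode\statetwo = \decpool\nf{\decenv{\supp\pempty}\envtwo} = \decpool\nf\emptyset = \nf$. I do not expect any genuinely hard step: the substance is the exhaustive five-way case analysis of the first paragraph, together with the routine verification---provided by the name invariants and the pool interface---that the side conditions and pool operations of the transitions are always satisfiable on reachable states.
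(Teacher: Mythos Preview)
Your proposal is correct and follows essentially the same route as the paper's proof: argue that a non-empty pool always enables some transition (via the choice constraint and a case analysis on the selected job), conclude the pool is empty, then use the bijection invariant and \reflemma{approx-no-names-normal} to identify the approximant with a normal form and compute the read-back. Your version is more explicit than the paper's---you spell out the five-way case split and the well-formedness checks (fresh names, $\name\notin\names\pooltwo$, \reflemma{inn-extension-approx})---whereas the paper compresses this into one sentence; for the $\streq$ part the paper additionally remarks that $\dom\env=\dom\envtwo$, but your direct argument (empty pool $\Rightarrow$ no selection $\Rightarrow$ final, and the read-back over an empty support ignores the environment) is cleaner and suffices.
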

\end{toappendix}

\subsection{Leftmost Runs to Leftmost Evaluations}

Now, we instantiate the \EXAM with the stack template for pools, obtaining a machine implementing leftmost evaluation.

\paragraph{\LEXAM.} Let the \emph{\LEXAM} be the deterministic variant of the \EXAM adopting the stack template for pools, that is, such that:
\begin{itemize}
\item Pools are lists $\job_{\name_1}\cons\mydots\cons \job_{\name_n}$ of named jobs, $\new{\job_\name}$ creates the list containing only $\job_\name$, and the support of a pool is the set of jobs in the list;
\item Selection pops from the pool, that is, if $\pool = \job_{\name_1}\cons\mydots\cons \job_{\name_n}$ then $\place_{\name}\pop\pool$ pops $\place_\name$ from the list $\place_\name\cons\job_{\name_1}\cons\mydots\cons \job_{\name_n}$;
\item Both dropping and adding push on the list, and are inverses of selection.
\end{itemize}

\correction{
\begin{example} 
\label{ex:leftmost-exam}
The \LEXAM run on the same term $\tm \defeq \var (\Id_\vartwo \varthree) (\delta_\varfour \varthree)$ used for the \SEXAM in \refex{set-exam} follows (excluding the first three transitions, that are the same for both machines, as they are actually steps of the MAM).
\begin{center}
\begin{tabular}{c}
$\begin{array}{|c@{\SEP}c@{\SEP}c||@{\hspace{6pt}}c@{\hspace{6pt}}|c@{\hspace{6pt}}clll}
\cline{1-4}
	\textsc{Approx.} & \textsc{Pool} & \textsc{Env}&\textsc{Trans.}&& \textsc{Selected Job}
	\\
    \hline
    \hline
    \var\boctxhole\nametwo\boctxhole\namethree
    & \cplace{\nametwo}{\Id_\vartwo \varthree}{\stempty} \cons \cplace{\namethree}{ \delta_\varfour \varthree}{\stempty}
    & \eempty
    & \toexamap
    && \nametwo
    \\
    \var\boctxhole\nametwo\boctxhole\namethree
    & \cplace{\nametwo}{\Id_\vartwo }{\varthree} \cons \cplace{\namethree}{ \delta_\varfour \varthree}{\stempty}
    & \eempty
    & \toexambeta
    && \nametwo
    \\
    \var\boctxhole\nametwo\boctxhole\namethree
    & \cplace{\nametwo}{\vartwo }{\stempty} \cons \cplace{\namethree}{ \delta_\varfour \varthree}{\stempty}
    & \esub\vartwo\varthree
    & \toexamsub
    && \nametwo
    \\
    \var\boctxhole\nametwo\boctxhole\namethree
    & \cplace{\nametwo}{\varthree}{\stempty} \cons \cplace{\namethree}{ \delta_\varfour \varthree}{\stempty}
    & \esub\vartwo\varthree
    & \toexamvar
    && \nametwo
    \\
    \var\varthree\boctxhole\namethree
    & \cplace{\namethree}{ \delta_\varfour \varthree}{\stempty}
    & \esub\vartwo\varthree
    & \toexamap
    && \namethree
    \\
    \var\varthree\boctxhole\namethree
    & \cplace{\namethree}{ \delta_\varfour}{ \varthree}
    & \esub\vartwo\varthree
    & \toexambeta
    && \namethree
    \\
    \var\varthree\boctxhole\namethree
    & \cplace{\namethree}{ \varfour\varfour}{ \stempty}
    & \esub\varfour\varthree \cons \esub\vartwo\varthree
    & \toexamap
    && \namethree
    \\
    \var\varthree\boctxhole\namethree
    & \cplace{\namethree}{ \varfour}{ \varfour}
    & \esub\varfour\varthree \cons \esub\vartwo\varthree
    & \toexamsub
    && \namethree
    \\
    \var\varthree\boctxhole\namethree
    & \cplace{\namethree}{ \varthree}{ \varfour}
    & \esub\varfour\varthree \cons \esub\vartwo\varthree
    & \toexamvar
    && \namethree
    \\
    \var\varthree(\varthree\boctxhole{\namethree'})
    & \cplace{\namethree'}{ \varfour}{ \stempty}
    & \esub\varfour\varthree \cons \esub\vartwo\varthree
    & \toexamsub
    && \namethree'
    \\
    \var\varthree(\varthree\boctxhole{\namethree'})
    & \cplace{\namethree'}{ \varthree}{ \stempty}
    & \esub\varfour\varthree \cons \esub\vartwo\varthree
    & \toexamvar
    && \namethree'
    \\
    \var\varthree(\varthree\varthree)
    & \stempty
    & \esub\varfour\varthree \cons \esub\vartwo\varthree
    & 
    && 
    \\
	\cline{1-4}
  \end{array}
  $
  \end{tabular}
\end{center}

\end{example}
}
Proving that \LEXAM runs read back to leftmost evaluations requires a new $\beta$-projection property. Its proof is based on the following quite complex invariant about instantiations of the approximants computed by the \LEXAM. \emph{Terminology}: a context $\ctx$ is \emph{non-applying} if $\ctx \neq \ctxtwop{\ctxhole\tm}$ for all $\ctxtwo$ and $\tm$, that is, it does not apply the hole to an argument.

\begin{toappendix}
\begin{proposition}[Leftmost context invariant]
Let \label{prop:lexam-complex-invariant}
\begin{itemize}
\item $\state=\examstate\boapr{\job_{\name_1}\cons\mydots\cons \job_{\name_n}}\genv$ be a reachable \LEXAM state with $n\geq 1$; 
\item $\nf_j$ be a normal form for all $j$ such that $1\leq j<n$,
\item $\tm_j$ be a term for all $j$ such that $1< j\leq n$, and 
\item $\lctx$ be a non-applying leftmost context. 
\end{itemize}
Then $\ctx_{\nf_1,\mydots,\nf_{i-1}\mid \lctx\mid \tm_{i+1},\mydots,\tm_n}^\state \defeq \boapr\tsub{\name_1}{\nf_1}\mydots\tsub{\name_{i-1}}{\nf_{i-1}}\tsub{\name_i}{\lctx} \tsub{\name_{i+1}}{\tm_{i+1}}\mydots\tsub{\name_{n}}{\tm_n}$
 is a non-applying leftmost context for every $i\in\set{1,\ldots, n}$.
\end{proposition}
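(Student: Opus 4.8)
The plan is to prove the statement by induction on the length of a \LEXAM run reaching $\state$ (i.e., on reachability of $\state$), keeping inside the induction hypothesis the universal quantification over the index $i$, the normal forms $\nf_j$, the terms $\tm_j$, and the non-applying leftmost context $\lctx$. As is done elsewhere in the paper, one pairs the statement with its rigid analogue — the same pattern plugged into a rigid approximant $\soapr$ yields a non-applying \emph{neutral} context — and carries the two simultaneously through the induction. Throughout, the name invariants of \reflemma{exam_invariant} (uniqueness and freshness of hole names, and $\fn\boapr=\names\pool$) guarantee that the pluggings $\tsub{\name_j}{\cdot}$ are well-defined and pairwise non-interfering, and reachability supplies the structural shape assumed in the statement, which each transition maintains as described below.

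Base case: the initial state has $\boapr=\boctxhole\name$ and a single job, so $n=1$, only $i=1$ is possible, and $\ctx^\state=\boctxhole\name\tsub\name\lctx=\lctx$, which is a non-applying leftmost context by hypothesis. Inductive step, with $\state'\toexam\state$: for $\toexamap$, $\toexambeta$ and $\toexamsub$ the selected job is the head of the list and is dropped back at the head, so $\boapr$ and the list $\name_1,\dots,\name_n$ are unchanged, whence $\ctx^\state=\ctx^{\state'}$ and we conclude by the induction hypothesis. For $\toexamlam$ the list is again unchanged while $\boapr^\state=\boapr^{\state'}\tsub{\name_1}{\la\var{\boctxhole{\name_1}}}$, so, composing with the plugging of $\name_1$, position $1$ of $\ctx^\state$ carries $\la\var{E}$, with $E=\lctx$ if $i=1$ and $E=\nf_1$ if $i>1$. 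If $i=1$ then $\la\var\lctx$ is again a non-applying leftmost context (a leftmost context by the leftmost-context production $\la\var\lctx$, and non-applying since $\lctx$ is), so we conclude by the induction hypothesis for $\state'$ at position $1$ with $\la\var\lctx$ for $\lctx$; if $i>1$ then $\la\var{\nf_1}$ is a normal form, so we conclude by the induction hypothesis for $\state'$ at position $i$ with $\la\var{\nf_1}$ for $\nf_1$.

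The only substantial case is $\toexamvar$, fired on the head job, whose term is a variable $\var\notin\dom\genv$ carrying $m\ge 0$ arguments on its stack. Writing $\mu$ for the consumed head-job name, the transition sets $\boapr^\state=\boapr^{\state'}\tsub{\mu}{\var\,\boctxhole{\name_1}\dots\boctxhole{\name_m}}$ with $\name_1,\dots,\name_m$ fresh, and — by definition of the stack template — makes the pool list of $\state$ equal to $\name_1,\dots,\name_m$ followed by the tail $\name_{m+1},\dots,\name_n$ of $\state'$'s list (so $\state'$'s list is $\mu,\name_{m+1},\dots,\name_n$), the new argument-jobs being prepended in the same order as their holes in the approximant. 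Fix the index $i$ of the statement for $\state$. If $i\le m$, the $\lctx$-slot is one of the new holes: plugging $\name_1,\dots,\name_m$ with $\nf_1,\dots,\nf_{i-1},\lctx,\tm_{i+1},\dots,\tm_m$ turns the inserted subterm into $N\defeq\var\,\nf_1\cdots\nf_{i-1}\,\lctx\,\tm_{i+1}\cdots\tm_m$, which by the productions $\neu::=\var\mid\neu\nf$ and $\nctx::=\neu\lctx\mid\nctx\tm$ — together with non-applyingness of $\lctx$ — is a non-applying leftmost context; since then $\ctx^\state=\boapr^{\state'}\tsub{\mu}{N}\tsub{\name_{m+1}}{\tm_{m+1}}\cdots\tsub{\name_n}{\tm_n}$, we conclude by the induction hypothesis for $\state'$ at position $1$ with $N$ for $\lctx$. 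If $i>m$, all the new holes receive the normal forms $\nf_1,\dots,\nf_m$, so the inserted subterm becomes the normal form $N'\defeq\var\,\nf_1\cdots\nf_m$ (a neutral term) and the $\lctx$-slot is the old hole $\name_i$, sitting at position $i-m+1\ge 2$ of $\state'$'s list; since then $\ctx^\state=\boapr^{\state'}\tsub{\mu}{N'}\tsub{\name_{m+1}}{\nf_{m+1}}\cdots\tsub{\name_i}{\lctx}\cdots\tsub{\name_n}{\tm_n}$, we conclude by the induction hypothesis for $\state'$ at position $i-m+1$, placing the normal form $N'$ at position $1$ (for $m=0$ this is the instance $N'=\var$).

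The main obstacle is exactly this last case — or rather its bookkeeping: one must pin down that the stack template prepends the new jobs in the order matching the holes just written in $\boapr$, so that the indexing $\name_1,\dots,\name_n$ of the statement remains faithful across the transition; then case-split on whether the moving $\lctx$-slot lands on a new or an old hole; and, in the former subcase, recognise $\var\,\nf_1\cdots\lctx\cdots\tm_m$ as a non-applying leftmost context through the neutral-context productions — which is precisely the point where it matters that the slots to the left of the active argument carry \emph{normal forms} rather than arbitrary terms.
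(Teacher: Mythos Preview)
Your proposal is correct and follows essentially the same route as the paper's proof: induction on the length of the run, with the transitions $\toexamap$, $\toexambeta$, $\toexamsub$ trivial, $\toexamlam$ handled by the case split $i=1$ versus $i>1$, and $\toexamvar$ handled by the case split $i\le m$ versus $i>m$, each time reducing to the induction hypothesis for the predecessor state with a suitably repackaged non-applying leftmost context (respectively, normal form) plugged at the head position. The one superfluous ingredient in your write-up is the pairing with a ``rigid analogue'': that technique is used in the paper for lemmas proved by structural induction on approximants (such as \reflemma{almost_pure_decoding}), but here the induction is on run length and the first component of a state is always an approximant $\boapr$, so no mutual strengthening is needed---the paper's proof does not introduce one.
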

\end{toappendix}

From the invariant, it follows that a reachable state less the first job reads back to a leftmost context, which implies the leftmost variant of $\beta$-projection, in turn allowing us to project \LEXAM runs on leftmost evaluations.
\begin{toappendix}
\begin{lemma}[Leftmost context read-back]
\label{l:leftmost-context-decoding}
Let $\state=\examstate\boapr{\job_{\name_1}\cons\mydots\cons\job_{\name_n}}\genv$ be a reachable \LEXAM state with $n\geq 1$ and $\state_\bullet \defeq \examstate\boapr{\job_{\name_2}\cons\mydots\cons\job_{\name_n}}\genv$. Then $\decode{\state_\bullet}\tsub{\name_1}{\ctxhole}$ is a non-applying leftmost context.
\end{lemma}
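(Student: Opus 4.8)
The plan is to obtain this lemma as a short consequence of the leftmost context invariant (Proposition~\ref{prop:lexam-complex-invariant}): essentially all one has to do is rewrite $\decode{\state_\bullet}\tsub{\name_1}{\ctxhole}$ so that it literally matches one of the contexts $\ctx^\state_{\ldots}$ considered there, namely the one with $i = 1$ and the context hole placed in the first slot.

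First I would unfold the \EXAM read-back of $\state_\bullet$. Set $u_j \defeq \decode{\decenv{\job_{\name_j}}{\genv}}$ for $2 \leq j \leq n$; each $u_j$ is a term, since reading back a job and applying an environment to it never introduce holes. The support of the pool of $\state_\bullet$ is $\set{\job_{\name_2},\ldots,\job_{\name_n}}$, so by the definition of the read-back
\[
\decode{\state_\bullet} \;=\; \decpool{\boapr}{\decenv{\set{\job_{\name_2},\ldots,\job_{\name_n}}}{\genv}} \;=\; \boapr\tsub{\name_2}{u_2}\mydots\tsub{\name_n}{u_n},
\]
hence $\decode{\state_\bullet}\tsub{\name_1}{\ctxhole} = \boapr\tsub{\name_2}{u_2}\mydots\tsub{\name_n}{u_n}\tsub{\name_1}{\ctxhole}$. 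Since $\state$ is reachable, by the \EXAM invariants (Lemma~\ref{l:exam_invariant}) the names $\name_1,\ldots,\name_n$ are pairwise distinct and $\fn{\boapr} = \set{\name_1,\ldots,\name_n}$, while none of $\ctxhole, u_2,\ldots,u_n$ contains any of these names; therefore the pluggings at distinct named holes commute, giving
\[
\decode{\state_\bullet}\tsub{\name_1}{\ctxhole} \;=\; \boapr\tsub{\name_1}{\ctxhole}\tsub{\name_2}{u_2}\mydots\tsub{\name_n}{u_n}.
\]

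Next I would apply Proposition~\ref{prop:lexam-complex-invariant} to the reachable \LEXAM state $\state$ with $i \defeq 1$, instantiating its leftmost context by $\ctxhole$, its terms at slots $2,\ldots,n$ by $u_2,\ldots,u_n$, and its normal forms at slots $1,\ldots,n-1$ arbitrarily (they do not occur in the resulting context when $i = 1$). The hypotheses hold: $n \geq 1$ and $\state$ is reachable by assumption; the $u_j$ are terms; and $\ctxhole$ is a non-applying leftmost context, since $\ctxhole$ is already a neutral context (the base case of $\nctx$, hence a leftmost context) and it is not of the form $\ctxtwop{\ctxhole\,\tm}$. The proposition then yields that $\boapr\tsub{\name_1}{\ctxhole}\tsub{\name_2}{u_2}\mydots\tsub{\name_n}{u_n}$ is a non-applying leftmost context, which by the displayed equality is exactly $\decode{\state_\bullet}\tsub{\name_1}{\ctxhole}$, as required.

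The argument is short because the genuine difficulty is already packaged into Proposition~\ref{prop:lexam-complex-invariant}; the only points needing a little care here are the commutation of pluggings at distinct named holes — where the uniqueness and bijection invariants, together with the hole-freeness of the plugged objects, are used — and the observation that the empty context qualifies as a non-applying leftmost context and so may play the role of the invariant's leftmost context. The boundary case $n = 1$ is harmless: then the pool of $\state_\bullet$ is empty, $\decode{\state_\bullet} = \boapr$, and the reasoning specializes verbatim.
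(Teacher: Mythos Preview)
Your proposal is correct and follows essentially the same route as the paper: unfold the read-back of $\state_\bullet$, commute the plugging at $\name_1$ past the others (using uniqueness/bijection), and then invoke Proposition~\ref{prop:lexam-complex-invariant} with $i=1$, $\lctx=\ctxhole$, and $\tm_j=u_j$. If anything, you are slightly more explicit than the paper, which glosses over the commutation of pluggings and the non-applying-leftmost status of $\ctxhole$.
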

\end{toappendix}

\begin{toappendix}
\begin{proposition}[Leftmost $\beta$-projection]
Let $\state$ be a reachable \LEXAM state. If $\state \tomachb \statetwo$ then $\decode\state \tolo \decode\statetwo$.\label{prop:leftmost-beta-proj}
\end{proposition}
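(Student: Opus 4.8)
Since the \LEXAM uses the stack template for pools, the job selected by any transition is the head of the pool list, so a $\beta$-transition is necessarily of the form
\[
  \state = \examstate{\boapr}{\cplace{\name_1}{\lam{\var}{\tm}}{\tmtwo\cons\stack}\cons\job_{\name_2}\cons\mydots\cons\job_{\name_n}}{\genv}
  \;\toexambeta\;
  \examstate{\boapr}{\cplace{\name_1}{\tm}{\stack}\cons\job_{\name_2}\cons\mydots\cons\job_{\name_n}}{\esub{\var}{\tmtwo}\cons\genv} = \statetwo
\]
with $n\geq1$. Set $\genv' \defeq \esub{\var}{\tmtwo}\cons\genv$, $\state_\bullet \defeq \examstate{\boapr}{\job_{\name_2}\cons\mydots\cons\job_{\name_n}}{\genv}$, and $r_i \defeq \decode{\decenv{\job_{\name_i}}{\genv}}$. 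Because $\name_1$ occurs exactly once in $\boapr$ (uniqueness and bijection, \reflemma{exam_invariant}) and the $r_i$ contain no hole names, the read-back factors as $\decode{\state} = C[r_1]$ with $C \defeq \decode{\state_\bullet}\tsub{\name_1}{\ctxhole}$; moreover, by the freshness and local-scope invariants (\reflemma{exam_invariant}) the variable $\var$ occurs nowhere but inside the body $\tm$ of the reduced abstraction, so the new entry $\esub{\var}{\tmtwo}$ acts trivially on $\boapr$ and on $\job_{\name_2},\mydots,\job_{\name_n}$, giving likewise $\decode{\statetwo} = C[r_1']$ with $r_1' \defeq \decode{\decenv{\cplace{\name_1}{\tm}{\stack}}{\genv'}}$.

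Next I compute $r_1$ and $r_1'$. Write $u \defeq \decenv{\tm}{\genv}$, $w \defeq \decenv{\tmtwo}{\genv}$, and $\decenv{\stack}{\genv} = s_1\cons\mydots\cons s_k$. The same invariants ($\var\notin\dom\genv$, $\var$ fresh for $\tmtwo$ and $\stack$, and $\var$ not in the codomain of $\genv$) plus the basic read-back laws — pushing a substitution under an abstraction, the substitution lemma $\decenv{(\tm\sub{\var}{\tmtwo})}{\genv} = u\sub{\var}{w}$, and $\decenv{\stack}{\genv'} = \decenv{\stack}{\genv}$ — yield $\decenv{\lam{\var}{\tm}}{\genv} = \lam{\var}{u}$ and hence
\[
  r_1 = \decst{(\lam{\var}{u})\,w}{\decenv{\stack}{\genv}} = ((\lam{\var}{u})\,w)\,s_1\mydots s_k ,
  \qquad
  r_1' = \decst{(u\sub{\var}{w})}{\decenv{\stack}{\genv}} = (u\sub{\var}{w})\,s_1\mydots s_k .
\]
Thus $r_1 = \apctx[(\lam{\var}{u})\,w]$ and $r_1' = \apctx[u\sub{\var}{w}]$ for the applicative context $\apctx \defeq \ctxhole\,s_1\mydots s_k$, and $(\lam{\var}{u})\,w \rtob u\sub{\var}{w}$. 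Putting $E \defeq C[\apctx]$ we get $\decode{\state} = E[(\lam{\var}{u})\,w]$ and $\decode{\statetwo} = E[u\sub{\var}{w}]$, so — since $\tolo$ is the context closure of $\rtob$ under leftmost contexts — it suffices to show that $E$ is a leftmost context.

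By \reflemma{leftmost-context-decoding}, $C$ is a \emph{non-applying} leftmost context. Hence the proof is finished by the following purely syntactic fact, proved by a routine induction on the context, mutually with the companion statement "if moreover $\lctx$ is a non-applying \emph{neutral} context then $\lctx[\apctx]$ is a neutral context" (needed for the $\nctx\,\tm$ case): \emph{if $\lctx$ is a non-applying leftmost context and $\apctx = \ctxhole\,s_1\mydots s_k$ is an applicative context, then $\lctx[\apctx]$ is a leftmost context.} Indeed, if $\lctx = \ctxhole$ then $\lctx[\apctx] = \apctx$ is neutral; if $\lctx = \neu\,\lctx'$ or $\lctx = \lam{\var}{\lctx'}$ then $\lctx'$ is again non-applying and one concludes by the inductive hypothesis; and if $\lctx = \nctx'\,\tm$ then necessarily $\nctx'\neq\ctxhole$ (else $\lctx$ would be applying) and $\nctx'$ is non-applying, so by the companion statement $\nctx'[\apctx]$ is a neutral context and therefore $\lctx[\apctx] = \nctx'[\apctx]\,\tm$ is neutral as well. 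Applying this to $C$ and $\apctx$ shows $E = C[\apctx]$ is leftmost, which concludes the proof.

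The main obstacle is twofold. First, the read-back bookkeeping of the first two paragraphs: checking that extending the environment with $\esub{\var}{\tmtwo}$ leaves $\boapr$ and the jobs $\job_{\name_2},\mydots,\job_{\name_n}$ untouched, and that read-back commutes with meta-level substitution, rests entirely on the freshness and local-scope invariants of \reflemma{exam_invariant}. Second, the context-composition fact: although its conclusion is weak ("$\lctx[\apctx]$ is \emph{leftmost}", not necessarily non-applying), proving it requires carefully tracking the applying/non-applying and neutral/leftmost distinctions, precisely because $\apctx$ is itself applying whenever $k\geq1$ — the case that occurs when the reduced abstraction still carries arguments on its job's stack.
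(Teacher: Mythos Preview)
Your proof is correct and follows the same approach as the paper's: the paper simply says to rerun the \EXAM $\beta$-projection argument (\refthm{exam-beta-projection}) with \reflemma{leftmost-context-decoding} in place of \reflemma{almost_pure_decoding}, and your computation unfolds exactly that chain of equalities. The one difference is that you spell out the closure of (non-applying) leftmost contexts under composition with applicative contexts, a step the paper leaves implicit; incidentally, that closure holds for arbitrary leftmost contexts by the same mutual induction, so the non-applying hypothesis you carry through the last paragraph is not actually needed.
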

\end{toappendix}

\begin{toappendix}
\begin{corollary}[\LEXAM runs to leftmost evaluations]
\label{coro:lexam-correctness}
For any \LEXAM run $\exec: \compilrel\tm\state \toexam^* \statetwo$ there exists a 
$\tolo$-evaluation $\deriv: \tm \tolo^* \decode\statetwo$. Moreover, $\size\deriv = \sizebeta\exec$.
\end{corollary}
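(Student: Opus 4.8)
The plan is to read the statement off the abstract implementation recipe (Theorem~\ref{thm:abs-impl}), using only its \emph{runs to evaluations} half, which — as stated there — requires just $\beta$-projection and overhead transparency, and not overhead termination or $\beta$-reflection. So I would (i) check overhead transparency for the \LEXAM, (ii) invoke leftmost $\beta$-projection, and (iii) run the generic induction of the abstract theorem with $\tostrat = {\tolo}$.

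\textbf{The two ingredients.} For overhead transparency: the \LEXAM is literally the \EXAM instantiated with the stack pool template, and the overhead transitions $\toexamap,\toexamsub,\toexamlam,\toexamvar$ together with the read-back function $\decode\cdot$ depend on the pool only through the pool interface of Definition~\ref{def:pools}; hence Proposition~\ref{prop:overhead-transparency} applies to the \LEXAM verbatim. For $\beta$-projection: I would simply cite the already-established Proposition~\ref{prop:leftmost-beta-proj}, which says that a $\toexambeta$ step between reachable \LEXAM states projects onto a $\tolo$ step on the read-backs. This is where the genuine work sits — namely the leftmost-context invariant of Proposition~\ref{prop:lexam-complex-invariant} and Lemma~\ref{l:leftmost-context-decoding} — but for the corollary it is a black box.

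\textbf{The induction.} Given these, argue by induction on $\size\exec$ for runs $\exec:\compilrel\tm\state\toexam^*\statetwo$. The empty run: $\statetwo=\state$, and by the initialization constraint $\decode\state=_\alpha\tm$, so the empty $\tolo$-evaluation works, with $\sizebeta\exec=0=\size\deriv$. For $\exec=\exec';\,(\statetwo'\toexam\statetwo)$: the prefix $\exec'$ is again a run from an initial state, so $\statetwo'$ and $\statetwo$ are reachable (this is exactly what Proposition~\ref{prop:leftmost-beta-proj} needs), and by the induction hypothesis there is $\deriv':\tm\tolo^*\decode{\statetwo'}$ with $\size{\deriv'}=\sizebeta{\exec'}$. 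If the last transition is overhead, then $\decode{\statetwo'}=\decode\statetwo$ by overhead transparency and $\sizebeta\exec=\sizebeta{\exec'}$, so $\deriv'$ already is the required $\deriv$. If it is $\toexambeta$, then $\decode{\statetwo'}\tolo\decode\statetwo$ by leftmost $\beta$-projection, so appending this step to $\deriv'$ gives $\deriv$ with $\size\deriv=\size{\deriv'}+1=\sizebeta{\exec'}+1=\sizebeta\exec$.

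\textbf{Main obstacle.} Honestly there is none internal to this proof: it is the specialization to $(\LEXAM,\tolo)$ of the argument behind Theorem~\ref{thm:abs-impl} (or, more slickly, a literal appeal to its \emph{runs to evaluations} half). The only point to be careful about is that leftmost $\beta$-projection holds only for \emph{reachable} states, so the induction must be carried along runs issued from an initial state — which is precisely how reachability is defined — and not for arbitrary runs. All the real difficulty was discharged upstream in Proposition~\ref{prop:lexam-complex-invariant}.
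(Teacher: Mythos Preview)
Your proposal is correct and takes essentially the same approach as the paper: invoke the \emph{runs to evaluations} half of Theorem~\ref{thm:abs-impl} using overhead transparency (Proposition~\ref{prop:overhead-transparency}, which is template-independent) and leftmost $\beta$-projection (Proposition~\ref{prop:leftmost-beta-proj}). Your explicit unrolling of the induction and your observation that leftmost $\beta$-projection is stated only for \emph{reachable} states---hence one must stay on runs from initial states---is a welcome bit of care that the paper's one-line proof leaves implicit, but it is not a different route.
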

\end{toappendix}

\section{Evaluations to Runs}
Here, we develop the reflection of external evaluations to \EXAM runs. By the abstract recipe for implementation theorems in \refsect{prel-abs-mach}, one needs to prove \emph{overhead termination} and \emph{$\beta$-reflection}. 
At the level of non-determinism, the external strategy is matched by the most permissive scheduling of jobs, that is, the set template. Therefore, we shall prove the result with respect to the \SEXAM.

\paragraph{Overhead Termination.} To prove overhead termination, we define a measure. The measure does not depend on job names nor bound variable names, which is why the definition of the measure replaces them with underscores (and it is well defined even if it uses the renaming $\rename{\genv(\var)}$).

\begin{definition}[Overhead measure]
Let $\job_\name$ be a job and $\genv$ be an environment satisfying the freshness name property (together) of \reflemma{exam_invariant}, and $\state$ be a reachable state. The overhead measures $\omeas{\place_\name, \genv}$ and $\omeas{\state}$ are defined as follows:
\begin{center}
$\begin{array}{rclrcc}
\omeas{\cplace\_{\la\_\tm}{\tmtwo\cons\stack}, \genv} & \defeq & 0
\\
\omeas{\cplace\_{\la\_\tm}{\estack}, \genv} & \defeq & 1+ \omeas{\cplace\_{\tm}{\estack}, \genv}
\\
\omeas{\cplace\_{\tm\tmtwo}{\stack}, \genv} & \defeq & 1+ \omeas{\cplace\_{\tm}{\tmtwo\cons\stack}, \genv}
\\
\omeas{\cplace\_{\var}{\estack}, \genv} & \defeq & 1+ \omeas{\cplace\_{\rename{\genv(\var)}}{\estack}, \genv} & \mbox{if }\var\in\dom\genv
\\
\omeas{\cplace\_{\var}{\tm_1\cons\mydots\cons\tm_n}, \genv} & \defeq & 1+ \sum_{i=1}^n\omeas{\cplace\_{\tm_i}{\estack}, \genv} & \mbox{with }n\geq 0\mbox{, if }\var\notin\dom\genv
\\[5pt]
\omeas{\examstate{\boapr}{\pool}{\genv}} & \defeq & \sum_{\place_\name\in\supp\pool}\omeas{\place_\name,\genv}

\end{array}$
\end{center}
\end{definition}

\begin{toappendix}
\begin{proposition}[Overhead termination]
Let $\state$ be a \SEXAM reachable state. Then $\state\tomacho^{\omeas\state}\statetwo$ with $\statetwo$ $\tomacho$-normal.
\label{prop:overhead-termination}
\end{proposition}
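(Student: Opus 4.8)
The plan is to establish three facts: (i) the overhead measure $\omeas{\cdot}$ is well defined on reachable states; (ii) every overhead transition decreases it by exactly one; and (iii) a reachable state is $\tomacho$-normal if and only if its measure is $0$. The statement then follows by an immediate induction on $\omeas{\state}$. For (i), the only clause of the measure that could compromise well-definedness is the one for a job $\cplace\_{\var}{\stack}$ with $\var\in\dom\genv$, which unfolds to $\cplace\_{\rename{\genv(\var)}}{\stack}$ and could in principle loop. It does not: by the local scope invariant (\reflemma{exam_invariant}.5), in an environment $\esub{\var_1}{\tm_1}\cons\mydots\cons\esub{\var_n}{\tm_n}$ there is no occurrence of $\var_i$ in $\tm_j$ when $i\leq j$, so each environment lookup moves strictly deeper into $\genv$, and since renamings introduce only fresh variables (hence outside $\dom\genv$) the unfolding is well founded. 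Moreover, because the measure abstracts away job names and bound variable names (the underscores in its definition), it is invariant under renaming of bound variables; in particular it does not depend on the fresh names chosen by $\rename{\cdot}$, a point used below.

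For (ii), let $\state$ be reachable and $\state\tomacho\state'$, fired on the selected job; all other jobs and the environment are left unchanged (and so is the approximant, except in $\toexamlam$ and $\toexamvar$, where it is irrelevant to the measure). A case analysis on the overhead transition, reading off the recursive clauses of the measure, yields $\omeas{\state'}=\omeas{\state}-1$. For $\toexamap$ the job $\cplace\_{\tm\tmtwo}{\stack}$ becomes $\cplace\_{\tm}{\tmtwo\cons\stack}$ and the application clause gives the decrease; for $\toexamlam$ the job $\cplace\_{\la\var\tm}{\estack}$ becomes $\cplace\_{\tm}{\estack}$ and the empty-stack abstraction clause applies; for $\toexamsub$ the job $\cplace\_{\var}{\stack}$ with $\var\in\dom\genv$ becomes $\cplace\_{\rename{\genv(\var)}}{\stack}$ and the in-scope variable clause applies, the renaming in the clause and the one performed by the transition producing the same value by the renaming-independence noted in (i); for $\toexamvar$ the job $\cplace\_{\var}{\tm_1\cons\mydots\cons\tm_n}$ with $\var\notin\dom\genv$ is replaced by $\cplace{\nametwo_1}{\tm_1}{\estack},\mydots,\cplace{\nametwo_n}{\tm_n}{\estack}$, of total measure $\sum_{i=1}^{n}\omeas{\cplace\_{\tm_i}{\estack},\genv}$, which is exactly one less than $\omeas{\cplace\_{\var}{\tm_1\cons\mydots\cons\tm_n},\genv}$ (including the case $n=0$, where the job simply vanishes). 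Since $\state'$ is again reachable, its invariants hold and $\omeas{\state'}$ is well defined.

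For (iii) and the conclusion: inspecting the clauses, a job has measure $0$ precisely when it has the form $\cplace\_{\la\var\tm}{\tmtwo\cons\stack}$ (every other clause yields a value $\geq 1$), and these are exactly the jobs on which no overhead transition is enabled---only $\toexambeta$ is. Hence $\omeas{\state}=0$ if and only if every job of $\supp{\pool}$ has this form or $\supp{\pool}=\emptyset$, which holds if and only if $\state$ is $\tomacho$-normal. Now induct on $\omeas{\state}$: if $\omeas{\state}=0$ then $\state$ is $\tomacho$-normal and $\state\tomacho^{0}\state$ works; if $\omeas{\state}>0$ then by (iii) some overhead transition $\state\tomacho\state'$ is available, $\state'$ is reachable with $\omeas{\state'}=\omeas{\state}-1$ by (ii), and the induction hypothesis gives $\state'\tomacho^{\omeas{\state'}}\statetwo$ with $\statetwo$ $\tomacho$-normal, whence $\state\tomacho^{\omeas{\state}}\statetwo$.

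The main obstacle is (i): proving that the recursion defining the measure terminates through the environment-lookup clause---where the local scope invariant is essential---and that it is insensitive to the on-the-fly renaming performed by $\toexamsub$, which is exactly why job names and bound variable names are underscored in the definition. Once (i) is in place, (ii) and (iii) are routine bookkeeping.
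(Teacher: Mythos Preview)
Your proof is correct and follows essentially the same approach as the paper's: both establish well-definedness of $\omeas{\cdot}$ via the local scope invariant (acyclicity of the environment), verify by case analysis on $\toexamap$, $\toexamlam$, $\toexamsub$, $\toexamvar$ that each overhead transition decreases the measure by exactly one, and conclude by descent to measure $0$. Your treatment is somewhat more explicit---you spell out the ``if and only if'' between $\omeas{\state}=0$ and $\tomacho$-normality and the closing induction---whereas the paper leaves the converse implication implicit (it follows from the decrease property, since a further overhead step would make the measure negative); but the argument is the same.
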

\end{toappendix}

\paragraph{Addresses and $\beta$-Reflection.}  For the $\beta$-reflection property, we need a way to connect external redexes on terms with $\beta$-transitions on states. We use \emph{addresses}.

\begin{definition}[Address and sub-term at an address]
An \emph{address} $\addr$ is a string over the alphabet $\set{l,r,\l}$. The sub-term
	$\staddr\tm\addr$ of a term $\tm$ at address $\addr$ is the following partial function (the last case of which means that in any case not covered by the previous ones $\staddr\tm\addr$ is undefined):
	\[\begin{array}{rcl@{\hspace{.9cm}}rcl@{\hspace{.9cm}}rcl@{\hspace{.9cm}}rcllll}
	\staddr\tm\ems & \defeq&  \tm  
	&
	\staddr{(\tm\tmtwo)}{l\cons\addr} & \defeq &\staddr\tm\addr	
	&
	\staddr{(\tm\tmtwo)}{r\cons\addr} & \defeq &\staddr\tmtwo\addr
	&
	\staddr{(\la\var\tm)}{\l\cons\addr} & \defeq &\staddr\tm\addr 
	\\[3pt]
	\staddr\_{c\cons\addr} & \defeq&   \multicolumn{4}{l}{\bot\ \ \mbox{if }c\in\set{l,r,\l}}
	\end{array}\]
 The sub-term $\staddr\ntm\addr$ at $\addr$ of a multi-context is defined analogously. 	\correction{An address $\addr$ is defined in $\tm$ (resp. $\ntm$) if $\staddr\tm\addr\neq\bot$ (resp. $\staddr\ntm\addr\neq\bot$), and undefined otherwise.}
\end{definition}

There is a strong relationship between addresses in the approximant of a state and in the read-back of the state, as expressed by the following lemma. The lemma is then used to prove $\beta$-reflection, from which the \emph{evaluation to runs} part of the implementation theorem follows.
\begin{toappendix}
\begin{lemma}
\label{l:addresses-unfolding}
Let $\state=\examstate\boapr\pool\genv$ be a state and $\addr$ a defined address in $\boapr$. Then $\addr$ is a defined address in $\decode\state$, and $\staddr{\decode\state}\addr$ starts with the same constructor of $\staddr\boapr\addr$ unless $\staddr\boapr\addr$ is a named hole.
\end{lemma}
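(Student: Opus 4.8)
The plan is to reduce the statement to an elementary commutation property between the plugging operation $\tsub{\cdot}{\cdot}$ on multi-contexts and the sub-term-at-address operation $\staddr{\cdot}{\cdot}$, and then read off the conclusion by a case analysis on the constructor of $\staddr\boapr\addr$.

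First recall that, by unfolding the read-back, $\decode\state = \decpool\boapr{\decenv{\supp\pool}\env} = \boapr\tsub{\name_1}{t_1}\mydots\tsub{\name_n}{t_n}$, where $\supp\pool = \set{\jobn{\name_1},\mydots,\jobn{\name_n}}$ and, writing $\jobn{\name_i} = \cplace{\name_i}{\tm_i}{\stack_i}$, we set $t_i \defeq \decode{\decenv{\jobn{\name_i}}\env} = \decst{\decenv{\tm_i}\env}{\decenv{\stack_i}\env}$. Each $t_i$ is a (pre-)term, hence a multi-context \emph{without} named holes; so $\decode\state$ is obtained from the multi-context $\boapr$ by plugging terms into (some of) its named holes. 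Note that phrasing things for arbitrary multi-contexts is convenient: it makes the argument uniform over approximants and rigid approximants, so no mutual induction is needed here.

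Second, I prove the auxiliary claim: \emph{for every multi-context $\ntm$, name $\name$, term $t$, and address $\addr$ defined in $\ntm$, the address $\addr$ is defined in $\ntm\tsub\name t$ and $\staddr{\ntm\tsub\name t}\addr = (\staddr\ntm\addr)\tsub\name t$.} This is a straightforward induction on the string $\addr$. For $\addr = \ems$ both sides equal $\ntm\tsub\name t$. For $\addr = l\cons\addr'$, definedness forces $\ntm = \ntm_1\,\ntm_2$ with $\addr'$ defined in $\ntm_1$; since plugging distributes over application, $\ntm\tsub\name t = (\ntm_1\tsub\name t)\,(\ntm_2\tsub\name t)$, and the claim follows from the induction hypothesis on $\ntm_1$ and $\addr'$; the cases $r\cons\addr'$ and $\l\cons\addr'$ are analogous (plugging distributes over application and abstraction and fixes variables). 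Iterating this over $\name_1,t_1,\mydots,\name_n,t_n$ — which is legitimate since each intermediate object is again a multi-context and each $t_i$ contributes no named holes — yields that $\addr$ is defined in $\decode\state$ and $\staddr{\decode\state}\addr = (\staddr\boapr\addr)\tsub{\name_1}{t_1}\mydots\tsub{\name_n}{t_n}$.

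Finally, I conclude by inspecting $w \defeq \staddr\boapr\addr$, which, $\boapr$ being a multi-context, is one of: a variable $\var$, a named hole $\boctxhole\nametwo$, an abstraction $\lam\var{\boapr'}$, or an application $\boapr_1\,\boapr_2$. Plugging terms into named holes leaves a variable unchanged, and sends an abstraction to an abstraction and an application to an application (as $\tsub{\cdot}{\cdot}$ distributes over these constructors); hence in these three cases $\staddr{\decode\state}\addr$ starts with the same constructor as $w$. The only remaining case is $w = \boctxhole\nametwo$, which is precisely the exception allowed by the statement: if $\nametwo = \name_i$ for some $i$ then $\staddr{\decode\state}\addr = t_i$, which may start with any constructor; if $\nametwo \notin \names\pool$ (which cannot happen for reachable states by the bijection invariant, but is harmless here) it stays the named hole $\boctxhole\nametwo$. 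There is no real obstacle: the one point demanding a little care is making the iteration over several holes precise, and observing that capture-allowing substitution is irrelevant because the argument only concerns the constructor skeleton above the holes, not the free/bound status of variables.
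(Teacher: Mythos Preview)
Your proof is correct and follows essentially the same approach as the paper: the paper's proof is a one-liner observing that $\decode\state$ is obtained from $\boapr$ by plugging terms into its named holes, so all non-hole constructors of $\boapr$ survive at the same address. You have simply made this precise via the auxiliary commutation $\staddr{\ntm\tsub\name t}\addr = (\staddr\ntm\addr)\tsub\name t$ and an explicit case analysis on the head constructor, which is a faithful elaboration of the paper's informal remark.
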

\end{toappendix}

\begin{toappendix}
\begin{proposition}[$\beta$-reflection]
Let $\state$ be a $\tomacho$-normal reachable state. If  $\decode\state \toext \tmtwo$ then there exists $\statetwo$ such that $\state \tomachb \statetwo$ and $\decode\statetwo=\tmtwo$.
\label{prop:beta-reflection}
\end{proposition}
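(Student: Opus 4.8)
The plan is to show that, in a $\tomacho$-normal reachable state $\state = \examstate{\boapr}{\pool}{\genv}$, the external redexes of $\decode\state$ are exactly the head redexes of the jobs in $\pool$, and that contracting such a redex is realized by the matching $\toexambeta$ transition. First I would read off the shape of $\tomacho$-normal states: every job $\cplace{\name}{\tm}{\stack}$ triggers some \EXAM transition --- $\toexamap$ if $\tm$ is an application, $\toexamsub$ or $\toexamvar$ if $\tm$ is a variable, $\toexamlam$ if $\tm$ is an abstraction with empty stack, $\toexambeta$ if $\tm$ is an abstraction with non-empty stack --- so $\state$ is $\tomacho$-normal exactly when every job has the form $\cplace{\name}{\lam{\vartwo}{\tmthree}}{\stack}$ with $\stack$ non-empty, that is, is itself a $\beta$-redex. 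If $\pool$ is empty then $\fn\boapr = \names\pool = \emptyset$ by the bijection invariant (\reflemma{exam_invariant}.3), so $\decode\state = \boapr$ is normal by \reflemma{approx-no-names-normal} and $\decode\state \toext \tmtwo$ cannot occur: the statement is vacuous. Otherwise, write $\names\pool = \set{\name_1,\mydots,\name_n}$; recalling that the first component of a reachable state is an approximant (via \reflemma{inn-extension-approx}) and that these names are its holes (bijection invariant), we get $\decode\state = \boapr\tsub{\name_1}{r_1}\mydots\tsub{\name_n}{r_n}$ with $r_i$ the environment-applied read-back of job $\name_i$. Leaving the $i$-th hole open, the External context read-back lemma (\reflemma{almost_pure_decoding}) yields $\decode\state = \boctx_i\ctxholep{r_i}$ with $\boctx_i$ external; and since job $\name_i$ is a $\beta$-redex, unfolding its read-back gives $r_i = \apctx_i\ctxholep{(\lam{\vartwo_i}{t_i})\,u_i}$, where $t_i$ and $u_i$ are the environment-applied read-backs of its body and of its first stack entry and $\apctx_i$ is the applicative context read off the rest of the stack. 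As an applicative context is a rigid context, and plugging a rigid context into an external context gives an external context (a routine mutual induction on the two grammars), $\boctx_i\ctxholep{\apctx_i}$ is external, hence the head redex $(\lam{\vartwo_i}{t_i})\,u_i$ of each job sits at an external position of $\decode\state$.

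The crux is the converse: every external redex of $\decode\state$ is one of these $n$ head redexes. I would prove it by induction on the approximant $\boapr$ (with the induction hypothesis generalized over approximants carrying an arbitrary subset of the holes $\name_1,\mydots,\name_n$), describing the positions of $\decode\state$ reachable by an external context. If $\boapr = \lam{\var_1\mydots\var_p}{\boctxhole\name}$ then $\decode\state = \lam{\var_1\mydots\var_p}{r}$ with $r$ the single job's read-back, a $\beta$-redex-headed spine $(\lam{\vartwo}{t})\,u\,v_1\mydots v_k$; since $r$ is an application, not an abstraction, an external context reaching inside it is in fact a rigid context, which can only descend the left spine --- it cannot branch into an argument, the spine head not being a variable --- and the only redex among the left-spine nodes is the topmost, $(\lam{\vartwo}{t})\,u$, which is the job's head redex. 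If instead $\boapr = \lam{\var_1\mydots\var_p}{\vartwo\,\boapr_1\mydots\boapr_m}$ then $\decode\state = \lam{\var_1\mydots\var_p}{\vartwo\,w_1\mydots w_m}$ where $w_l$ is $\boapr_l$ with its holes filled by the corresponding job read-backs; an external context either stops at a variable-headed (hence non-redex) spine node, or branches --- possible precisely because the spine head is a rigid term --- into some $w_l$ and continues there as an external context, so the external redexes of $\decode\state$ are the union of those of the $w_l$, and the induction hypothesis (applied to each $\boapr_l$, an approximant carrying a sub-collection of the jobs) concludes. The same conclusion also follows from \reflemma{addresses-unfolding}, which forces the root of any $\beta$-redex of $\decode\state$ to lie strictly inside some job, after which one re-runs the spine analysis inside that job.

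To conclude, suppose $\decode\state \toext \tmtwo$; the contracted redex is external, hence by the crux it is the head redex $(\lam{\vartwo_i}{t_i})\,u_i$ of some job $\name_i$ at position $\boctx_i\ctxholep{\apctx_i}$, so $\tmtwo = \boctx_i\ctxholep{\apctx_i\ctxholep{t_i\isub{\vartwo_i}{u_i}}}$. Let $\statetwo$ be obtained from $\state$ by the $\toexambeta$ transition selecting $\name_i$: writing $\name_i = \cplace{\name_i}{\lam{\vartwo_i}{\tmthree_i}}{\tm_0\cons\stacktwo_i}$ (so that $t_i = \decenv{\tmthree_i}{\genv}$ and $u_i = \decenv{\tm_0}{\genv}$), this replaces the job by $\cplace{\name_i}{\tmthree_i}{\stacktwo_i}$ and prepends $\esub{\vartwo_i}{\tm_0}$ to $\genv$. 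It remains to check $\decode\statetwo = \tmtwo$: by the freshness and local-scope invariants (\reflemma{exam_invariant}.2 and~.5) the variable $\vartwo_i$ occurs only inside $\tmthree_i$, so the new environment entry leaves unchanged the read-backs of $\boapr$, of every other job, and of $\stacktwo_i$, while $\decenv{\tmthree_i}{\esub{\vartwo_i}{\tm_0}\cons\genv} = t_i\isub{\vartwo_i}{u_i}$ since the environment substitution commutes with $\isub{\vartwo_i}{\cdot}$ by freshness; hence the new job $\name_i$ reads back to $\apctx_i\ctxholep{t_i\isub{\vartwo_i}{u_i}}$, and reassembling through $\boctx_i$ gives $\decode\statetwo = \boctx_i\ctxholep{\apctx_i\ctxholep{t_i\isub{\vartwo_i}{u_i}}} = \tmtwo$, as required.

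I expect the main obstacle to be the crux of the second paragraph: precisely pinning down the positions reachable by external contexts in $\decode\state$ requires a careful, mutually inductive case analysis on the grammars of rigid and external contexts --- morally a converse of the composition fact used in the first paragraph, namely that every external decomposition of $\decode\state$ must factor through a hole of $\boapr$ and then through a purely applicative context. Keeping this bookkeeping honest --- the leading abstractions of external contexts, and the degenerate case of a job whose stack has a single entry, where $\apctx_i = \ctxhole$ and $r_i$ is the redex itself --- is where the real work lies; the substitution computation of the last paragraph, though fiddly, is routine once the \EXAM invariants are available.
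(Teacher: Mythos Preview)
Your proposal is correct and follows essentially the same strategy as the paper: identify that $\tomacho$-normality forces every job to be a $\beta$-redex, locate the given external redex inside one of the jobs, argue it must be that job's head redex, fire the corresponding $\toexambeta$, and verify the read-back via the freshness/local-scope invariants (the paper just defers this last step to ``reasoning as for $\beta$-projection'').

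The only real difference is in how the crux is argued. The paper uses \emph{addresses}: by \reflemma{addresses-unfolding} the abstraction of the redex cannot sit at an address defined in $\boapr$ (approximants have no applied abstractions), so its address extends that of some hole $\boctxhole\name$, and then a short spine analysis inside $\decst{(\la\vartwo\decenv\tmthree\genv)}{\decenv{\tmfour\cons\stack}\genv}$ shows it must be the head redex (any deeper position makes the surrounding context non-external). You instead run a structural induction on the streamlined form of the approximant, which is a perfectly good alternative and arguably more self-contained; you even note the address-based route as a second option. Either packaging works, and both rely on the same two observations: approximants contain no applied abstractions, and inside a job the only externally reachable redex is the head one.
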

\end{toappendix}

\begin{toappendix}
\begin{corollary}[Evaluations to runs]
\label{coro:exam-completeness}
For every $\toext$-evaluation $\deriv: \tm \toext^* \tmtwo$ there exists a 
 \SEXAM run $\exec: \compilrel\tm\state \toexam^* \statetwo$ such that $\decode\statetwo = \tmtwo$.
\end{corollary}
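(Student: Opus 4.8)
The plan is to obtain the corollary as an immediate consequence of the abstract implementation theorem \refthm{abs-impl}, by verifying that the \SEXAM and the external strategy $\toext$ form an implementation system in the sense of \refdef{implementation}. Three of the four conditions have already been established for the \SEXAM: overhead transparency (Proposition~\ref{prop:overhead-transparency}) and $\beta$-projection (\refthm{exam-beta-projection}), which are the ones feeding the \emph{runs to evaluations} direction, and overhead termination (Proposition~\ref{prop:overhead-termination}); the fourth, $\beta$-reflection, is Proposition~\ref{prop:beta-reflection}. It is worth stressing that the set template is essential at exactly this point: $\beta$-reflection holds because the \SEXAM may schedule \emph{any} unfinished job, which is what matches the full diamond non-determinism of $\toext$, and it would fail for a deterministic template such as the one of the \LEXAM. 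Given the implementation system, \refthm{abs-impl} yields that the \SEXAM implements $\toext$, and the \emph{evaluations to runs} clause of \refdef{implem} is precisely the statement of the corollary, with the additional $\beta$-matching $\size\deriv = \sizebeta\exec$ coming for free.

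For self-containedness one can also unfold the argument hidden in \refthm{abs-impl} (the ``simple lemma'' mentioned before it). The point is to prove, by induction on $\size\deriv$, the following strengthening: \emph{for every reachable \SEXAM state $\state$ and every evaluation $\deriv : \decode\state \toext^* \tmtwo$ there is a run $\exec : \state \toexam^* \statetwo$ with $\decode\statetwo = \tmtwo$ and $\sizebeta\exec = \size\deriv$}; the corollary is then the instance where $\state$ is an initial state for $\tm$, using the initialization constraint $\decode\state =_\alpha \tm$. If $\deriv$ is empty, take $\exec$ empty. If $\deriv : \decode\state \toext \tmthree \toext^* \tmtwo$, first apply overhead termination (Proposition~\ref{prop:overhead-termination}) to reduce $\state$ to a $\tomacho$-normal reachable state $\state'$, and overhead transparency (Proposition~\ref{prop:overhead-transparency}) to conclude $\decode{\state'} = \decode\state$, hence $\decode{\state'} \toext \tmthree$; then $\beta$-reflection (Proposition~\ref{prop:beta-reflection}) provides $\state'\tomachb\state''$ with $\decode{\state''} = \tmthree$; since $\state''$ is again reachable and $\decode{\state''} \toext^* \tmtwo$ has length $\size\deriv-1$, the induction hypothesis applies, and concatenating the overhead run $\state\tomacho^*\state'$, the transition $\state'\tomachb\state''$, and the run from $\state''$ gives the desired $\exec$, with $\sizebeta\exec = 1 + (\size\deriv-1) = \size\deriv$.

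At this stage there is no genuinely hard step: all the heavy lifting (the \EXAM invariants, the external-context read-back, and the overhead measure) lives in the four implementation-system properties already proved. The only mild points to keep in mind are (i) strengthening the inductive statement to arbitrary reachable states rather than just initial ones—otherwise the induction hypothesis cannot be reinvoked after a single $\beta$-transition—together with the routine observation that $\state''$, being obtained from a reachable state by a machine transition, is itself reachable, so that Propositions~\ref{prop:overhead-transparency}, \ref{prop:overhead-termination} and~\ref{prop:beta-reflection} remain applicable along the induction; and (ii) the $\alpha$-equivalence in the base case, where $\decode\state =_\alpha \tm$ rather than $\decode\state = \tm$, which is absorbed by the standing convention on initial states and affects only the first state, not the read-back of $\statetwo$.
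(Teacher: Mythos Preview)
Your proposal is correct and takes essentially the same approach as the paper: invoke the abstract implementation theorem (\refthm{abs-impl}) using overhead termination (\refprop{overhead-termination}) and $\beta$-reflection (\refprop{beta-reflection}). Your optional unfolding is also fine; the only cosmetic difference is that you decompose the evaluation as ``first step $+$ rest'' (hence the strengthening to arbitrary reachable states), whereas the paper's proof of \refthm{abs-impl} decomposes it as ``prefix $+$ last step'' and so avoids that strengthening, but both routes are standard and equivalent.
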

\end{toappendix}

A similar result for leftmost evaluation and the \LEXAM follows more easily from the characterization of final states (\refprop{halt}), overhead termination (\refprop{overhead-termination}), and determinism of the \LEXAM---this is the standard pattern for deterministic strategies and machines, used for instance by Accattoli et al. for their machine for leftmost evaluation \cite{DBLP:conf/aplas/AccattoliBM15}.

\paragraph{Names and Addresses.} It is natural to wonder whether one can refine the \EXAM by using addresses $\addr$ as a more precise form of names for jobs. It is possible, it is enough to modify the \EXAM as to extend at each step the name/address. For instance, transition $\toexamap$ would become:
\begin{center}
  $\begin{array}{|c@{\SEP}c@{\SEP}c||@{\hspace{6pt}}c@{\hspace{6pt}}||c@{\SEP}c@{\SEP}c|}
   \hline
  \textsc{Ap.} & \textsc{Pool} & \textsc{Env}&&  \textsc{Ap.} & \textsc{Pool} & \textsc{Env}
  \\
    \hline
    \hline
        \boapr
      & \cplace{\addr}{\tm\,\tmtwo}{\stack}\pop\pool
      & \genv
    & \toexamap &
        \boapr
      & \cplace{l\cons\addr}{\tm}{\tmtwo\cons\stack}\push\pool
      & \genv
\\
      \hline
  \end{array}
  $
\end{center}
Then a $\beta$-transition of address $\addr$ in a reachable state $\state$ corresponds exactly to a $\beta$-redex of address $\addr$ in $\decode\state$. We refrained from adopting addresses as names, however, because this is only useful for proving the $\beta$-reflection property of the \EXAM, the machine does not need such an additional structure for its functioning.
\section{Further Pool Templates}
Here we quickly discuss how by changing the pool template of the \EXAM we can implement different forms of evaluation, justifying the design choice of presenting pools as an interface.

\paragraph{Least Level.} Another sub-strategy of external reduction that computes $\beta$-normal forms is provided by \emph{least level reduction} $\toll$, a notion from the linear logic literature. Picking a redex of minimal level, where the level is the number of arguments in which the redex is contained, is another predicate (similarly to the leftmost one) that ensures externality of an outermost redex. Note that the $\Omega$ redex in \refeq{outermost-not-normalizing} (page \pageref{eq:outermost-not-normalizing}) is not of minimal level (it has level $1$ while the redex involving $\varthree$ has level $0$). Least level reduction is non-deterministic but diamond. For instance the two redexes (of level 1) in $\var (\Id \vartwo) (\Id\varthree)$ are both least level. Note that the leftmost redex might not be least level, as in $\var (\var (\Id \vartwo)) (\Id\varthree)$, where the leftmost redex is $\Id \vartwo$ and has level 2, while $\Id\varthree$ has level 1.

By slightly changing the pool template, we can turn the \LEXAM into a machine for least level evaluation. The key observation is that when new jobs are created, which is done only by transition $\toexamvar$, they all have level $n+1$ where $n$ is the level of the active job. To obtain a machine that processes job by increasing levels, then, it is enough to add the new jobs \emph{at the end of the pool}, rather than at the beginning. This template is an example where dropping (which pushes an element on top of the list of jobs) and adding (which adds at the end) are \emph{not} the same operation.

\paragraph{Fair Template.} Another interesting template is the one where pools are lists and dropping always adds at the end of the list. In this way the \EXAM is \emph{fair}, in the sense that even when it diverges, it keeps evaluating all jobs. This kind of strategies are of interest for infinitary $\l$-calculi, where one wants to compute all branches of an infinite normal form, instead of being stuck on one.

\section{Conclusions}
This paper studies two simple ideas and applies them to the paradigmatic case of strong call-by-name evaluation. Firstly, avoiding \emph{backtracking on the search for redexes} by introducing \emph{jobs} for each argument and jumping to the next job when one is finished. Secondly, modularizing the scheduling of jobs via a \emph{pool interface} that can be instantiated by various concrete schedulers, called \emph{pool templates}.

The outcome of the study is a compact, modular, and---we believe---elegant abstract machine for strong evaluation. In particular, we obtain the simplest machine for leftmost evaluation in the literature. Our study also gives a computational interpretation to the diamond non-determinism of strong call-by-name.

\correction{For the sake of simplicity, our study extends the MAM, which implements weak head reduction using global environments. Our technique, however, is reasonably modular in the underlying machine/notion of environment. One can, indeed, replace the MAM with Krivine abstract machine (KAM), which instead uses local environments, by changing only the fact that the jobs of the \EXAM have to carry their own local environment. Similarly, the technique seems to be adaptable to the CEK or other machines for weak evaluation. It would be interesting to compare the outcome of these adaptations with existing machines for strong call-by-value \cite{DBLP:conf/rta/BiernackaC19,DBLP:conf/lics/AccattoliCC21,DBLP:conf/ppdp/BiernackaCD21} or strong call-by-need \cite{DBLP:conf/aplas/BiernackaBCD20,DBLP:journals/pacmpl/BiernackaCD22}.}


\bibliographystyle{splncs04}
\bibliography{main.bbl}

\newpage
\appendix
\section{Proof Appendix}
This proof appendix has a section for each section of the paper, containing the omitted proofs.

\section{Normal Forms and the Importance of Being External}

\gettoappendix{prop:ext-diamond}
\begin{proof}\hfill
\begin{enumerate}
\item We show that every leftmost context $\lctx$ is an external context and every neutral context $\nctx$ is a rigid context. Note that every neutral term $\neu$ is a rigid term. By mutual induction on $\lctx$ and $\nctx$. Cases of $\lctx$:
\begin{itemize}
\item \emph{Empty}, that is, $\lctx=\ctxhole$. Immediate.
\item \emph{Abstraction}, that is, $\lctx=\la\var\lctxtwo$. By \ih, $\lctxtwo$ is external, and so is $\lctx$.
\item \emph{Left of an application}, that is, $\lctx=\nctx\tm$. By \ih, $\nctx$ is rigid, and so $\lctx$ is external.
\item \emph{Right of an application}, that is, $\lctx=\neu\lctxtwo$. By \ih, $\lctxtwo$ is external, and so $\lctx$ is external (because $\neu$ is a rigid term).
\end{itemize}
Cases of $\nctx$:
\begin{itemize}
\item \emph{Left of an application}, that is, $\nctx=\nctxtwo\tm$. By \ih, $\nctxtwo$ is rigid, and so $\nctx$ is rigid.
\item \emph{Right of an application}, that is, $\nctx=\neu\lctx$. By \ih, $\lctx$ is external, and so $\nctx$ is rigid (because $\neu$ is a rigid term).
\end{itemize}

%

\item
It suffices to show that if the source term $\tm$ can be written
as $\tm = \ctx\ctxholep{(\lam{\var}{\tmtwo})\,\tmthree}$
and also as $\tm = \ctx'\ctxholep{(\lam{\vartwo}{\tmtwo'})\,\tmthree'}$
where $\ctx$ and $\ctx'$ are both applicative (resp. external, rigid) contexts
then either they are equal ($\ctx = \ctx'$) or they are disjoint.
Proceed by induction on the derivation that $\ctx$ is an applicative
(resp. external, rigid) context.
\begin{itemize}
\item
  \emph{Applicative, base case, $\ctx = \ctxhole$}.
  Then $(\lam{\var}{\tmtwo})\,\tmthree = \ctx'\ctxholep{(\lam{\vartwo}{\tmtwo'})\,\tmthree'}$
  where $\ctx'$ is applicative,
  so necessarily $\ctx' = \ctxhole$ and we have $\ctx = \ctx'$ as required.
\item
  \emph{Applicative, application case, $\ctx = \ctx_1\,\tmfour$}.
  Symmetrically to the previous case, $\ctx'$ cannot be empty,
  so it is an application, \ie $\ctx' = \ctx'_1\,\tmfour$,
  and we conclude by \ih
\item
  \emph{External, applicative case, $\ctx = \apctx$}.
  Then there is a $\beta$-redex at the head of $\tm$.
  Note that $\ctx'$ cannot be a spinal external context
  (as this would imply that the head of $\tm$ is a variable, not a $\beta$-redex)
  nor an abstraction
  (as this would imply that the head of $\tm$ is an abstraction, not a $\beta$-redex).
  So $\ctx'$ must also be an applicative context, $\ctx' = \apctx'$, and we conclude by \ih
\item
  \emph{External, rigid case, $\ctx = \soctx$}.
  Then there is a variable at the head of $\tm$.
  Note that $\ctx'$ cannot be an applicative external context
  (as this would imply that the head of $\tm$ is a $\beta$-redex, not a variable)
  nor an abstraction
  (as this would imply that the head of $\tm$ is an abstraction, not a variable).
  So $\ctx'$ must also be a spinal external context, $\ctx' = \soctx'$, and we conclude by \ih
\item
  \emph{External, abstraction case, $\ctx = \lam{\var}{\boctx_1}$}.
  Then, reasoning as in the two previous cases,
  $\ctx'$ must also be an abstraction $\ctx' = \lam{\var}{\boctx'_1}$,
  and we conclude by \ih
\item
  \emph{Rigid, right case, $\ctx = \aptm\,\boctx$}.
  If $\ctx'$ is also built with the right case, \ie $\ctx' = \aptm\,\boctx'$
  then we conclude by \ih.
  If $\ctx'$ is built with the left case, \ie $\ctx' = \soctx\,\tm$
  then we have that $\ctx$ and $\ctx'$ are disjoint contexts.
\item
  \emph{Rigid, left case, $\ctx = \soctx\,\tm$}.
  Symmetric to the previous case.\qed
\end{itemize}
\end{enumerate}
\end{proof}

\section{Preliminaries: Abstract Machines}

\begin{lemma}[One-step simulation]
  \label{l:one-step-simulation}
  Let $\mach=(\States, \tomach, \compilrel\cdot\cdot, \decode\cdot)$ be a machine and $\tostrat$ be a strategy forming an implementation system. 
  For any state $\state$ of $\mach$, if $\decode\state \tostrat \tmtwo$ then there is a state $\statetwo$ of $\mach$ such that $\state \tomacho^*\tomachb \statetwo$ and $\decode{\statetwo} = \tmtwo$.
\end{lemma}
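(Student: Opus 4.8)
The statement is exactly the place where the four conditions of an implementation system are glued together, so the plan is simply to chain them in the right order. First I would use \emph{overhead termination} (condition~\ref{p:def-overhead-terminate}): since $\tomacho$ is terminating, starting from $\state$ any maximal sequence of overhead transitions is finite, so there is a state $\state'$ with $\state \tomacho^* \state'$ and $\state'$ being $\tomacho$-normal.

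Next I would propagate the read-back along this prefix. \emph{Overhead transparency} (condition~\ref{p:def-overhead-transparency}) gives $\decode{\state_1} = \decode{\state_2}$ for a single step $\state_1 \tomacho \state_2$; by a routine induction on the length of the run $\state \tomacho^* \state'$ this yields $\decode{\state'} = \decode{\state}$. Combining with the hypothesis $\decode\state \tostrat \tmtwo$, we get $\decode{\state'} \tostrat \tmtwo$.

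Now I would apply \emph{$\beta$-reflection} (condition~\ref{p:def-beta-reflection}) to $\state'$: it is $\tomacho$-normal and its read-back $\tostrat$-reduces to $\tmtwo$, so there exists $\statetwo$ with $\state' \tomachb \statetwo$ and $\decode\statetwo = \tmtwo$. Putting the two pieces together gives $\state \tomacho^* \state' \tomachb \statetwo$, that is, $\state \tomacho^* \tomachb \statetwo$ with $\decode\statetwo = \tmtwo$, as required.

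There is no real obstacle here: the only point needing a small argument is the iteration of overhead transparency from one step to a whole overhead run, and the (equally routine) observation that a terminating relation always admits a maximal, hence normalizing, reduction sequence out of any state. Condition~\ref{p:def-beta-projection} ($\beta$-projection) is not used in this lemma — it is the ingredient for the converse, \emph{runs to evaluations}, direction.
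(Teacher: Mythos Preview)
Your proof is correct and follows essentially the same route as the paper: normalize with respect to overhead transitions using overhead termination, propagate the read-back via overhead transparency, then apply $\beta$-reflection at the overhead-normal state. Your remark that $\beta$-projection is unused here is also accurate.
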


\begin{proof}
  For any state $\state$ of $\mach$, let $\nfo{\state}$ be a normal form of $\state$ with respect to $\tomacho$: such a state exists because overhead transitions terminate (\refpoint{def-overhead-terminate}).
  Since $\tomacho$ is mapped on identities (\refpoint{def-overhead-transparency}), one has $\decode{\nfo{\state}} = \decode\state$. 
The hypothesis then is $\decode\state = \decode{\nfo{\state}} \tostrat \tmtwo$, and by $\beta$-reflection (\refpoint{def-beta-reflection}) there is $\statetwo$ such that $\nfo{\state} \tomachb \statetwo$ and $\decode\statetwo = \tmtwo$.\qed
\end{proof}

\gettoappendix{thm:abs-impl}
\begin{proof}
  According to \refdef{implem}, given a $\lambda$-term $\tm$, we have to show that:
  \begin{enumerate}
    \item \label{p:exec-to-deriv} \emph{Runs to evaluations with $\beta$-matching}: for any $\mach$-run $\exec: \compilrel\tm\state \tomachine^* \statetwo$ there exists a $\tostrat$-evaluation $\deriv: \tm \tostrat^* \decode\statetwo$ such that $\size\deriv = \sizebeta\exec$.

    \item \label{p:deriv-to-exec} \emph{Evaluations to runs with $\beta$-matching}: for every $\tostrat$-evaluation $\deriv: \tm \tostrat^* \tmtwo$ there exists a $\mach$-run $\exec: \compilrel\tm\state \tomachine^* \statetwo$ such that $\decode\statetwo = \tmtwo$ and $\size\deriv = \sizebeta\exec$.
  \end{enumerate}

  \paragraph{Proof of \refpoint{exec-to-deriv}}  By induction on $\sizebeta\exec \in \nat$.
  
  If $\sizebeta\exec = 0$ then $\exec \colon \compilrel\tm\state \tomacho^* \statetwo$ and hence $\decode\state = \decode\statetwo$ by overhead transparency (\refpoint{def-overhead-transparency} of \refdef{implementation}).
  Moreover, $\tm = \decode{\state} = \decode{\statetwo}$ by the initialization constraint, therefore we are done by taking the empty (\ie without steps) evaluation $\deriv$ with starting (and end) term $\tm$.
  
  Suppose $\sizebeta\exec > 0$: then, $\exec \colon \compilrel\tm\state \tomachine^* \statetwo$ is the concatenation of a run $\execp \colon \compilrel{\tm}\state \tomachine^* \statethree$ followed by a run $\execpp \colon \statethree \tomachb \statefour \tomacho^* \statetwo$.
  By \ih applied to $\execp$, there exists an evaluation $\derivp \colon \tm \tostrat^* \decode\statethree$ with $\sizebeta\execp = \size\derivp$.
  By $\beta$-projection (\refpoint{def-beta-projection} of \refdef{implementation}) and overhead transparency (\refpoint{def-overhead-transparency} of \refdef{implementation}) applied to $\execpp$, one has $\derivpp \colon \decode\statethree \tostrat \decode\statefour = \decode\statetwo$.
  Therefore, the evaluation 
  $\deriv$ defined as the concatenation of $\derivp$ and $\derivpp$ is such that $\deriv \colon \tm \tostrat^* \decode\statetwo$ and $\size\deriv = \size\derivp + \size\derivpp = \sizebeta\execp + 1 = \sizebeta\exec$.
 
  \paragraph{Proof of \refpoint{deriv-to-exec}}  By induction on $\size\deriv \in \nat$.

  If $\size\deriv = 0$ then $\tm = \tmtwo$.
 By the initialization constraint, one has $\decode{\state} = \tm$.
  We are done by taking the empty (\ie without transitions) run $\exec$ with initial (and final) state $\state$.
  
  Suppose $\size\deriv > 0$: so, $\deriv\colon \tm \tostrat^* \tmtwo$ is the concatenation of an evaluation $\derivp \colon \tm \tostrat^* \tmtwop$ followed by the step $\tmtwop \tostrat \tmtwo$.
  By \ih, there exists a $\mach$-run $\execp\colon \compilrel\tm\state \tomachine^* \statethree$ such that $\decode\statethree = \tmtwop$ and $\size\derivp = \sizebeta\execp$.
By  one-step simulation (\reflemma{one-step-simulation}, since $\decode{\statethree} \tostrat \tmtwo$ and $\tostrat$ and $\mach$ form an implementation system), there is a state $\statetwo$ of $\mach$ such that $\statethree \tomacho^*\tomachb \statetwo$ and $\decode\statetwo = \tmtwo$.
  Therefore, the run $\exec \colon \compilrel\tm\state \tomachine^*\statethree \tomacho^*\tomachb \statetwo$ is such that $\sizebeta\exec = \sizebeta\execp +1 = \size\derivp + 1 = \size\deriv$. \qed
\end{proof}

\section{The External Abstract Machine}

\gettoappendix{l:inn-extension-approx}
\begin{proof}
We show the following statement, strengthened with
a similar property for rigid approximants:
\begin{enumerate}
\item
  \emph{If $\soapr$ is a rigid approximant
  then $\soapr\ctxholep{\lam{\var}{\boctxhole\name}}_\name$
  and $\soapr\ctxholep{\var\,\boctxhole{\nametwo_1}\mydots\boctxhole{\nametwo_n}}_\name$ are
  also rigid approximants.}
  
  \item \emph{If $\boapr$ is an  approximant
then $\boapr\boctxholep\name{\lam{\var}{\boctxhole\name}}$
and $\boapr\boctxholep\name{\var\,\boctxhole{\nametwo_1}\mydots\boctxhole{\nametwo_n}}$ are
also approximants.}
\end{enumerate}
The proof is by mutual induction on  $\boapr$ and $\soapr$.
The case $\soapr = \var$ is immediate as there are no names in $\var$.
The remaining cases are straightforward by \ih The only interesting case is
when $\boapr$ is a named hole, \ie $\boapr = \boctxhole\namethree$.
If $\namethree = \name$, it suffices to remark that $\lam{\var}{\ctxhole_\name}$
and $\var\,\boctxhole{\nametwo_1}\mydots\boctxhole{\nametwo_n}$ are both approximants.
If $\namethree \neq \name$, it is immediate since
$\boctxhole\namethree\boctxholep\name{\lam{\var}{\ctxhole_\name}} = \boctxhole\namethree$ and $
\boctxhole\namethree\boctxholep\name{\var\,\boctxhole{\nametwo_1}\mydots\boctxhole{\nametwo_n}} =\boctxhole\namethree$.\qed
\end{proof}

\gettoappendix{l:approx-no-names-normal}
\begin{proof}
First of all, we strengthen the statement adding a part about rigid approximants:
\begin{enumerate}
\item A rigid approximant $\soapr$ without names is a neutral term;
\item An approximant $\boapr$ without names is a normal form.
\end{enumerate}
Then the proof is a straightforward mutual induction on $\soapr$ and $\boapr$.\qed
\end{proof}

The next lemma collects the basic properties of the read-back that we use in the proofs.
\begin{lemma}[Properties 2]
\label{l:new-properties_decoding} 
\begin{enumerate}
\item \label{p:new-properties_decoding-one}
Let $\suppset$ be a uniquely named set of jobs. Then 
$\decpool{(\soapr\,\boapr)}\suppset = \decpool\soapr\suppset\,\decpool\boapr\suppset$.

\item \label{p:new-properties_decoding-two}
Let $\suppset$ and $\suppsettwo$ be uniquely named sets of jobs such that $\names{\suppset}\cap\names{\suppsettwo}=\emptyset$ then  $\decpool{\ntm}{\suppset\cup\suppsettwo} = \decpool{(\decpool{\ntm}{\suppset})}{\suppsettwo}$.

\item \label{p:new-properties_decoding-three}
$\decst{(\la\var\tm)\tmtwo}\stack \towh \decst{\tm\isub\var\tmtwo}\stack$.

\item \label{p:new-properties_decoding-four}
$\decenv\tm\env \isub\var{\decenv\tmtwo\env} =\decenv{\tm\isub\var\tmtwo}\env$.

\item \label{p:new-properties_decoding-five}
If $\var\notin\fv{\place_\name}$ then $\decenv{\place_\name}{\esub\var\tm\cons\env} = \decenv{\place_\name}{\env}$.
\item \label{p:new-properties_decoding-six}
Let $\suppset$ be a uniquely named set of jobs. If $\var\notin\fv{\suppset}$ then $\decenv{\suppset}{\esub\var\tm\cons\env} = \decenv{\suppset}{\env}$.
\item \label{p:new-properties_decoding-seven}
If $\env$ verifies the local scope invariant and $\var\in\dom\env$ then $\decenv\var\env=\decenv{\env(\var)}\env$.


\end{enumerate}
\end{lemma}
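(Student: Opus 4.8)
The plan is to derive all seven items by routine inductions on the structure of the objects involved (multi-contexts, stacks, environments, job sets), after isolating two elementary facts: \textbf{(P)} the plugging $\ntm\tsub\name\ntmtwo$, being textual substitution of the hole $\boctxhole\name$, commutes with every term constructor --- in particular $(\ntm_1\,\ntm_2)\tsub\name\ntmtwo = (\ntm_1\tsub\name\ntmtwo)\,(\ntm_2\tsub\name\ntmtwo)$; and \textbf{(S)} a capture-avoiding substitution $\tm\isub\var\tmtwo$ is the identity whenever $\var\notin\fv\tm$. I would also record, for safety, that $\decpool\cdot\suppset$ is well defined, i.e.\ independent of the enumeration of the set $\suppset$ chosen in its definition: the names in $\suppset$ are pairwise distinct and each $\decode{\jobn{\name_i}} = \decst{\tm}{\stack}$ is an ordinary (hole-free) term, so the substitutions $\tsub{\name_i}{\decode{\jobn{\name_i}}}$ pairwise commute.

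With these in hand, item~1 follows by induction on $|\suppset|$, pushing each $\tsub{\name_i}{\decode{\jobn{\name_i}}}$ through the application with (P). Item~2 follows by induction on $\ntm$: since $\names\suppset\cap\names\suppsettwo=\emptyset$ and the terms substituted for the names of $\suppset$ are hole-free, the inner plugging $\decpool{\ntm}\suppset$ and the outer one do not interfere, as they act on disjoint sets of holes. For item~3, a short induction on $\stack$ produces an applicative context $\apctx$ (depending on $\stack$) with $\decst{u}{\stack} = \apctx\ctxholep{u}$ for every $u$; since $(\la\var\tm)\,\tmtwo \rtob \tm\isub\var\tmtwo$ and $\towh = \apctx\ctxholep\rtob$ is closed under applicative contexts, instantiating $u$ with the redex gives the claimed step. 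Items~5 and~6 are immediate from (S): writing the job as $\place_\name = \cplace\name u\stack$, one unfolds $\decenv{\place_\name}{\esub\var\tm\cons\env}$ to $\cplace\name{\decenv{u\isub\var\tm}{\env}}{\decenv{\stack}{\esub\var\tm\cons\env}}$, uses $\var\notin\fv{\place_\name}$ and (S) to erase the substitution on $u$, and handles the stack component by a secondary induction on $\stack$; item~6 is then item~5 applied jobwise.

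The two items needing more care are 4 and 7. For item~4 I would induct on $\env$: the base $\env=\eempty$ is trivial, and in the step $\env = \esub\vartwo\tmthree\cons\envtwo$ one rewrites both sides via the definition of $\decenv\cdot\cdot$, applies the induction hypothesis to $\tm\isub\vartwo\tmthree$ and $\tmtwo\isub\vartwo\tmthree$, and is left with the standard substitution lemma $(\tm\isub\vartwo\tmthree)\isub\var{\tmtwo\isub\vartwo\tmthree} = (\tm\isub\var\tmtwo)\isub\vartwo\tmthree$. This identity requires $\var\neq\vartwo$ and $\var\notin\fv\tmthree$, i.e.\ freshness of $\var$ with respect to $\env$, and this is the main obstacle: the property is only invoked at reachable states, where the freshness invariant (\reflemma{exam_invariant}.2) ensures $\var\notin\dom\env$ and $\var$ not free in the range of $\env$, so the side conditions hold along the induction (the cleanest presentation would state item~4 with such a freshness hypothesis explicitly).

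For item~7, write $\env = \env_1\cons\esub\var{\env(\var)}\cons\env_2$, picking the leftmost binding of $\var$. First peel $\env_1$ off $\decenv\var\env$: this is legitimate because, by \reflemma{exam_invariant}.2, no variable bound in $\env_1$ equals $\var$, so each step leaves $\var$ untouched by (S), yielding $\decenv\var\env = \decenv\var{\esub\var{\env(\var)}\cons\env_2} = \decenv{\env(\var)}{\env_2}$. Symmetrically, $\decenv{\env(\var)}\env = \decenv{\env(\var)}{\env_2}$, because the local scope invariant (\reflemma{exam_invariant}.5) says that all variables bound at or before the position of $\esub\var{\env(\var)}$ --- in particular those of $\env_1$ and $\var$ itself --- do not occur in $\env(\var)$, so the prefix $\env_1\cons\esub\var{\env(\var)}$ acts as the identity on $\env(\var)$ by repeated use of (S). Comparing the two equalities gives $\decenv\var\env = \decenv{\env(\var)}\env$.
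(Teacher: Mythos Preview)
Your proposal is correct; the paper's own proof is literally the single word ``Straightforward'', so your routine inductions are exactly the intended approach, only spelled out in full. Your observation that item~4 tacitly needs a freshness hypothesis on $\var$ with respect to $\env$ is apt---the paper only ever invokes it at reachable states, where the freshness invariant (\reflemma{exam_invariant}) supplies the side conditions you isolated.
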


\begin{proof}
Straightforward.
\end{proof}

\gettoappendix{prop:exam-diamond}
\begin{proof}
Note that manipulations of the approximant of different names commute, since they are defined via plugging on different names. Thus any two transitions from a state $\state$ can be done in any order without changing the result except for two $\beta$-transitions, which are the only transitions that extend the environment. Then one has that they commute up to $\approx$, which has been defined exactly for this purpose.\qed
\end{proof}
\section{Runs to Evaluations}

\gettoappendix{prop:overhead-transparency}
\begin{proof}
In the proof, we repeatedly use the following fact referring to it with '$*$':
  \[
    \begin{array}{llllllllllll}    
      \decode{ \examstate\boapr {\job_\name \pop\pool} \genv   } 
    & = &
      \decpool{\boapr}{\decenv{ \set{ \job_\name } \cup \supp\pool }\env }
    \\
    & =_{\reflemmaeqp{new-properties_decoding}{two}} &
      \decpool{\decpool\boapr{\decenv{ \set{ \job_\name }}\env}} {\decenv\pool\env }
    \\
    & = &
      \decpool{\boapr\tsub{\name}{\decenv{ \job_\name }\env}} {\decenv\pool\env }
    
    & = &
      \decpool\boapr{\decenv\pool\env }\tsub{\name}{\decenv{ \job_\name }\env}
    \end{array}
  \]
And similarly one obtains that  $\decode{ \examstate\boapr {\job_\name \push\pool} \genv} = \decpool\boapr{\decenv\pool\env }\tsub{\name}{\decenv{ \job_\name }\env}$, also to be referred to with '$*$'.

The proof is by cases on the kind of transition.
\begin{itemize}
\item \emph{Application search ($\toexamap$).}
  \[
    \begin{array}{rcl}
      \decode{\examstate{
        \boapr 
      }{
        \cplace{\name}{\tm\,\tmtwo}{\stack}\pop\pool
      }{
        \genv
      }}
    & =_* &
      \decpool\boapr{\decenv\pool\env }\tsub{\name}{\decenv{ \decst{(\tm\,\tmtwo)}\stack }\env}
    \\
    & = &
      \decpool\boapr{\decenv\pool\env }\tsub{\name}{\decenv{ \decst\tm{\tmtwo\cons\stack} }\env}
    \\
    & =_* &
      \decode{\examstate{
        \boapr
      }{
        \cplace{\name}{\tm}{\tmtwo\cons\stack}\push\pool
      }{
        \genv
      }}
    \end{array}
  \]
  
\item \emph{Abstraction search ($\toexamlam$).}
  Then:
  \[
    \begin{array}{rcll}
      \decode{
        \examstate{
          \boapr
        }{
          \cplace{\name}{\lam{\var}{\tm}}{\stempty}\pop\pool
        }{
          \genv
        }
      }
      & =_* &
 \decpool\boapr{\decenv\pool\env }\tsub{\name}{ \lam{\var}{\decenv\tm\env} }
    \\
      & = &
 \left(\decpool\boapr{\decenv\pool\env }\tsub{\name}{ \lam{\var}{\boctxhole\name} } \right)\tsub{\name}{\decenv\tm\env}
    \\
      & = &
 \decpool{\boapr\tsub{\name}{ \lam{\var}{\boctxhole\name} }} {\decenv\pool\env } \tsub{\name}{\decenv\tm\env}
    \\
      & =_* &
      \decode{
        \examstate{
          \boapr\tsub{\name}{\lam{\var}{\boctxhole\name}}
        }{
          \cplace{\name}{\tm}{\stempty}\push\pool
        }{
          \genv
        }
      }
    \end{array}
  \]
  
\item \emph{Substitution ($\toexamsub$).} Let $\envtwo \defeq \env_1\cons\esub\var\tm\cons\env_2$. Then:
  \[
    \begin{array}{rcll}
      &&
      \decode{
        \examstate{
          \boapr
        }{
          \cplace{\name}{\var}{\stack}\pop\pool
        }{
          \envtwo
        }
      }
    \\
      & =_* &
 \decpool\boapr{\decenv\pool\env }\tsub{\name}{ \decenv{\decst\var\stack}\envtwo }
    \\
      & = &
 \decpool\boapr{\decenv\pool\env }\tsub{\name}{ \decst{\decenv\var\envtwo}{\decenv\stack\envtwo} }
    \\
      & =_{\reflemmaeqp{new-properties_decoding}{seven}} &
 \decpool\boapr{\decenv\pool\env }\tsub{\name}{ \decst{\decenv{\tm}\envtwo}{\decenv\stack\envtwo} }
    \\
      & =_\alpha &
 \decpool\boapr{\decenv\pool\env }\tsub{\name}{ \decst{\decenv{\rename\tm}\envtwo}{\decenv\stack\envtwo} }
    \\
      & = &
 \decpool\boapr{\decenv\pool\env }\tsub{\name}{ \decenv{\decst{\rename\tm}\stack}\envtwo }
    \\
      & =_* &
      \decode{
        \examstate{
          \boapr
        }{
          \cplace{\name}{\rename{\tm}}{\stack}\push\pool
        }{
          \envtwo
        }
      }
    \end{array}
  \]
  
\item \emph{Variable search ($\toexamvar$).}
  Let $\nametwo_1,\mydots,\nametwo_n$ be fresh names. Then:
  \[
    \begin{array}{rcll}
    &&
      \decode{
        \examstate{
          \boapr
        }{
          \cplace{\name}{\var}{\tm_1\cons\mydots\cons\tm_n}\pop\pool
        }{
          \genv
        }
      }
    \\
    & = &
      \decenv{
        (\decpool{
          \boapr\tsub{\name}{\var\,\tm_1\mydots\tm_n}
        }{
          \pool
        })
      }{
        \genv
      }
    \\
          & =_* &
 \decpool\boapr{\decenv\pool\env }\tsub{\name}{ \var\,\decenv{\tm_1}\env\mydots\decenv{\tm_n}\env} 
    \\
      & = &
 \left(\decpool\boapr{\decenv\pool\env }\tsub{\name}{ \var\,\boctxhole{\nametwo_1}\mydots\boctxhole{\nametwo_n} } \right)\tsub{\nametwo_1}{\decenv{\tm_1}\env}\mydots\tsub{\nametwo_n}{\decenv{\tm_n}\env}
    \\
      & = &
 \decpool{\boapr\tsub{\name}{ \var\,\boctxhole{\nametwo_1}\mydots\boctxhole{\nametwo_n} }} {\decenv\pool\env } 
 \tsub{\nametwo_1}{\decenv{\tm_1}\env}\mydots\tsub{\nametwo_n}{\decenv{\tm_n}\env}
    \\
      & = &
 \decpool{\boapr\tsub{\name}{ \var\,\boctxhole{\nametwo_1}\mydots\boctxhole{\nametwo_n} }
 \tsub{\nametwo_1}{\decenv{\tm_1}\env}\mydots\tsub{\nametwo_n}{\decenv{\tm_n}\env}
 } {\decenv\pool\env } 
    \\
      & = &
 \decpool{\decpool{
 \boapr\tsub{\name}{ \var\,\boctxhole{\nametwo_1}\mydots\boctxhole{\nametwo_n} }
 } {\set{\cplace{\nametwo_1}{\tm_1}{\stempty},\mydots,\cplace{\nametwo_n}{\tm_n}{\stempty}}}
 } {\decenv\pool\env } 
    \\
      & =_{\reflemmaeqp{new-properties_decoding}{two}} &
 \decpool{
 \boapr\tsub{\name}{ \var\,\boctxhole{\nametwo_1}\mydots\boctxhole{\nametwo_n} }
 }
  {\decenv{\set{\cplace{\nametwo_1}{\tm_1}{\stempty},\mydots,\cplace{\nametwo_n}{\tm_n}{\stempty}} \cup\supp\pool}\env } 
    \\
    & = &
      \decode{
        \examstate{
          \boapr\tsub{\name}{\var\,\boctxhole{{\nametwo_1}\mydots\boctxhole{\nametwo_n}}}
        }{
         \cplace{\nametwo_1}{\tm_1}{\stempty}\cons\mydots\cons\cplace{\nametwo_n}{\tm_n}{\stempty}\add\pool
        }{
          \genv
        }
      }
    \end{array}
  \]
  \end{itemize}
  \qed
\end{proof}

\gettoappendix{l:exam_invariant}
\begin{proof}
\correction{By induction on the length $k$ of the execution leading to the reachable state $\state$. If $k=0$ then the state is initial, and the freshness and freeness invariants hold because the term in initial states is well-named. For k>0,} it suffices to show that each transition preserves the invariants. For each transition, we only mention the non-immediate invariants. The following case analysis shows that the proof of freshness and bijection uses uniqueness.
\begin{enumerate}
\item \emph{Application search ($\toexamap$).}
All invariants trivially hold.

\item \emph{Beta ($\toexambeta$).}
  For freshness,
  note that $\var$ has a binding occurrence $\esub{\var}{-}$ on the RHS
  and a binding occurrence $\lam{\var}{-}$ on the LHS,
  so by Freshness there are no other binding occurrences of $\var$.

  For local~scope, note that by local~scope of the LHS
  there are no occurrences of $\var$ other than in $\tm$,
  so there are also no occurrences of $\var$ in $\genv$.
  
\item \emph{Substitution ($\toexamsub$).}
  For freshness, note that $\rename{\tm}$ is a fresh renaming of $\tm$,
  so the binding occurrences of variables in $\rename{\tm}$ are different from each other
  and from any other binding occurrences in the state.
 
  For local~scope, observe that $\rename{\tm}$ is a fresh renaming of $\tm$,
  so if $\lam{\vartwo}{\tmthree}$ is a sub-term of $\tm$ there are no occurrences
  of $\vartwo$ other than in $\tmthree$.

\item \emph{Abstraction search ($\toexamlam$).}
  For freshness, 
  note that $\name$ appears exactly once in $\boapr$ by uniqueness for the LHS,
  so $\var$ has one binding occurrence in $\boapr\tsub{\name}{\lam{\var}{\boctxhole\name}}$
  on the RHS. Moreover, since $\var$ has one binding occurrence on the pool of the
  LHS, by freshness there are no other occurrences of $\var$.

  For uniqueness note, as before, that by uniqueness of the LHS,
  we have that $\name$ appears exactly once in $\boapr$, and so it appears exactly once in $\boapr\tsub{\name}{\lam{\var}{\boctxhole\name}}$. Since $\name \in \names{\cplace{\name}{\tm}{\stempty}\drop\pool}$ the bijection is preserved.

\item \emph{Variable search ($\toexamvar$).}
  For uniqueness, note that there is exactly one occurrence of $\name$ in $\boapr$ by uniqueness of the LHS.
  So, given that $\nametwo_1,\mydots,\nametwo_n$ are fresh,
  for each $1 \leq i \leq n$
  there is exactly one occurrence of $\nametwo_i$ in
  $\boapr\tsub{\name}{\var\,\boctxhole{\nametwo_1}\mydots\boctxhole{\nametwo_n}}$. Since 
  \[\nametwo_i\in\names{\cplace{\nametwo_1}{\tm_1}{\stempty}\cons\mydots\cons\cplace{\nametwo_n}{\tm_n}{\stempty}\pro\pool}\] the bijection is also preserved. For freeness, it follows from freeness for the LHS and the fact that $\var\notin\dom\genv$, which is one of the conditions for the transition to apply.\qed
\end{enumerate}
\end{proof}

\gettoappendix{l:pure_decoding}
\begin{proof}
\hfill
\begin{enumerate}
\item The statement has to be strengthened adding a part about rigid approximants, as follows:
\begin{enumerate}
\item \emph{Let $\soapr$ be an rigid appoximant and let $\pool$ be a pool such that $\fn{\soapr} \subseteq \names\pool$. Then $\decpool{\soapr}{\pool}$ is a rigid term without any free occurrences of names.}

\item \emph{Let $\boapr$  be an  approximant  and let $\pool$ be a pool such that $\fn{\boapr} \subseteq \names\pool$. Then $\decpool{\boapr}{\pool}$ is a term  without any free occurrences of names.}
\end{enumerate}
Then the proof is a straightforward mutual induction on $\boapr$ and $\soapr$.

\item It follows from the first point and from the fact that from the bijection invariant (\reflemma{exam_invariant}.3) any reachable state is in the hypothesis of the first point.
\qed
\end{enumerate}
\end{proof}

\gettoappendix{l:almost_pure_decoding}
\begin{proof}
As usual, we first strengthen the statement adding a part about rigid approximants. The new statement is:

\emph{Let $\suppset$ be a set of uniquely named jobs.}
\begin{enumerate}
\item \emph{Let $\soapr$ be a rigid approximant with no repeated names
such that $\fn{\soapr} \setminus \names\suppset = \set{\name}$.
Then $\decpool{\soapr}{\suppset}\tsub{\name}{\ctxhole}$ is a rigid context.}

\item \emph{Let $\boapr$ be an approximant with no repeated names
such that $\fn{\boapr} \setminus \names\suppset = \set{\name}$.
Then $\decpool{\boapr}{\suppset}\tsub{\name}{\ctxhole}$  is an external context.}
\end{enumerate}

The proof is by mutual induction on $\soapr$ and $\boapr$.     
    Cases of $\soapr$:
  \begin{enumerate}
\item \emph{Variable case, $\soapr = \var$.}
  This case is impossible, as by hypothesis we know that $\name \in \fn{\var} = \emptyset$.
  
\item \emph{Application case, $\soapr = \soaprtwo\,\boapr$.}
Two cases:
  \begin{enumerate}
  \item
    If $\name \in \fn{\soaprtwo}$, then $\fn{\soaprtwo} \setminus \names\suppset = \set{\name}$ and we can apply the \ih
    Then:
    \[
      \begin{array}{rlll}
            \decpool{(\soaprtwo\,\boapr)}{\suppset}\tsub{\name}{\ctxhole}
      & =_{\reflemmaeqp{new-properties_decoding}{one}} & (\decpool\soaprtwo\suppset\,\decpool\boapr\suppset)\tsub{\name}{\ctxhole}
      
      \\
      & = & \decpool{\soaprtwo}{\suppset}\tsub{\name}{\ctxhole}\,\decpool{\boapr}{\suppset}
      \end{array}
    \]
    By \ih, $\decpool{\soaprtwo}{\suppset}\tsub{\name}{\ctxhole}$ is a rigid~context. Since $\soapr$ has no repeated names, $\name\notin\fn\boapr$, so that $\fn\boapr \subseteq \names\suppset$ and we can apply \reflemma{pure_decoding}.1, obtaining that $\decpool{\boapr}{\suppset}$ is a term (without free occurrences of names).
    So indeed $\decpool{\soaprtwo}{\suppset}\tsub{\name}{\ctxhole}\,\decpool{\boapr}{\suppset}$ is a rigid~context.
  \item
    If $\name \in \fn{\boapr}$, then $\fn{\boapr} \setminus \names\suppset = \set{\name}$ and we can apply the \ih
    Then:
    \[
      \begin{array}{rlll}
            \decpool{(\soaprtwo\,\boapr)}{\suppset}\tsub{\name}{\ctxhole}
      & =_{\reflemmaeqp{new-properties_decoding}{one}} & (\decpool\soaprtwo\suppset\,\decpool\boapr\suppset)\tsub{\name}{\ctxhole}

      \\
      & = & \decpool{\soaprtwo}{\suppset}\,\decpool{\boapr}{\suppset}\tsub{\name}{\ctxhole}

      \end{array}
    \]
By \ih,
    $(\decpool{\boapr}{\suppset})\tsub{\name}{\ctxhole}$ is an external context.  Since $\boapr$ has no repeated names, $\name\notin\fn\soaprtwo$, so that $\fn\soaprtwo \subseteq \names\suppset$ and we can apply \reflemma{pure_decoding}.1, obtaining that $\decpool{\soaprtwo}{\suppset}$ is a term (without free occurrences of names).
    So indeed $\decpool{\soaprtwo}{\suppset}\,(\decpool{\boapr}{\suppset})\tsub{\name}{\ctxhole}$ is an external context.
  \end{enumerate}
\end{enumerate}

Cases of $\boapr$:
\begin{enumerate}
\item \emph{Name case, $\boapr = \boctxhole\nametwo$.}
  Then since $\fn{\boctxhole\nametwo} \setminus \names\suppset = \set{\name}$
  we have that $\nametwo = \name$ and $\suppset = \pempty$,
  and $\ctxhole$ is indeed an external context.
\item \emph{Rigid case, $\boapr = \soapr$.}
  Immediate by \ih
\item \emph{Abstraction case, $\boapr = \lam{\var}{\boaprtwo}$.}
  By \ih $\decpool{\boaprtwo}{\suppset}\tsub{\name}{\ctxhole}$ is an external context,
  so $\decpool{(\lam{\var}{\boaprtwo})}{\suppset}\tsub{\name}{\ctxhole} =
      \lam{\var}{(\decpool{\boaprtwo}{\suppset}\tsub{\name}{\ctxhole})}$
  is also an external context.\qed
    \end{enumerate}
\end{proof}

\gettoappendix{thm:exam-beta-projection}
\begin{proof}
In the proof, we repeatedly use the following fact referring to it with '$*$':
  \[
    \begin{array}{llllllllllll}    
      \decode{ \examstate\boapr {\job_\name \pop\pool} \genv   } 
    & = &
      \decpool{\boapr}{\decenv{ \set{ \job_\name } \cup \supp\pool }\env }
    \\
    & =_{\reflemmaeqp{new-properties_decoding}{two}} &
      \decpool{\decpool\boapr{\decenv{ \set{ \job_\name }}\env}} {\decenv\pool\env }
    \\
    & = &
      \decpool{\boapr\tsub{\name}{\decenv{ \job_\name }\env}} {\decenv\pool\env }
    
    & = &
      \decpool\boapr{\decenv\pool\env }\tsub{\name}{\decenv{ \job_\name }\env}
    \end{array}
  \]
And similarly one obtains that  $\decode{ \examstate\boapr {\job_\name \push\pool} \genv} = \decpool\boapr{\decenv\pool\env }\tsub{\name}{\decenv{ \job_\name }\env}$, also to be referred to with '$*$'.

The proof of $\beta$-projection is given by:
  \[
    \begin{array}{rcll}
    &&
      \decode{\examstate{
        \boapr
      }{
        \cplace{\name}{\lam{\var}{\tm}}{\tmtwo\cons\stack}\pop\pool
      }{
        \genv
      }}
    \\
    & =_* &
      \decpool\boapr{\decenv\pool\env }\tsub{\name}{\decenv{ \cplace{\name}{\lam{\var}{\tm}}{\tmtwo\cons\stack}}\env}
    \\
    & = &
      \decpool\boapr{\decenv\pool\env }\tsub{\name}{ \decst{((\lam{\var}{\decenv\tm\env})\,\decenv\tmtwo\env)}{\decenv\stack\env} }
    \\
    & \toext &
      \decpool\boapr{\decenv\pool\env }\tsub{\name}{ \decst{\decenv\tm\env \isub\var{\decenv\tmtwo\env}}{\decenv\stack\env} }
      & \reflemmaeq{almost_pure_decoding}\mbox{ and }\reflemmaeqp{new-properties_decoding}{three}
    \\
    & =_{\reflemmaeqp{new-properties_decoding}{four}} &
      \decpool\boapr{\decenv\pool\env }\tsub{\name}{ \decst{\decenv{\tm \isub\var\tmtwo}\env}{\decenv\stack\env} }
    \\
    & = &
      \decpool\boapr{\decenv\pool\env }\tsub{\name}{\decenv{ \cplace{\name}{\tm\isub\var\tmtwo}{\stack}}\env}
    \\
    & =_{\reflemmaeqp{new-properties_decoding}{five}} &
      \decpool\boapr{\decenv\pool\env }\tsub{\name}{\decenv{ \cplace{\name}{\tm}{\stack}}{\esub\var\tmtwo\cons \env}}
    \\
    & =_{\reflemmaeqp{new-properties_decoding}{six}} &
      \decpool\boapr{\decenv\pool{\esub\var\tmtwo\cons \env} }\tsub{\name}{\decenv{ \cplace{\name}{\tm}{\stack}}{\esub\var\tmtwo\cons \env}}
    \\
    & =_* &
      \decode{\examstate{
        \boapr
      }{
        \cplace{\name}{\tm}{\stack}\push\pool
      }{
        \esub{\var}{\tmtwo}\cons\genv
      }}
    \end{array}
  \]
\qed
\end{proof}

\gettoappendix{coro:exam-correctness}
\begin{proof}
Simply apply the theorem about sufficient conditions for implementations (\refthm{abs-impl}) using overhead transparency (\refprop{overhead-transparency}) and $\beta$-projection (\refthm{exam-beta-projection}) of the \EXAM.\qed
\end{proof}

\gettoappendix{prop:halt}
\begin{proof}
Note that given a job $\cplace\name\tm\stack$ there is a transition for every case of $\tm$, and that a job is removed from the pool without adding other jobs when $\tm$ is a variable not in the environment and $\stack$ is empty. \correction{Moreover, by the choice constraint on the selection operation, if the pool is non-empty a job gets selected.} Therefore, final states have empty pools, that is, they have shape $\examstate\boapr\pool\genv$ with $\supp\pool=\emptyset$. Note that with such pools the read back gives the approximant, since $\decpool{\boapr}{\decenv\pempty\env}= \boapr$. By the bijection invariant (\reflemma{exam_invariant}.3), $\boapr$ has no names, that is, $\fn\boapr=\emptyset$, and so $\boapr$ is a normal form $\nf$ by \reflemma{approx-no-names-normal}. 

If $\state \streq \statetwo$ then $\statetwo = \examstate\boapr\pool\envtwo$ with $\envtwo \approx \env$. Since $\envtwo \approx \env$ implies $\dom\env=\dom\envtwo$, we also have $\decode\statetwo = \boapr = \nf$.
\qed
\end{proof}

\subsection{Leftmost Runs to Leftmost Evaluations}

\gettoappendix{prop:lexam-complex-invariant}
\begin{proof}
By induction on the length $k$ of the execution leading to $\state$. Cases:
\begin{itemize}
\item \emph{Initial state}, that is, $k=0$. Then $\state = \examstate{\boctxhole{\name_1}}{\new{\cplace{\name_1}\tm\stempty}}\eempty$. Then $\ctx_{\mid\lctx\mid}^\state = \lctx$ is a non-applying leftmost context, and it is the only context that needs to be considered. 

\item \emph{Non-initial state}, that is, $k>0$. Then we analyze the last transition $\statetwo\tolexam\state$. Cases:
\begin{itemize}
\item For $\toexamap$, $\toexambeta$, and $\toexamsub$, note that they do not change the approximant, the environment, nor the pool behind the active job, and for the active job they do not change the name, which is the only relevant component for the contexts $\ctx^{\state}_{\nf_1,\mydots,\nf_{i-1}\mid\lctx\mid \tm_{i+1},\mydots,\tm_n}$. So one has $\ctx^{\state}_{\nf_1,\mydots,\nf_{i-1}\mid\lctx\mid \tm_{i+1},\mydots,\tm_n} = \ctx^{\statetwo}_{\nf_1,\mydots,\nf_{i-1}\mid\lctx\mid \tm_{i+1},\mydots,\tm_n}$ for all $i\in\set{1,\mydots,n}$, and the statement follows from the \ih

\item \emph{Abstraction search ($\toexamlam$).}
Then:
    \begin{center}
     $\begin{array}{rlllllll}
	\statetwo
     & = &
	\examstate\boapr{ \cplace{\name_1}{\lam{\var}{\tm}}{\stempty}\pop\pool} \genv
    \\
    & \toexamlam &
\examstate{ \boapr\tsub{\name_1}{\lam{\var}{\boctxhole{\name_1}}} }{ \cplace{\name_1}{\tm}{\stempty}\push\pool } \genv    & = &
	\state.
    \end{array}$
    \end{center}
By \ih, $\ctx^{\statetwo}_{\nf_1,\mydots,\nf_{i-1}\mid\lctx\mid \tm_{i+1},\mydots,\tm_n} = \boapr\tsub{\name_1}{\nf_1}\mydots\tsub{\name_{i-1}}{\nf_{i-1}}\tsub{\name_i}{\lctx} \tsub{\name_{i+1}}{\tm_{i+1}}\mydots\tsub{\name_{n}}{\tm_{n}}$ is a non-applying leftmost context, we have to show that:
\begin{enumerate}
\item $\ctx_{\mid\lctx\mid \tm_{2},\mydots,\tm_n}^{\state} = \boapr\tsub{\name_1}{\la\var\lctx}\tsub{\name_{2}}{\tm_{2}}\mydots\tsub{\name_{n}}{\tm_{n}}$ is a non-applying leftmost context, and that
\item $\ctx_{\nf_1,\mydots, \nf_{i-1}\mid \lctx\mid \tm_{i+1},\mydots,\tm_n}^{\state} = \boapr\tsub{\name_1}{\la\var\ctxholep{\nf_1}}\mydots\tsub{\name_{i-1}}{\nf_{i-1}}\tsub{\name_i}{\lctx} \tsub{\name_{i+1}}{\tm_{i+1}}\mydots\tsub{\name_{n}}{\tm_{n}}$ is a non-applying leftmost context, for $i$ such that $1<i\leq n$.
\end{enumerate}
For point 1, note that $\la\var\lctx$ is a non-applying leftmost context because $\lctx$ is, even if $\lctx$ is empty, and that:
\begin{center}
    $\begin{array}{rcllllllll}
\ctx_{\mid\lctx\mid \tm_{2},\mydots,\tm_n}^{\state} & = & \boapr\tsub{\name_1}{\la\var\lctx}\tsub{\name_{2}}{\tm_{2}}\mydots\tsub{\name_{n}}{\tm_{n}}
        & = &
    \ctx_{\mid \la\var\lctx\mid \tm_{2},\mydots,\tm_n}^{\statetwo}
    \end{array}$
  \end{center}
thus the statement follows from the \ih 

For point 2, note that $\la\var\nf_1$ is normal (because $\nf_1$ is) and that $\ctx_{\nf_1,\mydots, \nf_{i-1}\mid\lctx\mid \tm_{i+1},\mydots,\tm_n}^{\state} = \ctx_{\la\var\nf_1,\mydots, \nf_{i-1}\mid\lctx\mid \tm_{i+1},\mydots,\tm_n}^{\statetwo}$, which is a non-applying leftmost context by the \ih


\item \emph{Variable search ($\toexamvar$).}
  Let $k\geq 0$ and $\name_1,\mydots,\name_k$ be fresh names. Then:
    \begin{center}
     $\begin{array}{rlllllll}
	\statetwo
     & = &
	\examstate\boapr{ \cplace{\name_0}{\var}{\tm_1\cons\mydots\cons\tm_k}\pop\pool } \genv
    \\
    & \toexamvartwo &
      \examstate{ \boapr\tsub{\name_0}{ \var\,\boctxhole{\name_1}\mydots\boctxhole{\name_n} } }{ \cplace{\name_1}{\tm_1}{\stempty}\cons\mydots\cons\cplace{\name_k}{\tm_k}{\stempty} \add\pool } \genv
    & = &
	\state.
    \end{array}$
    \end{center}
    Let $\pool = \place_{\name_{k+1}}\cons\mydots\cons \place_{\name_n}$. 
    By \ih, we have that:
\begin{itemize}
\item $\ctx_{\mid\lctx\mid \tm_{k+1},\mydots,\tm_n}^{\statetwo} = \boapr\tsub{\name_0}{\lctx} \tsub{\name_{k+1}}{\tm_{k+1}}\mydots\tsub{\name_{n}}{\tm_{n}}$ is a non-applying leftmost context, and that
\item $\ctx_{\nf_0,\nf_{k+1},\mydots, \nf_{i-1}\mid\lctx\mid \tm_{i+1},\mydots,\tm_n}^{\statetwo} =  \boapr\tsub{\name_0}{\nf_0}\tsub{\name_{k+1}}{\nf_{k+1}}\mydots\tsub{\name_{i-1}}{\nf_{i-1}}\tsub{\name_i}{\lctx} \tsub{\name_{i+1}}{\tm_{i+1}}\mydots\tsub{\name_{n}}{\tm_{n}}$ is a non-applying leftmost context, for $i$ such that $k+1\leq i\leq n$.
\end{itemize}
    We have to show that:
\begin{enumerate}
\item 
\[\begin{array}{lllll}
&&\ctx_{\nf_1,\mydots, \nf_{i-1}\mid\lctx\mid \tm_{i+1},\mydots,\tm_k, \tm_{k+1},\mydots,\tm_n}^{\state} 
\\
& = &
\boapr\tsub{\name_0}{\var\,\boctxholep{\name_1}{\nf_1}\mydots\boctxholep{\name_{i-1}}{\nf_{i-1}}\boctxholep{\name_i}\lctx \boctxholep{\name_{i+1}}{\tm_{i+1}} \mydots \boctxholep{\name_{k}}{\tm_{k}} } \tsub{\name_{k+1}}{\tm_{k+1}}\mydots\tsub{\name_{n}}{\tm_{n}}
\end{array}\]
 is a non-applying leftmost context, for $i$ such that $1\leq i\leq k$, and that

\item 
\[\begin{array}{lllll}
&&\ctx_{\nf_1,\mydots, \nf_k,\mydots, \nf_{i-1}\mid\lctx\mid \tm_{i+1},\mydots,\tm_n}^{\state}
\\
& = &
\boapr\tsub{\name_0}{\var\,\boctxholep{\name_1}{\nf_1}\mydots\boctxholep{\name_k}{\nf_k}} \tsub{\name_{k+1}}{\nf_{k+1}}\mydots\tsub{\name_{i-1}}{\nf_{i-1}}\tsub{\name_i}{\lctx} \tsub{\name_{i+1}}{\tm_{i+1}}\mydots\tsub{\name_{n}}{\tm_{n}}
\end{array}\]
 is a non-applying leftmost context, for $i$ such that $k+1\leq i\leq n$.
\end{enumerate}
For point 1, note that 
\[\lctxtwo \defeq \var\,\boctxholep{\name_1}{\nf_1}\mydots\boctxholep{\name_{i-1}}{\nf_{i-1}}\boctxholep{\name_i}\lctx \boctxholep{\name_{i+1}}{\tm_{i+1}} \mydots \boctxholep{\name_{k}}{\tm_{k}}\]
is a non-applying leftmost context because $\lctx$ is, even if $\lctx$ is empty, so that 
\[\ctx_{\nf_1,\mydots, \nf_{i-1}\mid\lctx\mid \tm_{i+1},\mydots,\tm_k, \tm_{k+1},\mydots,\tm_n}^{\state} = \ctx_{\lctxtwo\mid \tm_{k+1},\mydots,\tm_n}^{\statetwo}.\]
Thus, the statement follows from the \ih

For point 2, note that 
\[\nf \defeq \var\,\boctxholep{\name_1}{\nf_1}\mydots\boctxholep{\name_k}{\nf_k}\]
is a normal form (because $\nf_j$ is, for $1\leq j \leq k$), so that 
\[\ctx_{\nf_1,\mydots, \nf_k,\mydots, \nf_{i-1}\mid\lctx\mid \tm_{i+1},\mydots,\tm_n}^{\state} = \ctx_{\nf, \nf_{k+1},\mydots, \nf_{i-1}\mid\lctx\mid \tm_{i+1},\mydots,\tm_n}^{\statetwo}.\]
Thus, the statement follows from the \ih
\qed
  \end{itemize}
    \end{itemize}
\end{proof}

\gettoappendix{l:leftmost-context-decoding}
\begin{proof}
Note that:
\begin{center}
$\begin{array}{rcllllll}
\decode\state_\bullet \tsub{\name_1}{\ctxhole}
     & = &
	\left(\boapr\tsub{\name_{2}}{\decenv{\job_{2}}\genv}\mydots\tsub{\name_{n}}{\decenv{\job_{n}}\genv}
	 \right)\tsub{\name_1}{\ctxhole}
    \\
    & = &
      \boapr\tsub{\name_1}{\ctxhole}\tsub{\name_{2}}{\decenv{\job_{2}}\genv}\mydots\tsub{\name_{n}}{\decenv{\job_{n}}\genv}
      &=& \ctx^\state_{\mid\ctxhole\mid \decenv{\job_{2}}\genv,\mydots, \decenv{\job_{n}}\genv}.
    \end{array}$
\end{center}
Where $\ctx^\state_{\mid\ctxhole\mid \decenv{\job_{2}}\genv,\mydots, \decenv{\job_{n}}\genv}$ is the notation defined in the statement of \refprop{lexam-complex-invariant}. \correction{Note that $\decenv{\job_{2}}\genv, \ldots \decenv{\job_{n}}\genv$ are terms, as they are defined as the decoding of MAM states, which are always terms. Therefore, we can apply  \refprop{lexam-complex-invariant}, obtaining that} $\ctx^\state_{\mid\ctxhole\mid \decenv{\job_{2}}\genv,\mydots, \decenv{\job_{n}}\genv}$ is a non-applying leftmost context.\qed
\end{proof}

\gettoappendix{prop:leftmost-beta-proj}
\begin{proof}
The proof is exactly the same as the one for $\beta$-projection of the \EXAM (\refthm{exam-beta-projection}) except that the use of the external context read-back lemma (\reflemma{almost_pure_decoding}) is replaced by the leftmost context read-back lemma above (\reflemma{leftmost-context-decoding}).\qed
\end{proof}

\gettoappendix{coro:lexam-correctness}
\begin{proof}
Simply apply the theorem about sufficient conditions for implementations (\refthm{abs-impl}) using overhead transparency (\refprop{overhead-transparency}), which holds for any pool template, and leftmost $\beta$-projection above (\refprop{leftmost-beta-proj}).\qed
\end{proof}

\section{Evaluations to Runs}

\gettoappendix{prop:overhead-termination}
\begin{proof}
Note that if $\state$ is $\tomacho$-normal then all the jobs in (the support of) its pool (if any) have shape $\cplace\_{\la\var\tm}{\tmtwo\cons\stack}$, so that $\omeas\state=0$. Moreover, by definition $\omeas\state \geq 0$, and $\omeas\state$ is finite because by the local scope invariant (\reflemma{exam_invariant}) the environment is acyclic. Then, the statement follows from the fact that $\omeas\state$ decreases of exactly 1 for every $\tomacho$ transition, as we now prove. Cases:
\begin{enumerate}
\item \emph{Application search ($\toexamap$).}
  \[
    \begin{array}{rcl}
      \omeas{\examstate{
        \boapr 
      }{
        \cplace{\name}{\tm\,\tmtwo}{\stack}\pop\pool
      }{
        \genv
      }}
    & = &
      \omeas{\cplace{\name}{\tm\,\tmtwo}{\stack},\genv} + \omeas{\examstate\boapr\pool\genv}
    \\
    & = &
      1+\omeas{\cplace{\name}{\tm}{\tmtwo\cons\stack},\genv} + \omeas{\examstate\boapr\pool\genv}
    \\
    & = &
     1+ \omeas{\examstate{
        \boapr
      }{
        \cplace{\name}{\tm}{\tmtwo\cons\stack}\push\pool
      }{
        \genv
      }}
    \end{array}
  \]
\item \emph{Abstraction search ($\toexamlam$).}
  Let $\nametwo$ be a fresh name. Then:
  \[
    \begin{array}{rcll}
      \omeas{
        \examstate{
          \boapr
        }{
          \cplace{\name}{\lam{\var}{\tm}}{\stempty}\pop\pool
        }{
          \genv
        }
      }
     & = &
      \omeas{\cplace{\name}{\la\var\tm}{\estack},\genv} + \omeas{\examstate\boapr\pool\genv}
    \\
    & = &
      1+\omeas{\cplace{\name}{\tm}{\estack},\genv} + \omeas{\examstate\boapr\pool\genv}
    \\
    & = &
      1+ \omeas{\cplace{\name}{\tm}{\estack},\genv} + \omeas{\examstate{\boapr\tsub{\name}{\lam{\var}{\boctxhole\name}}} \pool\genv}
    \\
    & = &
	1+
      \omeas{
        \examstate{
          \boapr\tsub{\name}{\lam{\var}{\boctxhole\name}}
        }{
          \cplace{\name}{\tm}{\stempty}\push\pool
        }{
          \genv
        }
      }
    \end{array}
  \]

\item \emph{Substitution ($\toexamsub$).} If $\var\in\dom\genv$ then:
  \[
    \begin{array}{rcll}
      
      \omeas{
        \examstate{
          \boapr
        }{
          \cplace{\name}{\var}{\stack}\pop\pool
        }{
          \genv
        }
      }
    
      & = &
      \omeas{\cplace{\name}{\var}{\stack},\genv} + \omeas{\examstate\boapr\pool\genv}
          \\
      & = &
      1+\omeas{\cplace{\name}{\rename{\genv(\var)}}{\stack},\genv} + \omeas{\examstate\boapr\pool\genv}
          \\      
            & = &
      1+ \omeas{
        \examstate{
          \boapr
        }{
          \cplace{\name}{\rename{\genv(\var)}}{\stack}\push\pool
        }{
          \genv
        }
      }
    \end{array}
  \]
\item \emph{Variable search ($\toexamvar$).} Let $\var\notin\dom\genv$ and $\nametwo_1,\mydots,\nametwo_n$ are fresh names. Then:
  \[
    \begin{array}{rcll}
    &&
      \omeas{
        \examstate{
          \boapr
        }{
          \cplace{\name}{\var}{\tm_1\cons\mydots\cons\tm_n}\pop\pool
        }{
          \genv
        }
      }
    \\
    & = &
          \omeas{\cplace{\name}{\var}{\tm_1\cons\mydots\cons\tm_n},\genv} + \omeas{\examstate\boapr\pool\genv}
    \\
    & = &
          1+\sum_{i=1}^n\omeas{\cplace{\nametwo_1}{\tm_i}{\estack},\genv} + \omeas{\examstate\boapr\pool\genv}
	\\
	& = &
          1+ \sum_{i=1}^n\omeas{\cplace{\nametwo_1}{\tm_i}{\estack},\genv} + \omeas{\examstate{\boapr\tsub{\name}{\var\,\ctxhole_{\nametwo_1}\mydots\ctxhole_{\nametwo_n}}}\pool\genv}
    \\
    & = &
      1+ \omeas{
        \examstate{
          \boapr\tsub{\name}{\var\,\ctxhole_{\nametwo_1}\mydots\ctxhole_{\nametwo_n}}
        }{
          \cplace{\nametwo_1}{\tm_1}{\stempty}\cons\mydots\cons\cplace{\nametwo_n}{\tm_n}{\stempty}\add\pool
        }{
          \genv
        }
      }
    \end{array}
  \]
\end{enumerate}
\qed
\end{proof}

\gettoappendix{l:addresses-unfolding}
\begin{proof}
Simply note that the definition of $\decode\state$ is an extension of $\boapr$ by plugging terms in its holes, so all constructors of $\boapr$ but holes are in $\decode\state$ at the same address.\qed
\end{proof}

\gettoappendix{prop:beta-reflection}
\begin{proof}
Let $\state = \examstate\boapr\pool\genv$. In a $\tomacho$-normal reachable state all jobs in $\pool$ have shape $\cplace\name{\la\var\tm}{\tmtwo\cons\stack}$ for some $\name$, $\tm$, $\tmtwo$, and $\stack$ specific to the job. Note that $\pool$ cannot be empty, otherwise the state is final and by the halt property (\refprop{halt}) $\decode\state$ would be normal, against hypothesis. If $\decode\state \toext \tmtwo$ then there is an external context $\boctx$ such that $\decode\state=\boctxp{(\la\var\tm) \tm'}$,  $\tmtwo = \boctxp{\tm\isub\var{\tm'}}$. Let $\addr$ be the address of the abstraction $\la\var\tm$ taking part to the $\beta$-redex in $\decode\state$. Note that $\addr$ cannot be a defined address in $\boapr$, because by \reflemma{addresses-unfolding} all addresses in $\boapr$ are addresses in $\decode\state$ locating the same constructor, but definition of approximant there are no applied abstractions in $\boapr$. Then, $\addr$ extends the address of a hole $\ctxhole_\name$ in $\boapr$. Let $\addr=\addr_\alpha\cdot \addrtwo$ where $\addr_\alpha$ is the address of $\ctxhole_\name$ in $\boapr$. By the bijection invariant (\reflemma{exam_invariant}.3),  there is a job $\cplace\name{\la\vartwo\tmthree}{\tmfour\cons\stack}$ of name $\name$ in $\pool$. The suffix address $\addrtwo$ then is an address in the read-back of such a job, that is, in $\decst{(\la\vartwo\decenv\tmthree\env)}{\decenv\tmfour\env\cons\decenv\stack\env}$. It is easily seen that $\addrtwo$ is exactly the address of $\la\vartwo\decenv\tmthree\genv$ in $\decst{(\la\vartwo\decenv\tmthree\env)}{\decenv\tmfour\env\cons\decenv\stack\env}$, that is, that $\var=\vartwo$, $\tm = \decenv\tmthree\genv$ and $\tm' = \decenv\tmfour\genv$, because otherwise if $\addrtwo$ is the address of a sub-term in $\decenv\tmthree\genv$ or in the read-back stack then the surrounding context (both in $\decst{(\la\vartwo\decenv\tmthree\env)}{\decenv\tmfour\env\cons\decenv\stack\env}$, and in \decode\state) is not external, against the hypothesis. Then we have:
\[\examstate\boapr{\cplace\name{\la\var\tmthree}{\tmfour\cons\stack}\pop\pool}\genv \ \ \toexambeta\ \ \examstate\boapr{\cplace\name{\tmthree}{\stack}\push\pool}{\esub\var\tmfour\cons \genv}\]
and one shows that the read-back of the target state is the reduct term by reasoning as for the $\beta$-projection property.\qed
\end{proof}

\gettoappendix{coro:exam-completeness}
\begin{proof}
Simply apply the theorem about sufficient conditions for implementations (\refthm{abs-impl}) using overhead termination (\refprop{overhead-termination}) and $\beta$-reflection above (\refprop{beta-reflection}).\qed

\end{proof}

\end{document}